\newcommand{\ignore}[1]{}
\newcommand\Amat{A}
\newcommand\Amatcomp{B}
 \newtheorem*{aLLL}{Algorithmic LLL}
\newtheorem{theorem}{Theorem}[section]
\newtheorem{lemma}[theorem]{Lemma}
\newtheorem{corollary}[theorem]{Corollary}
\newtheorem{definition}[theorem]{Definition} 
\newtheorem{remark}{Remark}[section]
\newtheorem{proposition}{Proposition}[section]
\newtheorem*{gLLL}{General LLL}
\newcommand\ex{{\mathbb{E}}}
\newcommand{\param}{(1 + \delta)} 
\newcommand{\pot}{\Psi}
\newcommand{\beq}{\begin{equation}}
\newcommand{\eeq}{\end{equation}}
\newcommand{\primary}{primary}
\begin{document}

\title{
Beyond the Lov\'{a}sz Local Lemma: Point to Set Correlations and Their Algorithmic Applications
\thanks{A preliminary version of this paper appeared in the {\it 60th Annual IEEE Symposium 
on Foundations of Computer Science}.}
}

\author{
Dimitris Achlioptas
\thanks{
Research supported by NSF grant CCF-1514434} 
\\ University of California Santa Cruz
\\
{\small {optas@cs.ucsc.edu}}
\\
\and 
Fotis Iliopoulos
\thanks{Research supported by NSF grant CCF-1514434 and the Onassis Foundation} \\ 
University of California Berkeley \\
{\small {fotis.iliopoulos@berkeley.edu}}
 \\
 \and 
Alistair Sinclair
\thanks{Research supported by NSF grants CCF-1514434 and CCF-1815328} \\ 
University of California Berkeley \\
{\small {sinclair@berkeley.edu}}
}

\date{\empty}

\maketitle

\begin{abstract}
Following the groundbreaking algorithm of Moser and Tardos for the Lov\'{a}sz Local Lemma (LLL), there has been a plethora of results analyzing local search algorithms for various constraint satisfaction problems. The algorithms considered fall into two broad categories: \emph{resampling} algorithms, analyzed via different algorithmic LLL conditions; and \emph{backtracking} algorithms, analyzed via  entropy compression arguments.  This paper introduces a new convergence condition that seamlessly handles resampling, backtracking, and \emph{hybrid\/} algorithms, i.e., algorithms that perform both resampling and backtracking steps. Unlike previous work on the LLL, our condition 
replaces the notion of a dependency or causality graph by quantifying \emph{point-to-set correlations\/}
between bad events.
As a result, our condition simultaneously: (i)~captures the most general algorithmic LLL condition known as a special case; (ii)~significantly simplifies the analysis of entropy compression applications; (iii)~relates backtracking algorithms, which are conceptually very different from resampling algorithms, to the LLL; and most importantly (iv)~allows for the analysis of hybrid algorithms, which were outside the scope of previous techniques. We give several applications of our condition, including a new hybrid vertex coloring algorithm that extends the recent breakthrough result of Molloy for coloring triangle-free graphs to arbitrary graphs.

From a technical perspective, our  main new insight is that LLL-inspired
convergence arguments for local search algorithms can be viewed as methods for bounding the {\it spectral radius\/} of an associated  transition matrix.

\end{abstract}

\thispagestyle{empty}

\newpage
\setcounter{page}{1}\maketitle

\newpage

\section{Introduction}

Numerous problems in computer science and combinatorics can be formulated as searching for objects that lack certain bad properties, or ``flaws". For example,  constraint satisfaction problems such as satisfiability and graph coloring can be seen as searching for objects (truth assignments, colorings) that are {\it flawless}, in the sense that they do not violate any constraint.

The \emph{Lov\'{a}sz Local Lemma (LLL)}~\cite{LLL} is a powerful tool for proving the \emph{existence} of  flawless objects that has had far-reaching consequences in computer science and combinatorics~\cite{ASbook, mike_book}. Roughly speaking, it asserts that, given a collection of a bad events  in a probability space, if all of them are individually not too likely, and independent of most other bad events, then the probability that none of them occurs is strictly positive; hence a flawless object (i.e., an elementary event that does not belong in any bad event)  exists. For example, the LLL implies that every $k$-CNF formula in which each clause shares variables with fewer than $2^k/ \mathrm{e}$ other clauses is satisfiable. Remarkably, this is tight~\cite{2ke}.

In seminal work, Moser~\cite{moser} and Moser-Tardos~\cite{MT} showed that a simple local search algorithm can be used to make the LLL constructive for product probability spaces, i.e., spaces where elementary events correspond to sequences of independent coin-flips.  For example, the Moser-Tardos algorithm for satisfiability starts at a uniformly random truth assignment and, as long as violated clauses exist, selects any such clause  and resamples the values of all its variables uniformly at random. Following this work, a large amount of effort has been devoted to making different variants  of the LLL constructive~\cite{determ,CLLL,szege_meet,PegdenIndepen}, and to analyzing sophisticated \emph{resampling} algorithms that extend the Moser-Tardos techniques to non-product probability spaces~\cite{AIJACM,  AIK, StochControl,PermHarris,HV, molloy2017list,IS}. Indeed, intimate connections have been established between resampling algorithms and the so-called  ``lopsided" versions of the  LLL: see~\cite{AIK,HV,IS} for more details.

In earlier groundbreaking work aimed at making the LLL algorithmic for $k$-SAT, Moser~\cite{moser} introduced the \emph{entropy compression} method. This method has since been used to analyze \emph{backtracking} algorithms, e.g.,~\cite{dujmovic2016nonrepetitive,acyclic,gkagol2016pathwidth,EntropyGraphColorings,grytczuk2013new,PatternAvoidance}.
These natural and potentially powerful local search algorithms have a very different flavor from resampling algorithms: instead of maintaining a complete value assignment (object) and repeatedly modifying it until it becomes satisfying, they operate on \emph{partial non-violating} assignments, starting with the empty assignment, and try to extend to a complete, satisfying assignment.  To do this, at each step they assign a (random) value to a currently unassigned variable; if this leads to the violation of one or more constraints, they backtrack to a partial non-violating assignment by unassigning some set of variables (typically including the last assigned variable).  

While there have been efforts to treat certain classes of backtracking algorithms 
systematically~\cite{acyclic,EntropyGraphColorings}, the analysis of such
algorithms in general still requires {\it ad hoc\/} technical machinery.
Moreover, there is no known connection between backtracking algorithms and any known LLL condition, either existential or algorithmic. The main reason for this is that backtracking steps induce non-trivial correlations among bad events, which typically result in very dense dependency graphs that are not amenable to currently known LLL conditions.

The main contribution of this paper is to introduce a new technique for analyzing \emph{hybrid\/} algorithms, i.e., algorithms that (potentially) use both resampling and backtracking steps. Such algorithms combine the advantages of both approaches by using resampling to explore the state space, while detecting and backing away from unfavorable regions using backtracking steps. To analyze these algorithms, we prove a new algorithmic LLL condition which, unlike previous versions,
replaces the notion of a dependency or causality graph by quantifying \emph{point-to-set correlations\/}
between bad events.
Notably, our new condition captures the most general algorithmic LLL condition known so far and, moreover, unifies the analysis of all entropy compression applications, connecting backtracking algorithms to the LLL in the same fashion that existing analyses connect resampling algorithms to the LLL. 

Our main  technical insight is that essentially all existing LLL-inspired
convergence arguments for local search algorithms can be viewed as methods for bounding the {\it spectral radius\/} of an associated transition matrix by  decomposing it into a sum of sparse matrices, one for each flaw. Our new condition hinges on a more refined decomposition that allows us to take advantage of higher-dimensional information.

\paragraph{A new coloring algorithm.}
Our main application is a new vertex coloring algorithm inspired by the recent breakthrough result of Molloy~\cite{molloy2017list}, who proved that any triangle-free graph of maximum degree~$\Delta$ can be colored using $(1+o(1)) \Delta / \ln \Delta $ colors; this improves the celebrated result of Johannson~\cite{JO} while, at the same time, dramatically simplifying its analysis. We generalize Molloy's result by establishing a bound on the chromatic number of \emph{arbitrary} graphs, as a function of the maximum number of triangles in the neighborhood of a vertex, and giving an algorithm that produces such a 
coloring. 

 \begin{theorem}[Informal Statement]\label{color_inform}
 Let $G$ be any graph with maximum degree $\Delta$ in which the neighbors of every vertex span at   most  $T \geq 0$ edges between them. For every $\epsilon > 0$, if $\Delta \ge \Delta_{\epsilon}$ and $T \lesssim \Delta^{2 \epsilon} $ then
 \begin{align}\label{color_inform_bound}
 \chi(G) \le (1 + \epsilon )  \frac{\Delta}{\ln \Delta - \frac{1}{2}\ln(T+1)} ,
 \end{align}
and such a vertex coloring can be found efficiently. (Here $\lesssim$ hides logarithmic factors.) Moreover,  the theorem holds for any $T \ge 0 $ if the leading constant $(1+ \epsilon)$ is replaced by $(2+\epsilon)$.
 \end{theorem}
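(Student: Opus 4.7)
The plan is to prove Theorem~\ref{color_inform} by designing a hybrid coloring algorithm whose termination follows from the new point-to-set algorithmic LLL condition developed in the paper. This mirrors how Molloy handled the triangle-free case ($T=0$) via entropy compression, but the generalization to nonzero $T$ benefits crucially from the hybrid framework: the triangle savings are most naturally exposed by a resampling step, while the ``loss of palette'' events are most naturally handled by backtracking.

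First I would set up a list coloring framework, assigning to each vertex $v$ a palette $L(v) \subseteq [C]$ with $C = \lceil (1+\epsilon)\Delta /(\ln\Delta - \tfrac{1}{2}\ln(T+1)) \rceil$. The hybrid algorithm operates in rounds. In each round every still-uncolored vertex draws a tentative color uniformly from its current palette (the resampling step); a tentative color is retained, i.e.\ the vertex is permanently colored, if no neighbor of $v$ attempts the same color, and that color is then removed from the palettes of the remaining uncolored neighbors of $v$. The backtracking step is invoked whenever the palette of an uncolored vertex shrinks below an ``induction-safe'' threshold (roughly, when $|L(v)|$ drops below $(1+\epsilon/2)$ times the uncolored degree of $v$ divided by $\ln\Delta - \tfrac{1}{2}\ln(T+1)$): at that point a carefully chosen set of recently colored vertices near $v$ is uncolored and their colors are returned to the appropriate palettes.

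The main analytical task is to bound the probability that the algorithm ever visits a state in which a vertex has a palette that is too small. Define flaws $f_v$ accordingly and apply the point-to-set algorithmic LLL condition, which requires controlling, for each flaw $f$ and each state $\sigma$, the conditional probability that the algorithm reaches $\sigma$ and then finds itself inside $f$. The per-round expected shrinkage of $L(v)$ splits into two regimes. In a triangle-free neighborhood, any two distinct neighbors of $v$ that attempt the same color effectively ``cancel'' each other, yielding the classical $1/\mathrm{e}$-style savings factor that underlies Molloy's bound. When the neighborhood of $v$ spans up to $T$ edges, each such edge correlates the color choices of a pair of neighbors and forces further cancellations that in expectation contribute an additional $\Theta(\sqrt{T+1})$ savings factor; this is precisely where the $\tfrac{1}{2}\ln(T+1)$ term arises. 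The hard part will be to turn this heuristic into a rigorous point-to-set bound, because the correlations introduced by the backtracking step are much more intricate than those handled by resampling-only or backtracking-only analyses in the literature. The point-to-set condition in the paper is tailored to absorb exactly these correlations, so the bulk of the work is in verifying that the resulting ``charge'' on each flaw fits within the allowed LLL budget, and in choosing the backtracking set small enough that this budget is not exhausted.

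Finally, the weaker $(2+\epsilon)$ statement that holds for every $T \ge 0$ can be obtained by the same algorithm with a looser accounting: when $T$ grows beyond $\Delta^{2\epsilon}$ the triangle-savings bound ceases to dominate, so we replace it with a uniform estimate that sacrifices roughly a factor of $2$ in the palette size but still permits the point-to-set calculation to close. This regime is conceptually simpler because the backtracking component is invoked more aggressively and the relevant bounds can be derived along the lines of a standard Moser--Tardos witness-tree argument, adapted to the hybrid setting.
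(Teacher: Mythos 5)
There are two genuine gaps here, one in the mechanism you ascribe to the $\tfrac12\ln(T+1)$ term and one in your route to the $(2+\epsilon)$ statement. First, you describe the term $\tfrac12\ln(T+1)$ as arising from extra ``savings'' produced by edges inside a neighborhood (``each such edge \ldots{} forces further cancellations that in expectation contribute an additional $\Theta(\sqrt{T+1})$ savings factor''). This has the direction of the effect backwards: in the bound~\eqref{color_inform_bound} the term is subtracted in the denominator, so triangles make the guarantee \emph{worse}, as they must (a neighborhood close to a clique cannot help). In the paper's proof of Theorem~\ref{improvement_mike}, the $\ln\sqrt{f}$ (equivalently $\ln\Delta-\tfrac12\ln(T+1)$) loss comes from the fact that recoloring the neighborhood of $v$ can create monochromatic edges only inside $E_v$, each of which triggers an uncoloring (a primary, backtracking flaw $f_u^e$) whose charge is about $1/L$; the condition of Theorem~\ref{soft} then forces the palette threshold $L$ to beat the up to $\Delta^2/f=T$ such edges, and it is this constraint (the term $\tfrac{2\psi\Delta^2}{fL}$ in the verification) that dictates the number of colors. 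An analysis built on your ``triangle cancellations improve the palette'' heuristic would predict a bound that improves with $T$, so the point-to-set charge computation you defer to the end would not close; the correlations you need to control are the uncolorings caused by intra-neighborhood edges, not beneficial cancellations.

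Second, the claim that the $(2+\epsilon)$ bound for arbitrary $T\ge 0$ follows from ``the same algorithm with a looser accounting'' does not work. Any direct charge computation of this kind requires the backtracking flaws (one per edge of $E_v$, each of weight roughly $1/L$ with $L\le q$) to fit in the LLL budget, which forces $T\lesssim \Delta^{c}$ for a small constant $c>0$ no matter what constant factor you allow in the palette size; losing a factor of $2$ does not get you anywhere near $T=\Delta^{1.5}$, say. The paper instead proves Theorem~\ref{sparse_graphs} by a decomposition in the spirit of Alon--Krivelevich--Sudakov: a random partition of $V$ into $\Delta^{1-\theta}$ classes (justified by the symmetric LLL, hence constructive via Moser--Tardos), together with repeated halving steps, reduces to induced subgraphs in which every neighborhood spans few edges, and each piece is then colored with a \emph{disjoint} set of colors using Theorem~\ref{improvement_mike}; the factor $2+\epsilon$ is what survives this recombination. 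So the second half of Theorem~\ref{color_inform} needs a genuinely different, graph-partitioning idea that your proposal omits. (Your first-phase algorithm is also structurally different from the paper's --- a synchronous nibble with threshold-triggered backtracking rather than Molloy-style partial proper colorings with Blank, flaws $B_v,Z_v,f_v^e$ and the {\sc Recolor} step --- which is not by itself disqualifying, but since you leave the point-to-set verification unexecuted, the proposal as written is a plan whose central quantitative step rests on the incorrect heuristic above.)
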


\par\noindent 
Importantly, as explained in Section~\ref{subsec:results:coloring},
the bound~\eqref{color_inform_bound} \emph{matches} the algorithmic barrier for random graphs~\cite{mitsaras_barriers}. This implies that any improvement on the guarantee of our algorithm for $T \lesssim \Delta^{2\epsilon}$ would amount to an unexpected breakthrough in random graph theory. (Random graphs are  only informative in the regime $ T  \lesssim \Delta^{2 \epsilon}$.) For arbitrary graphs our bound is within a factor of 4 of the chromatic number, improving upon a classical result of Alon, Krivelevich and Sudakov~\cite{alon1999coloring} who showed~\eqref{color_inform_bound} with an unspecified (large) leading constant.

At the heart of Theorem~\ref{color_inform} is a hybrid local search algorithm which we analyze using the techniques introduced in this paper.  Molloy's result (and resampling algorithms based on it) breaks
down immediately in the presence of triangles; the key to our algorithm is to allow backtracking steps in order to avoid undesirable portions of the search space. We discuss our coloring result and its optimality further in Section~\ref{subsec:results:coloring}.


\paragraph{Applications to backtracking algorithms.}

Besides our coloring application, we give three representative applications of our techniques applied to
pure backtracking algorithms. 

Recently, Bissacot and Doin~\cite{bissacot2017entropy} showed that backtracking algorithms can make LLL applications in the variable setting constructive, using the entropy compression method. However, their result applies only to the uniform measure and their algorithms are relatively complicated. 
Our new algorithmic condition makes applications of the LLL in the variable setting~\cite{MT}, with any measure, constructive via a single, simple backtracking algorithm, i.e., an algorithm of very different flavor from the Moser-Tardos resampling algorithm. For example, in the case of $k$-SAT the algorithm takes the following form:

\begin{algorithm}[H]
\caption*{{\bf Randomized DPLL with single-clause backtracking }}
\begin{algorithmic}
\While {unassigned variables exist } 
\State Assign to the lowest indexed unassigned variable $x$ a value $v \in \{0,1 \}$ with probability $p_x^v$ 
\If {one or more clauses become violated}
\State Unassign all $k$ variables in the lowest indexed violated clause 
\EndIf
\EndWhile
\end{algorithmic}
\end{algorithm}

We then show that applying our condition to the algorithm of Esperet and Parreau~\cite{acyclic} for \emph{acyclic edge coloring} recovers, in a simple, black-box fashion, the same bound of $4 \Delta$  as their highly non-trivial, problem-specific analysis via entropy compression, while guaranteeing an improved running time bound.

Finally, we make constructive in an effortless manner an existential result of Bernshteyn~\cite{bernshteyn2016new} showing improved bounds for the acyclic chromatic index of graphs that do not contain an arbitrary bipartite graph~$H$. 

We present our results on backtracking algorithms in Section~\ref{SoftcoreLLL}.

\subsection{A new algorithmic LLL condition}\label{sec:informal}


To state our new algorithmic LLL condition, we need some standard terminology. Let $\Omega$ be a finite set and let $F = \{f_1, f_2, \ldots, f_m \}$ be a collection of subsets of $\Omega$, each of which will be referred to as a ``flaw."  Let $\bigcup_{i \in [m]} f_i = \Omega^*$.    For example, for a given CNF formula on $n$ variables with clauses $c_1,\ldots,c_m$, we take $\Omega=\{0,1\}^n$ to be the set of all possible variable assignments, and $f_i$ the set of assignments that fail to satisfy clause~$c_i$. Our goal is to find an assignment in $\Omega\setminus\Omega^*$, i.e., a satisfying (``flawless") assignment. Since we will be interested in algorithms which traverse the set $\Omega$ we will also refer to its elements as {\it states}.


We consider algorithms which, in each flawed state~$\sigma\in\Omega^*$, choose a flaw~$f_i$ present in $\sigma$, i.e., $f_i \ni \sigma$, and attempt to leave (or ``fix'') $f_i$ by moving to a new state $\tau$ 
selected with probability $\rho_i(\sigma,\tau)$; we refer to such an attempt as {\it addressing}~$f_i$.  
We make minimal assumptions about how the algorithm choses which flaw to address at each step; e.g., it will be enough for the algorithm to choose the flaw with lowest index according to some fixed permutation.  
We say that a transition $\sigma \to \tau$, made to address flaw $f_i$, \emph{introduces} flaw $f_j $ if $\tau  \in f_j$, and either $\sigma \notin f_j $ or $j=i$.

For an arbitrary state $\tau \in \Omega$, flaw $f_i$, and set of flaws $S$, let 
\begin{equation}\label{eq:lalalakis}
\mathrm{In}_i^S(\tau)  :=  \{ \sigma \in f_i : \text{the set of flaws introduced by the transition $\sigma \to \tau$ includes $S$} \}  .
\end{equation}

For any fixed probability distribution $\mu>0$ on~$\Omega$ (either inherited from an application of the probabilistic method, as in the classical LLL, or introduced by the algorithm designer), we define the \emph{charge}{} of the pair $(i,S)$ with respect to $\mu$ to be
\begin{align}\label{charge_def}
\gamma_i^S := \max_{\tau \in \Omega} \left\{\frac{1}{\mu(\tau)}\sum_{ \sigma \in \mathrm{In}_i^S(\tau)} \mu(\sigma) \rho_i(\sigma,\tau )\right\} .
\end{align} 
That is, the charge $\gamma_i^S$ is an upper bound on the ratio between the ergodic flow into a state via transitions that introduce every flaw in $S$ (and perhaps more), and the probability of the state under~$\mu$. 

Our condition may now be stated informally as follows:
\begin{theorem}~\label{thm:softprelim}
If there exist positive real numbers $\{\psi_i\}_{i=1}^m$ such that, for all $i \in [m] $,
\begin{align}\label{eq:softprelim}
   \frac{ 1 }{ \psi_i }   \sum_{  \substack{ S \subseteq [m]  } }  \gamma_i^S\prod_{ j \in S} \psi_j    \enspace  < 1  , 
\end{align}
then a local search algorithm as above reaches a flawless object quickly with high probability.
\end{theorem}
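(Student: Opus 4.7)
The plan is to bound, for each $T\geq 1$, the probability that the algorithm performs at least~$T$ steps, and to show that this probability decays geometrically in~$T$ under hypothesis~\eqref{eq:softprelim}. I will follow the Moser--Tardos-style algorithmic-LLL paradigm (as developed for general state spaces in prior work such as~\cite{AIK,IS}), but with two key modifications dictated by the point-to-set nature of the charges: the \emph{witnesses} I record will track, at each step, the entire \emph{set} of flaws introduced rather than a single causal predecessor; and the reweighting from the algorithm's transition probabilities to the measure~$\mu$ will be carried out via the charges~$\gamma_i^S$ rather than per-flaw probabilities.

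Concretely, for each trajectory $\sigma_0\to\sigma_1\to\cdots\to\sigma_T$ with addressed flaws $(i_1,\ldots,i_T)$, I would build a labeled forest whose nodes correspond to the steps of the algorithm. Each node~$v$ carries a label $(i_v,S_v)$, where $f_{i_v}$ is the flaw addressed at that step and $S_v$ is a set containing all flaws introduced by the transition at $v$; the children of~$v$ are the later nodes where the flaws of~$S_v$ are subsequently addressed. Summing the measure-weighted probabilities $\mu(\sigma_0)\prod_t\rho_{i_t}(\sigma_{t-1},\sigma_t)$ over all trajectories realizing a given witness~$W$, the $\max_\tau$ in definition~\eqref{charge_def} of $\gamma_i^S$ is exactly what permits one to telescope through the steps in reverse chronological order, peeling off a factor $\gamma_{i_v}^{S_v}$ at each node and yielding a bound of the form $\prod_{v\in W}\gamma_{i_v}^{S_v}$ on the total contribution of~$W$.

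It then remains to sum over all witnesses of size~$T$ by a subcritical-branching argument. Under~\eqref{eq:softprelim}, the quantities $\gamma_i^S\prod_{j\in S}\psi_j/\psi_i$ form a subprobability distribution over child-sets~$S$ at any node labeled~$i$, so the generating function for witnesses rooted at~$i$ converges and standard generating-function bookkeeping yields $\sum_{|W|=T}\prod_v \gamma_{i_v}^{S_v}\leq C\alpha^T$ for some $\alpha<1$; combining with the per-witness bound and a union bound over the at most $m$ possible roots gives the desired geometric tail.

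The main obstacle will be Step~1, the witness construction. In classical Moser--Tardos the parent--child structure is dictated by a fixed dependency graph, whereas here it is determined \emph{dynamically} by which flaws the transitions happen to introduce. One must choose the recorded sets~$S_v$ canonically, using the ``and perhaps more'' slack in~\eqref{eq:lalalakis}, so that (a) the map from trajectories to labeled forests is injective, (b) the introduction record at each node is consistent with the later addresses seen at its children, and (c) the combinatorial shape of the forest is compatible with the branching bookkeeping used in Step~3. Matching the charge side (which only sees the incoming flow into a single state) with the counting side (which enumerates forests globally) cleanly is the delicate combinatorial heart of the argument; once that is in place, Steps~2 and~3 follow the well-established charge-and-count template.
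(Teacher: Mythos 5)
Your high-level architecture mirrors the paper's actual proof quite closely: a record of the sets of flaws introduced at each step, a per-witness bound obtained by peeling off charges in reverse chronological order (the $\max_\tau$ in~\eqref{charge_def} is indeed what makes this work; the paper phrases it as $\gamma_i^S=\|MA_i^SM^{-1}\|_1$ with $M=\mathrm{diag}(\mu)$), and a counting step in which $\gamma_i^S\prod_{j\in S}\psi_j/(\zeta_i\psi_i)$ is treated as a subprobability distribution over the next recorded set. However, the step you defer as ``the delicate combinatorial heart'' is precisely the substantive content of the paper's proof, and leaving it open is a genuine gap rather than a routine detail. The difficulty is exactly the collateral-fix phenomenon the paper emphasizes: a flaw introduced at step $v$ may later disappear because some \emph{other} flaw is addressed, so it is never addressed itself. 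If you keep such flaws in the recorded set $S_v$, then the record no longer determines the sequence of addressed flaws: the fixed flaw-choice rule (lowest index under $\pi$) applied to the reconstructed flaw sets would at some point dictate addressing a flaw that is in fact absent, so injectivity/consistency of the trajectory-to-witness map fails and the telescoping of the $\psi$-weights (needed to turn $\prod_v\gamma_{i_v}^{S_v}$ into a subprobability) breaks. If instead you drop them, you must prove that the charges, defined with ``includes'' in~\eqref{eq:lalalakis}, still upper bound the relevant sums of transition probabilities for the pruned record --- this is the whole reason for the ``includes'' slack, and it needs an argument. The paper resolves this by defining the witness as the sequence $(B_i\setminus C_i)$, where $C_i$ is the set of flaws introduced at step $i$ that are later eradicated collaterally, and then proving a reconstruction lemma (its Lemma~3.4): the pruned sequence is plausible, determines $\pi(\sigma_i)$ at every step, and satisfies a multiset-balance identity between flaws recorded as introduced and flaws addressed; that identity is exactly what makes the generating-function/subprobability bookkeeping close.

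Two further concrete points. First, your argument nowhere uses the restriction on the flaw-choice strategy (``a local search algorithm as above''), yet the reconstruction property is where this assumption enters essentially, and the paper explicitly warns the theorem is not expected to hold for arbitrary strategies; any completed proof must make this dependence visible. Second, your forest formalism (children of $v$ are the later steps addressing flaws of $S_v$, union bound over at most $m$ roots) does not match the multi-flaw initial state: the initial record is $U(\sigma_1)$, and the paper sums over all $S_0\subseteq\mathrm{Span}(\theta)$, which is the source of the additive $m\log_2\bigl(\frac{1+\psi_{\max}}{\psi_{\min}}\bigr)$ term in the running time; moreover a forest with dynamically determined parent links need not determine the chronological order of steps, which your reverse-order peeling requires. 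The paper avoids both issues by working with a flat witness \emph{sequence} (one set per step) rather than a forest. In short, Steps~2 and~3 of your plan are sound in outline, but Step~1 --- the canonical pruning of the recorded sets, the proof that the pruned record reconstructs the addressed flaws under the fixed strategy, and the balance identity linking the record to the $\psi$-telescoping --- is missing, and it is the core of the theorem.
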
 
\par\noindent
The phrase ``quickly with high probability" essentially means that the running time has expectation linear in the number of flaws and an exponential tail; we spell this out more formally in Section~\ref{sec:results}.

A key feature of Theorem~\ref{thm:softprelim} is the absence of a causality/dependency graph, present in all previous LLL conditions. This is because considering {\it point-to-set correlations}, i.e., how each flaw interacts with every other set of flaws, frees us from the traditional view of dependencies between individual events.
In our condition, every flaw may interact with every other flaw, as long as the interactions are sufficiently weak.  Notably, this is achieved without  any loss when specialized to the traditional setting
of a causality/dependency graph. To see this, note that if $S$ contains any flaw that is never introduced by addressing flaw~$f_i$, then $\gamma_i^S = 0$.  Thus, in the presence of a causality/dependency graph, the only terms contributing to the summation in~\eqref{eq:softprelim} are those that correspond to subsets of the graph neighborhood of flaw $f_i$, recovering the traditional setting.  

Besides relaxing the traditional notion of dependence, our condition is also quantitatively more powerful, even in the traditional setting. To get a feeling for this, we observe that the previously most powerful algorithmic LLL condition, due to Achlioptas, Iliopoulos and Kolmogorov~\cite{AIK}, can be derived from~\eqref{eq:softprelim} by replacing~$\gamma_i^S$ by $\gamma_i^{\emptyset}$ for every $S$ (and restricting~$S$ to subsets of the neighborhood of flaw~$f_i$, as discussed in the previous paragraph).  Since the charges $\gamma_i^S$ are decreasing in $S$, replacing $\gamma_i^{\emptyset}$ with $\gamma_i^S$ can lead to a dramatic improvement.  For example, if the flaws in $S$ are never introduced simultaneously when addressing flaw $f_i$, then $\gamma_i^S = 0$ and $S$ does not contribute to the sum in~\eqref{eq:softprelim}; in contrast, $S$ contributes $\gamma_i^{\emptyset}$, i.e., the maximum possible charge, to the corresponding sum in~\cite{AIK}. For a more detailed discussion of how our new condition subsumes existing versions of the LLL see Appendix~\ref{comparison}.

A natural question is whether Theorem~\ref{thm:softprelim} can be improved by replacing the word ``includes" with the word ``equals" in~\eqref{eq:lalalakis}, thus shrinking the sets $\mathrm{In}_i^S(\tau)$. The short answer is ``No," i.e., such a change invalidates the theorem. The reason for this is that in resampling algorithms we must allow for the possibility that flaws introduced when addressing~$f_i$ may later be fixed ``collaterally,'' i.e., as the result of addressing other flaws\, rather than by being specifically addressed by the algorithm. While it may seem that such collateral fixes cannot possibly be detrimental, they are problematic from an analysis perspective
as they can potentially increase the intensity of correlations between flaw $f_i$ and~$S$. Perhaps more convincingly, tracking collateral fixes and taking them into account also appears to be a bad idea in practice~\cite{walksat, walk_noise, balint2012choosing,probsat}: for example, local search satisfiability algorithms that select which variable to flip (among those in the targeted violated clause) based \emph{only} on which clauses will become violated, fare much better than algorithms that weigh this damage against the benefit of the collaterally fixed clauses.

Motivated by the above considerations, we introduce the notion of \emph{primary} flaws. These are flaws which, once present, can only be eradicated by being addressed by the algorithm, i.e., they cannot be fixed collaterally. Primary flaws allow us to change the definition of the sets $\mathrm{In}_i^S(\tau)$ in the desired direction. Specifically, say that a set of flaws $T$ \emph{covers} a set of flaws $S$ if:
\begin{enumerate}
\item
the set of primary flaws in $T$ \emph{equals} the set of primary flaws in $S$; and
\item
the set of non-primary flaws in $T$ \emph{includes} the set of non-primary flaws in $S$.
\end{enumerate}
In other words, we demand equality at least for the primary flaws.

\begin{theorem}\label{thm:soft}
Theorem~\ref{thm:softprelim} continues to hold if\/ $\mathrm{In}_i^S(\tau)$ is redefined by replacing
``includes" by ``covers"  in equation~\eqref{eq:lalalakis}.
\end{theorem}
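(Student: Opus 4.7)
The plan is to adapt the trajectory-based analysis underlying Theorem~\ref{thm:softprelim} by refining the information recorded at each addressing step. In that analysis each step in which flaw $f_i$ is addressed contributes a factor of $\gamma_i^S$, where $S$ is a certified superset of the set of flaws introduced at that step, and the proof bounds the total ergodic flow of bad trajectories by summing such products over all compatible witness structures. For Theorem~\ref{thm:soft} I would require the certificate to be exact on primary flaws (equal to the set of primary flaws actually introduced) while remaining merely a superset on non-primary flaws, matching the ``covers" relation.

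The refinement is permissible precisely because of the defining property of primary flaws: once present, a primary flaw can only be eradicated by being addressed. Therefore, reading a trajectory from left to right, the exact set of primary flaws introduced at step~$k$ is determined by the initial state together with the sequence of addressing choices; no extra information is needed. Thus distinct trajectories continue to map to distinct refined witness structures, and the union bound is still valid even though each pair $(i,S)$ now carries a strictly smaller set $\mathrm{In}_i^S(\tau)$, and hence a smaller charge.

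On the probability side, verifying that $\gamma_i^S$ under the new definition still bounds the relevant ergodic flow is immediate, since shrinking $\mathrm{In}_i^S(\tau)$ only decreases the sum in~\eqref{charge_def}. Substituting these refined charges into the same branching/telescoping argument used in the proof of Theorem~\ref{thm:softprelim} reproduces the convergence criterion~\eqref{eq:softprelim}, which is exactly the statement of Theorem~\ref{thm:soft}.

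The main obstacle is to verify that the witness-structure accounting in the original proof factorizes in a way that allows primary and non-primary labels to be treated asymmetrically. Concretely, one must show that the combinatorics tying each step's charge to the rest of the structure depend only on the local pair $(i,S)$, and that whether membership in $S$ is recorded as a superset or as an exact record can be chosen independently, per flaw, within $S$. Equivalently, the per-step certificate must be allowed to be produced by two different mechanisms \emph{within the same label}: backwards reconstruction from future addressing steps (for primary flaws) and conservative over-approximation (for non-primary flaws). Once one checks that the forest-style argument of Theorem~\ref{thm:softprelim} only invokes the introduced set as a black-box certificate, rather than exploiting it being either purely a superset or purely exact, the rest of the proof carries over verbatim with the refined charges.
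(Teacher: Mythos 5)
There is a genuine gap at the crux of the argument. In the paper's proof, the witness sequence of a bad trajectory records at each step the set $B_i$ of flaws introduced \emph{minus} the set $C_i$ of those later eradicated collaterally; the recorded label $S_i = B_i \setminus C_i$ is therefore a \emph{subset} of the introduced flaws (not, as you write, a ``certified superset''), and the charge $\gamma_{(i)}^{S_i}$ bounds the flow through transitions whose introduced set contains $S_i$. The entire content of Theorem~\ref{thm:soft} is that this necessary condition can be tightened to ``covers'': because a primary flaw, once present, can never be eradicated collaterally, $C_i$ contains no primary flaws, hence $S_i^P = B_i^P$ exactly while $S_i^N \subseteq B_i^N$; so the event $w(\Sigma)=\phi$ already forces the set introduced at step $i$ to cover $S_i$, and the smaller covers-based charges still upper-bound $\Pr[w(\Sigma)=\phi]$ (this is the one extra line in Lemma~\ref{issuing_charges_lemma}). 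Your justification for exactness on primary flaws is different and does not work: you claim that ``the exact set of primary flaws introduced at step $k$ is determined by the initial state together with the sequence of addressing choices,'' but the transitions are random, so addressing the same flaw at the same state can introduce different primary flaws; and your conclusion that ``distinct trajectories continue to map to distinct refined witness structures'' is both false (the witness map is many-to-one) and not what the argument requires. What is required is (i) the per-witness probability bound with the refined charges, which rests exactly on the $C_i^P=\emptyset$ observation, and (ii) the fact (Lemma~\ref{lem:reconstruct}) that a witness sequence determines the sequence of addressed flaws, so that the sum over witnesses can be handled by the $\psi$-weighted argument of Section~\ref{bounding_the_sum}.

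Moreover, you explicitly defer the decisive verification (``the main obstacle is to verify that the witness-structure accounting \dots factorizes'' so that primary and non-primary flaws can be treated asymmetrically) rather than carrying it out. In the paper this is not a separate factorization property to be checked: it is immediate from the definition of the witness sequence that the only flaws ever dropped from a step's label are collaterally fixed ones, and those are never primary, so ``covers'' is automatically a valid certificate and the remainder of the proof of Theorem~\ref{thm:softprelim} goes through verbatim. Without that observation (or a correct substitute for it), the proposal does not establish that replacing ``includes'' by ``covers'' preserves the per-step probability bounds, which is precisely the statement being proved.
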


The notion of primary flaws is one of our main conceptual contributions.  
Crucially for our applications, backtracking steps always introduce only primary flaws and thus, for such steps, we achieve an ideal level of control. 
The full version of our new algorithmic LLL condition, incorporating primary flaws, is spelled out formally in Theorem~\ref{soft} in Section~\ref{sec:results}.

\subsection{Technical overview: the Lov\'{a}sz Local Lemma as a spectral condition}\label{sec:spectral}

We conclude this introduction by sketching the techniques we use to prove our convergence criterion.  
As mentioned earlier, our main insight is to interpret LLL-inspired
convergence arguments for local search algorithms as methods for bounding the 
{\it spectral radius\/} of an associated matrix.

As above, let $\Omega$ be a (large) finite set of states and let $\Omega^* \subseteq \Omega$ be the ``bad" part of~$\Omega$, comprising the flawed states.  Imagine a particle trying to escape~$\Omega^*$ by following a Markov chain on~$\Omega$ with transition matrix~$P$. Our task is to develop conditions under which the particle eventually escapes, thus establishing in particular that $\Omega^* \neq \Omega$.  Letting $\Amat$ be the $|\Omega^*| \times | \Omega^*|$ submatrix of $P$ that corresponds to transitions from $\Omega^*$ to $\Omega^*$, and $\Amatcomp$ the submatrix that corresponds
to  transitions from $\Omega^*$ to $\Omega \setminus \Omega^*$, we may, after a suitable
permutation of its rows and columns, write $P$ as:
\begin{align*}
P = \left[
\begin{array}{c|c}
\Amat & \Amatcomp \\
\hline
0 & I
\end{array}
\right]  .
\end{align*}
Here $I$ is the identity matrix, since we assume that the particle stops after reaching a flawless state. 

Let $\theta = [ \theta_1 \mid \theta_2 ]$ be the row vector corresponding to the probability distribution of the starting state, where $\theta_1$  and $\theta_2$ are the vectors that correspond to states in $\Omega^*$ and $\Omega \setminus \Omega^*$, respectively. Then, the probability that after $t$ steps the particle is still inside $\Omega^*$ is exactly $\| \theta_1 \Amat^t \|_1$. 
Therefore, for any initial distribution $\theta$, the particle escapes $\Omega^*$ if and only if the spectral radius, $\rho(\Amat)$, of $\Amat$ is strictly less than~1. Moreover, the rate of convergence is dictated by $1 - \rho(\Amat)$. Unfortunately, since $\Amat$ is huge and defined implicitly by an algorithm, the magnitude of its largest eigenvalue, $\rho(\Amat)$, is not readily available. 

To sidestep the inaccessibility of the spectral radius $\rho(\Amat)$, one can instead bound some {\it operator norm\/} $\| \cdot \|$  of~$\Amat$ and appeal to the
fact that $\rho(\Amat) \le \| \Amat \|$ for any such norm. 
Moreover, instead of bounding an operator norm of $A$ itself, one often first performs a ``change of basis" $\Amat' = M A M^{-1}$ for some invertible matrix~$M$ and then bounds $\|A'\|$, justified by the fact that $\rho(\Amat) = \rho(\Amat') \le \|A'\|$. The purpose of the change of basis  is to cast $\Amat$ ``in a good light" in the eyes of the chosen operator norm, in the hope of minimizing the cost of replacing the spectral norm with the operator norm.

As we explain in Appendix~\ref{app:spectral}, essentially all known analyses of LLL-inspired local search algorithms can be recast in the above framework of matrix norms. More significantly, this viewpoint allows the extension of such analyses to a wider class of algorithms. In Section~\ref{soft_LLL}, we will use this linear-algebraic machinery to prove our new convergence condition. The role of the measure~$\mu$ will be reflected in the change of basis~$M$, while the charges $\gamma^S_i$ will correspond to norms $\|M A^S_i M^{-1}\|_{1}$, where the $A^S_i$ are submatrices of the transition matrix~$A$.

\subsection{Follow-up work} 

Following the appearance of the conference version of this paper,
Davies et al.~\cite{davies2020algorithmic} used our new algorithmic LLL condition in their general framework for coloring locally sparse graphs.  Among other applications, they generalize our Theorem~\ref{improvement_mike} so that its conclusion holds under an analogous bound for the local $k$-cycle-density of the input graph, for any fixed $k \ge 3$. (Thus, when $k =3$ their result reduces to Theorem~\ref{improvement_mike}.) In a companion paper~\cite{davies2020graph},  the same authors also managed to improve the bound in our Theorem~\ref{sparse_graphs} by a factor of~$2$, but this result only implies an algorithm whose running time is exponential in the maximum degree of the graph.  An interesting feature of these papers is that they apply the LLL in sophisticated probability spaces that are inspired by so-called ``hardcore distributions" on independent sets.

A number of papers in the algorithmic LLL literature~\cite{Haeupler_jacm,NewBoundsHarris,EnuHarris,LLLWTL,kolmofocs} analyze properties of local search algorithms besides fast convergence (such as parallelization, bounds on the entropy of the output distribution, etc.).  It is known that such properties cannot be established for general local search algorithms~\cite{HV}, but it has been observed~\cite{kolmofocs,LLLWTL} that they do hold for a large class of so-called {\it commutative\/} algorithms. Very recently, and inspired by our work in this paper, Harris, Iliopoulos and Kolmogorov~\cite{matrix_comm} introduced a very natural and more general notion of commutativity (essentially matrix commutativity) that allows them to show a number of new refined properties of LLL-inspired local search algorithms with significantly simpler proofs.

Finally, Kolmogorov~\cite{kolmo_new} extended our new algorithmic LLL condition, showing that it can be combined with  well-known improvements of the standard Lov\'{a}sz Local Lemma for locally dense dependency graphs~\cite{bissacot,AIK}. However, we do not currently know a natural setting where this new condition applies, and Kolmogorov mentions that he views the significance of his contribution as mainly theoretical. (See also Appendix~\ref{comparison} for further discussion.)

\section{Statement of results}\label{sec:results}

\subsection{A new algorithmic LLL condition}\label{subsec:results:main}

Below we state our main result, which is a formal version of Theorem~\ref{thm:soft} discussed in Section~\ref{sec:informal}.

Recall that $\Omega$ is a finite set, $F = \{f_1, f_2, \ldots, f_m \}$ is a collection of subsets of $\Omega$ which  we refer to as {\it flaws}, and $\bigcup_{i \in [m]} f_i = \Omega^*$.  Our goal is to find a flawless object, i.e., an object in $\Omega \setminus \Omega^*$. For a state $\sigma$, we denote by $U(\sigma) = \{ j \in [m] : f_j \ni \sigma \}$ the set of (indices of) flaws present in~$\sigma$.  (Here and elsewhere, we shall blur the distinction between flaws and their indices.)  We consider algorithms that start in a state sampled from a probability distribution $\theta$ and, in each flawed state~$\sigma\in\Omega^*$, choose a flaw~$f_i\in U(\sigma)$, and attempt to leave (``fix'') $f_i$ by moving to a new state $\tau$  selected with probability $\rho_i(\sigma,\tau)$.  We refer to an attempt to fix a flaw, successful or not, as \emph{addressing} it. 
We say that a transition $\sigma \to \tau$, made to address flaw $f_i$, \emph{introduces} flaw $f_j \in U(\tau)  $ if $f_j  \notin U(\sigma)$ or    if $j=i$.  (Thus, a flaw (re)introduces itself when a transition fails to address it.)

%
%

Recall that $\theta$ denotes the probability distribution of the starting state. We denote by $\mathrm{Span}(\theta)$ the set of flaw indices that may be present in the initial state, i.e., $\mathrm{Span}(\theta) = \bigcup_{\sigma \in \Omega: \theta(\sigma) >0 } U(\sigma)$.

Let $\pi$ be an arbitrary permutation over  $[m]$. We say that an algorithm follows the \emph{$\pi$-strategy} if at each step the flaw it chooses to address is the one corresponding to the element of $U(\sigma)$ of lowest index according to $\pi$.

We now formalize the definitions of primary flaws and charges introduced informally in the introduction.

 \begin{definition}
A flaw $f_i$ is \emph{primary} if for every $\sigma \in f_i$ and every $j \neq i$, addressing $f_j$ at $\sigma$ always results in some $\tau \in f_i$, i.e., $f_i$ is never eradicated collaterally. For a given set $S \subseteq [m]$, we write $S^P$ and $S^N$ to denote the indices that correspond to \primary~and non-\primary~flaws in $S$, respectively.  
\end{definition}

\begin{definition}
We say that a set of flaws $T$ \emph{covers} a set of flaws $S$ if $T^P = S^P$ and $T^N \supseteq S^N$.
\end{definition}

\begin{definition}\label{sparser}
For a state $\tau \in \Omega$, flaw $f_i$, and set of flaws $S$, let $$
\mathrm{In}_i^S(\tau)  =  \{ \sigma \in f_i : \text{the set of flaws introduced by the transition $\sigma \to \tau$ covers $S$} \}   . $$
Let $\mu>0$ be an arbitrary measure on $\Omega$. For every $i \in [m]$ and $S \subseteq [m]$, the \emph{charge} of $(i,S)$ with respect to $\mu$ is,
\begin{align}\label{formal_charge_definition} 
\gamma_i^S =  \max_{ \tau \in \Omega} 
	\left\{\frac{1} { \mu(\tau)  }
		\sum_{ \sigma \in  \mathrm{In}_i^S(\tau)}  \mu(\sigma) \rho_i(\sigma,\tau)   
	\right\}   . 
\end{align}
\end{definition}




We now state the formal version of our main result, Theorem~\ref{thm:softprelim} of the introduction.

\begin{theorem}[Main Result]~\label{soft}
If there exist positive real numbers $\{\psi_i\}_{i \in [m] }$ such that, for every $i \in [m] $,
\begin{align}
\zeta_i :=   \frac{ 1 }{ \psi_i }  \sum_{  \substack{ S \subseteq [m]  } }  \gamma_i^S\prod_{ j \in S} \psi_j     \enspace  < 1  , \label{eq:main_soft_condition}
\end{align}
then, for every permutation $\pi$ over $[m]$, the probability that an algorithm following the $\pi$-strategy fails to reach a flawless state within $(T_0 + s)/\delta$ steps is $2^{-s}$, where $\delta = 1 - \max_{i \in [m]} \zeta_i $, and
\begin{align*}
 T_0 & = 
\log_2  \mu_{\min}^{-1}+ m \log_2\left(\frac{1+\psi_{\max}}{\psi_{\min}}\right) ,
\end{align*} 
with $\mu_{\min} = \min_{\sigma \in \Omega} \mu(\sigma) $, $\psi_{\max}  = \max_{i \in [m] } \psi_i $ and $\psi_{\min} = \min_{i \in [m]}  \psi_i $.  
\end{theorem}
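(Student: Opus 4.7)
The plan is to follow the spectral framework sketched in Section~\ref{sec:spectral}. Let $A$ be the $|\Omega^*|\times|\Omega^*|$ submatrix of the algorithm's transition kernel corresponding to transitions within $\Omega^*$, so the probability the algorithm has not reached a flawless state after $t$ steps equals $\|\theta_1 A^t\|_1$, where $\theta_1$ is the restriction of $\theta$ to $\Omega^*$. The goal is to bound this quantity by $\zeta^t$ times the multiplicative ``burn-in'' that will become the $T_0$ term, where $\zeta := \max_i \zeta_i<1$. The standard route is to exhibit a positive state weight $\phi:\Omega\to\mathbb{R}_{>0}$ making $A$ contracting in a weighted column-sum sense: $\sum_\sigma \phi(\sigma) A(\sigma,\tau) \le \zeta\,\phi(\tau)$ for every $\tau\in\Omega^*$.

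The candidate weight, dictated by the form of condition~\eqref{eq:main_soft_condition}, is
$$\phi(\sigma) \;=\; \mu(\sigma)\,\prod_{j \in U(\sigma)^P} \psi_j\;\prod_{j \in U(\sigma)^N} (1 + \psi_j).$$
The motivation is the identity $\sum_{S:\ T\text{ covers }S}\prod_{j\in S}\psi_j \;=\; \prod_{j\in T^P}\psi_j\prod_{j\in T^N}(1+\psi_j)=:W(T)$, so the quantity $W(T)$ — obtained by aggregating $\prod_S\psi_j$ over all $S$ covered by $T$ — appears naturally in the ratio $\phi(\sigma)/\phi(\tau)$ across a transition $\sigma\to\tau$ addressing $f_i$ with introduced set $T$. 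A careful bookkeeping of $U(\sigma)$ in terms of the remaining set $R=U(\tau)\setminus T$, the set $T$, and the set $C^N$ of collaterally fixed non-primary flaws shows that this ratio equals $\tfrac{\mu(\sigma)}{\mu(\tau)}\cdot\tfrac{\psi_i\prod_{j\in C^N}(1+\psi_j)}{W(T)}$ (with an analogous formula when $i$ is itself non-primary). Summing over $\sigma\in f_i$ and stratifying by $T$, the free choice of $C^N$ folds back, via the ``covers'' relation of Definition~\ref{sparser}, into precisely the sum $\sum_S \gamma_i^S\prod_{j\in S}\psi_j$, so the charge bound yields $\sum_{\sigma\in f_i}\phi(\sigma)\rho_i(\sigma,\tau) \le \zeta_i\,\phi(\tau)$. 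The distinction between primary flaws (contributing $\psi_j$) and non-primary flaws (contributing $(1+\psi_j)$) is essential here, and is precisely why the condition uses ``covers'' rather than equality: primary flaws cannot be collaterally fixed, so the $C^N$ expansion only ranges over non-primary parts.

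The main obstacle is ensuring that aggregating across the possible choices of addressed flaw yields $\max_i \zeta_i$ rather than $\sum_i \zeta_i$. The $\pi$-strategy is what saves us: $A(\sigma,\tau)$ has support only on $\sigma$'s whose $\pi$-minimum flaw is the specific $i^*(\sigma)$, so the sum over $\sigma$ at fixed $\tau$ partitions by $i^*(\sigma)$ and contributions from different $i$'s do not accumulate. Encoding this partition cleanly — either by restricting the $\sigma$-sum to $\pi$-admissible configurations in each slice of the charge bound, or by linearizing the spectral bound trajectory-by-trajectory via a witness-tree/log argument — is the delicate technical step. Once super-invariance holds, iterating gives $\sum_\sigma \phi(\sigma) A^t(\sigma,\tau)\le\zeta^t\phi(\tau)$, and converting back via $\|\theta_1 A^t\|_1\le \zeta^t\,\phi_{\max}/\phi_{\min}$ together with $\phi_{\max}/\phi_{\min}\le\mu_{\min}^{-1}\bigl((1+\psi_{\max})/\psi_{\min}\bigr)^m$ produces the claimed tail of $2^{-s}$ at $t=(T_0+s)/\delta$.
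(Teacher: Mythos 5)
There is a genuine gap: the heart of your proposal is the one-step ``super-invariance'' inequality $\sum_{\sigma}\phi(\sigma)A(\sigma,\tau)\le\zeta\,\phi(\tau)$ for the weight $\phi(\sigma)=\mu(\sigma)\prod_{j\in U(\sigma)^P}\psi_j\prod_{j\in U(\sigma)^N}(1+\psi_j)$, and this inequality is asserted rather than proved --- and it does not follow from condition~\eqref{eq:main_soft_condition}. Write $T$ for the set of flaws introduced by a transition $\sigma\to\tau$ addressing $f_i$ and $C^N$ for the flaws collaterally eradicated in that same step, so that $U(\sigma)=(U(\tau)\setminus T)\cup\{i\}\cup C^N$ and $\phi(\sigma)/\phi(\tau)=\frac{\mu(\sigma)}{\mu(\tau)}\cdot\frac{w_i\prod_{j\in C^N}(1+\psi_j)}{W(T)}$ with $W(T)=\prod_{j\in T^P}\psi_j\prod_{j\in T^N}(1+\psi_j)$. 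Your covering identity $W(T)=\sum_{S\,:\,T\text{ covers }S}\prod_{j\in S}\psi_j$ is correct, but it enters the charge condition on the \emph{other} side: Definition~\ref{sparser} gives control of $\sum_{\sigma\in f_i}\mu(\sigma)\rho_i(\sigma,\tau)\,W(T(\sigma))\le\zeta_i\psi_i\,\mu(\tau)$, i.e.\ with $W(T)$ in the numerator, whereas your ratio has $W(T)$ in the denominator together with the factor $\prod_{j\in C^N}(1+\psi_j)\ge 1$, which the hypotheses do not bound at all (charges constrain introduced flaws, never flaws collaterally eradicated in the current step). So the ``folds back into $\sum_S\gamma_i^S\prod_{j\in S}\psi_j$'' step fails. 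Moreover, even with the potential inverted so that $W(T)$ sits in the numerator, a per-state drift bound at a fixed target $\tau$ must sum the contributions of all flaws $i$ that could have been addressed by some predecessor $\sigma$, producing $\sum_i\zeta_i$ rather than $\max_i\zeta_i$; the $\pi$-strategy alone does not rescue a Markovian one-step argument from this.

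The deeper reason a state-only Lyapunov function cannot work here is that the discount for collateral fixes is inherently non-local in time: in the paper's bookkeeping (Definition~\ref{BC}) a non-primary flaw introduced at step $i$ is dropped from the witness only if it is eradicated collaterally at some \emph{later} step, and the compensation is a telescoping factor over the whole trajectory, not a per-step one. The paper therefore works trajectory-wise: Lemma~\ref{lem:reconstruct} shows the witness sequence determines the addressed flaw at every step (this is where the $\pi$-strategy is used), Lemma~\ref{issuing_charges_lemma} bounds the probability of each witness sequence by $\prod_i\gamma_{(i)}^{S_i}$ via the matrix-norm factorization, and Section~\ref{bounding_the_sum} bounds the sum over witness sequences by a branching-process/probability-distribution argument, which is where $\max_i\zeta_i$ and the $T_0$ terms arise. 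Your closing remark about ``linearizing the spectral bound trajectory-by-trajectory via a witness-tree/log argument'' is essentially that route, but it is exactly the part you have not carried out; as written, the proposal's central lemma is both unproved and false in the stated form.
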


\begin{remark} 
In typical applications, $\mu$ and $\{\psi_i\}_{i \in [m] }$ are such that $T_0 = O(\log |\Omega| + m)$ and the sum in~\eqref{eq:main_soft_condition} is easily computable, as $\gamma_i^S = 0$ for the vast majority of subsets $S$. 

\end{remark}

%

\begin{remark}\label{priority_remark}
For any fixed  permutation $\pi$, the charges $\gamma_i^S$ can be reduced by removing from $\mathrm{In}_i^S(\tau)$ every state for which $i$ is not the lowest indexed element of $U(\sigma)$ according to $\pi$.
\end{remark}

\begin{remark}
Theorem~\ref{soft} also holds for algorithms using flaw choice strategies other than $\pi$-strategies. We discuss some such strategies in Section~\ref{remarks_proofs_strategies}. However, there is good reason to expect that it does \emph{not} hold for \emph{arbitrary} flaw choice strategies (see~\cite{kolmofocs}).
\end{remark}

Finally, we state a refinement of our running time bound that will be important in order to get the best convergence guarantees in the applications of pure backtracking algorithms in Section~\ref{SoftcoreLLL}.
\begin{remark}\label{cor:faster_back}
The upper bound on $T_0$ in Theorem~\ref{soft} can be replaced
by the more refined bound:
\begin{align*}
 T_0 & = 
\log_2 \left(  \max_{\sigma \in \Omega} \frac{  \theta(\sigma) }{ \mu(\sigma) } \right)+   \log_2 \biggl( \sum_{S \subseteq  \mathrm{Span}(\theta)} \prod_{j \in S} \psi_j  \biggr) + \log_2  \biggl(  \max_{S \subseteq [m] } \frac{1}{ \prod_{j \in S} \psi_j} \biggr)   .
\end{align*} 
Moreover, if (as in pure backtracking algorithms) every flaw is primary, and (as is typical in pure backtracking algorithms) every flaw is present in the initial state, and if $\psi_i\in (0,1]$ for all~$i$,
then $T_0 = \log_2  \mu_{\min}^{-1}$.
\end{remark}

%
%
%

\subsection{Application to graph coloring}
\label{subsec:results:coloring}

In graph coloring one is given a graph $G=(V,E)$ and the goal is to find a mapping of $V$ to a set of $q$ colors so that no edge in $E$ is monochromatic. The \emph{chromatic number}, $\chi(G)$, of $G$ is the smallest integer $q$ for which this is possible. 
Given a set $\mathcal{L}_v$ of colors for each vertex $v$ (called a {\it list\/}), a list-coloring maps each $v \in V$ to a color in $\mathcal{L}_v$ so that no edge in $E$ is monochromatic. A graph is $q$-list-colorable if it has a list-coloring no matter how one assigns a list of $q$ colors to each vertex. The \emph{list chromatic number}, $\chi_{\ell}(G)$, is the smallest  $q$ for which $G$ is $q$-list-colorable. Clearly $\chi_\ell(G)\ge\chi(G)$. A celebrated result of Johansson~\cite{JO} established that there exists a large constant $C>0$ such that every {\it triangle-free\/} graph with maximum degree $\Delta \ge \Delta_0$ can be list-colored using $C \Delta/\ln \Delta$ colors. Very recently, using the entropy compression method, Molloy~\cite{molloy2017list} improved Johansson's result, replacing $C$ with $(1+\epsilon)$ for any $\epsilon > 0$ and all $\Delta \ge \Delta_{\epsilon}$. (Soon thereafter, Bernshteyn~\cite{MolloyLLL} established the same bound for the list chromatic number, non-constructively, via the Lopsided LLL, and Iliopoulos~\cite{LLLWTL} showed that the algorithm of Molloy can be analyzed using the algorithmic LLL condition of~\cite{AIK}, avoiding the need for a problem-specific entropy compression argument.)

Our first result related to graph coloring is a generalization of Molloy's result,  bounding the list-chromatic number as a function of the number of triangles in each neighborhood. 
Specifically, in Section~\ref{proof_improvement_mike} we establish the following theorem which is a key ingredient in the proof of Theorem~\ref{color_inform}.   Note that, in order to comply with the standard notation used in  results in the area, we express the bound on the number of triangles as $\Delta^2/f$; the triangle-free case then corresponds to $f=\Delta^2 +1$.  We stress that Molloy's proof breaks in the presence of even a single triangle per vertex.
\begin{theorem}\label{improvement_mike}
Let $G$ be any graph with maximum degree $\Delta$ in which the neighbors of every vertex span at most $\Delta^2/f$ edges. For all $\epsilon > 0$, there exists $\Delta_{\epsilon}$ such that if $\Delta \ge \Delta_{\epsilon}$ and $  f \in [ \Delta^{\frac{2+2\epsilon}{1+2\epsilon} } (\ln \Delta)^2  , \Delta^2+1]$, then
\[
\chi_{\ell}(G) \le  (1 + \epsilon )  \Delta / \ln  \sqrt{f}  . 
\]
Furthermore, if $G$ is a graph on $n$ vertices then, for every $c> 0$,  there exists an algorithm that constructs such a coloring in polynomial time with probability at least $1 - \frac{1}{n^c} $.
\end{theorem}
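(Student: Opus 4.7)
The plan is to adapt the one-round ``nibble'' framework of Molloy \cite{molloy2017list}, and its algorithmic incarnation due to Iliopoulos \cite{LLLWTL}, to graphs whose neighborhoods span up to $\Delta^2/f$ edges. The key new idea is that the triangle budget is paid for by replacing $\ln \Delta$ with $\ln \sqrt{f}$ in the palette size, and that the resulting random partial-coloring procedure is driven to completion using our new algorithmic condition, Theorem \ref{soft}.

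Concretely, set $q = (1+\epsilon)\Delta/\ln\sqrt{f}$ and perform the following random partial coloring: activate each vertex independently with a constant probability $p$; each activated vertex picks a tentative color uniformly from its list; and a vertex permanently keeps its tentative color if no neighbor chose the same one. For each remaining uncolored vertex $v$, define its \emph{residual list} $L_v$ as the set of colors in $\mathcal{L}_v$ not used permanently by any neighbor, and its residual degree $d_v$ as its number of uncolored neighbors. The goal of the round is to show that, with sufficiently high probability, every uncolored vertex satisfies $|L_v| \ge (1+\epsilon/3)\,d_v/\ln\sqrt{f}$; once this holds, the residual instance can be finished either by a standard LLL argument or by one more application of Theorem \ref{soft}, since the list-size to degree ratio then matches the classical LLL hypothesis for list-coloring.

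The main probabilistic estimate is a lower bound on $\mathbb{E}|L_v|$. For a color $c \in \mathcal{L}_v$ to disappear from $L_v$, some neighbor of $v$ must permanently keep $c$; in the triangle-free case, the events ``$u$ keeps $c$'' for distinct $u \in N(v)$ are nearly independent, yielding Molloy's clean product bound. A triangle on $\{v,u,u'\}$ introduces the correlation that if both $u$ and $u'$ tentatively pick $c$ then neither keeps it, so each triangle in $N(v)$ contributes an inclusion--exclusion correction of order $1/q^2$. Since there are at most $\Delta^2/f$ such triangles, the total correction relative to the triangle-free estimate is $O(\Delta^2/(fq^2)) = o(1)$, and $\mathbb{E}|L_v|$ drops by at most an $f^{o(1)}$ factor, which is exactly what is absorbed by exchanging $\ln\Delta$ for $\ln\sqrt{f}$ in the palette. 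Concentration of both $|L_v|$ and $d_v$ around their means follows from standard Talagrand/Azuma-type inequalities, using that both random variables depend only on tentative choices within a radius-two neighborhood of $v$.

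I would then encode the failure of the target inequality at $v$ as two flaws, $f_v^L = \{|L_v| \text{ below target}\}$ and $f_v^d = \{d_v \text{ above target}\}$. Each has probability $\exp(-c\log^2\Delta)$ for some $c>0$, and each depends only on tentative choices in a neighborhood of polynomial size in $\Delta$; consequently the point-to-set charges $\gamma_i^S$ of Definition \ref{sparser} vanish outside a small family of geometrically close subsets. Choosing weights $\psi_i$ that are a small polynomial in $\Delta$, one easily bounds the sum in \eqref{eq:main_soft_condition} by a constant strictly less than $1$, so Theorem \ref{soft} yields the required partial coloring in polynomial time with probability at least $1-n^{-c}$. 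The main obstacle I anticipate is precisely the correlation calculation above: showing that, despite up to $\Delta^2/f$ triangles per neighborhood, the bias in $\mathbb{E}|L_v|$ and the concentration error are simultaneously small enough that the chosen $q$ still suffices. The quantitative hypothesis $f \in [\Delta^{(2+2\epsilon)/(1+2\epsilon)}(\ln\Delta)^2, \Delta^2+1]$ is precisely where the trade-off between the triangle-induced bias and the concentration error becomes tight.
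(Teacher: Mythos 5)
Your reduction to a residual instance does not work, and the place it fails is exactly where the real difficulty of Theorem~\ref{improvement_mike} lives. The condition you target after one round, $|L_v| \ge (1+\epsilon/3)\, d_v/\ln\sqrt{f}$, is essentially the hypothesis of the theorem itself: it already holds for the empty coloring (where $|L_v| = q = (1+\epsilon)\Delta/\ln\sqrt{f}$ and $d_v \le \Delta$), and it does not match any LLL completion hypothesis. Finishing a list coloring by the classical LLL needs lists \emph{linear} in the residual degree (roughly $|L_v| \ge 2\mathrm{e}\, d_v$), so a list-to-degree ratio of $1/\ln\sqrt{f}$ buys nothing. What the second phase (Lemma~\ref{sufficient_partial_coloring}, from \cite{molloy2017list}) actually requires is control of the \emph{competition}: for every uncolored $v$, one needs $|L_v|\ge L$ together with $\sum_{c \in L_v}|T_{v,c}| \le \frac{L}{10}|L_v|$, i.e.\ the $Z_v$-type condition. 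Your proposal never produces, or even states, such a condition. Moreover, a single round of tentative coloring with constant activation probability makes almost no progress toward it: an activated vertex keeps its color with probability roughly $\mathrm{e}^{-p\Delta/q} = f^{-p/(2+2\epsilon)}$, so residual degrees stay $\approx \Delta$ and residual lists stay $\approx q$. To reach a completable partial coloring one needs either a delicate multi-round nibble or the iterative recoloring scheme of Molloy, and your concluding flaw/charge sketch does not supply that iteration.

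Second, the proposal never confronts the obstacle that triangles actually create, which is the whole point of the theorem and of the paper's hybrid machinery. In the iterative scheme (which is what both \cite{molloy2017list} and the paper run), one repeatedly resamples the colors of $N_v$ for a vertex $v$ violating a $B_v$- or $Z_v$-type condition; when $N_v$ spans edges, this resampling can create monochromatic edges inside $N_v$, i.e.\ improper colorings, and the triangle-free analysis (in particular the counting/injection step behind the charge bounds) collapses even with a single triangle per neighborhood. The paper repairs this with backtracking steps that uncolor offending vertices and record the responsible edge via the primary flaws $f_u^e$, bounds the point-to-set charges in Lemma~\ref{treli_arkouda} (using that an edge can be blamed for at most one uncoloring and that $f$-flaws are primary), verifies condition~\eqref{eq:main_soft_condition}, and uses the Chernoff-type estimates of Lemma~\ref{key_bounds2} where you invoke Talagrand/Azuma. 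Your inclusion--exclusion correction of order $\Delta^2/(f q^2)$ addresses only the comparatively easy bias in $\ex|L_v|$ for a one-shot experiment; it says nothing about how improper colorings arising during the process are repaired or how the competition condition is restored, so the argument cannot be completed along the proposed lines without essentially re-importing the paper's backtracking analysis.
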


Theorem~\ref{improvement_mike} is interesting for several reasons. First, random graphs suggest that it is sharp, i.e., that no efficient algorithm can color graphs satisfying the conditions of the theorem with $(1-\epsilon) \Delta / \ln  \sqrt{ f}$ colors.  More precisely, Proposition~\ref{Random_Graphs} below, proved in Appendix~\ref{random_graphs_proof}, implies that any such algorithm would entail coloring random graphs using fewer than twice as many colors as their chromatic number. This would be a major (and unexpected) breakthrough in random graph theory, where beating this factor of two has been an elusive goal for over 30 years. Besides the lack of progress, further evidence for the optimality of this factor of two is that it corresponds precisely to a phase transition in the geometry of the set of colorings~\cite{mitsaras_barriers}, known as the \emph{shattering threshold}. Second, Theorem~\ref{improvement_mike} establishes the existence of an algorithm that is robust enough to apply to worst-case graphs, while at the same time matching the performance of the best known (and highly tuned) algorithms for random graphs:

\begin{proposition}\label{Random_Graphs}
For every  $\epsilon > 0$ and $d \in  (d_{\epsilon}  \ln n ,  (n \ln n)^{ \frac{1}{3} }  )$, there exist $\Delta=\Delta(d, \epsilon)$ and $f = f(d, \epsilon)$ such that with probability tending to $1$ as $n \to \infty$, a random graph $G = G(n, d/n)$ satisfies the conditions of Theorem~\ref{improvement_mike} and $\chi(G) \ge (\frac{1}{2} - \epsilon) \Delta / \ln  \sqrt{ f}$.
\end{proposition}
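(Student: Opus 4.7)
The plan is to choose parameters $\Delta = \Delta(d,\epsilon)$ and $f = f(d,\epsilon)$ so that $G = G(n,d/n)$ satisfies the hypotheses of Theorem~\ref{improvement_mike} with probability $1-o(1)$, and then to invoke the classical asymptotic $\chi(G(n,d/n)) = (1+o(1))\,d/(2\ln d)$ for sparse random graphs (valid whenever $d \to \infty$ with $d = o(n)$). The three ingredients I would assemble are: concentration of the maximum degree of $G$, concentration of the maximum number of edges spanned in any neighborhood, and this chromatic-number asymptotic.

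First I would use a standard Chernoff bound together with a union bound over the $n$ vertices: since $d > d_\epsilon \ln n$ with $d_\epsilon = d_\epsilon(\eta)$ chosen large enough, this yields $\Delta_{\max}(G) \le (1+\eta)\,d$ w.h.p.\ for any prescribed $\eta > 0$. Set $\Delta := \lceil(1+\eta)d\rceil$; then $\Delta \ge \Delta_\epsilon$ once $n$ is large. Next, for each vertex $v$ the number $X_v$ of edges spanned by $N(v)$ is, after conditioning on $|N(v)| \le \Delta$, stochastically dominated by $\mathrm{Bin}(\binom{\Delta}{2},\,d/n)$, whose mean is at most $(1+\eta)^2\,d^3/(2n)$. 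A Chernoff/Poisson-tail bound plus a union bound over $v$ gives a high-probability upper bound $M = M(d,n)$ on $\max_v X_v$ which, thanks to $d \le (n\ln n)^{1/3}$, satisfies $M = o(d^2/(\ln d)^2)$ throughout our range of $d$. Setting $f := \lfloor \Delta^2/M \rfloor$ then delivers the neighborhood-sparsity hypothesis of Theorem~\ref{improvement_mike} by construction.

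It then remains to verify (i) $f \in [\Delta^{(2+2\epsilon)/(1+2\epsilon)}(\ln\Delta)^2,\; \Delta^2+1]$ and (ii) $\chi(G) \ge (1/2-\epsilon)\,\Delta/\ln\sqrt{f}$. The upper bound on $f$ is immediate from the definition; the lower bound reduces algebraically to $M \le \Delta^{2\epsilon/(1+2\epsilon)}/(\ln\Delta)^2$, which is a direct calculation from $d \le (n\ln n)^{1/3}$ (with the sub-critical regime $d \asymp \ln n$ handled by the fact that the Poisson tail forces $M = O(1)$ when the mean $d^3/n$ is $o(1)$). Since $f \le \Delta^2$ forces $\ln\sqrt{f} \le \ln\Delta = (1+o(1))\ln d$, the chromatic-number asymptotic yields $\chi(G) \ge (1-o(1))\,d/(2\ln d) \ge (1/2-\epsilon)\,\Delta/\ln\sqrt{f}$ for all $n$ large.

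The main technical hurdle is this joint compatibility of $f$ with all three constraints, and the allowed window $d \in (d_\epsilon \ln n,\,(n\ln n)^{1/3})$ is precisely what makes it possible. The lower endpoint ensures that the degree concentration and chromatic-number asymptotic are both available and sharp enough, while the upper endpoint keeps $d^3/n$ small enough that $\Delta^2/f \ge M$ still leaves room for $f$ to exceed the Theorem~\ref{improvement_mike} threshold $\Delta^{(2+2\epsilon)/(1+2\epsilon)}(\ln\Delta)^2$. Outside this window the three constraints cease to be jointly satisfiable, which I expect to be the only non-mechanical part of the verification.
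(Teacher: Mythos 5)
Your overall plan is the same as the paper's: Chernoff concentration of the maximum degree, a high-probability bound on the edges spanned inside every neighborhood (the paper's Lemma~\ref{random_lemma}, which gives at most $\Delta^{\delta}$ triangles per vertex), verification of the range condition for $f$, and the asymptotic $\chi(G(n,d/n))=(1+o(1))\,d/(2\ln d)$. However, your final step is justified in the wrong direction and would fail as written. To conclude $\chi(G)\ge(\tfrac12-\epsilon)\Delta/\ln\sqrt f$ you need an \emph{upper} bound on $\Delta/\ln\sqrt f$, hence a \emph{lower} bound on $\ln\sqrt f$; your justification ``since $f\le\Delta^2$ forces $\ln\sqrt f\le\ln\Delta$'' only makes $\Delta/\ln\sqrt f$ larger and proves nothing. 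What you must use is the lower bound on $f$ from your own condition~(i), namely $\ln\sqrt f\ge\frac{1+\epsilon}{1+2\epsilon}\ln\Delta$, and then track the constants: with $\Delta=\lceil(1+\eta)d\rceil$ one needs
$(\tfrac12-\epsilon)\,\tfrac{1+2\epsilon}{1+\epsilon}\,(1+\eta)\le\tfrac12(1-o(1))$,
which holds only when $\eta$ is chosen small relative to $\epsilon$ (e.g.\ $\eta\le\epsilon/2$, using $(1-2\epsilon)(1+2\epsilon)=1-4\epsilon^2$). This bookkeeping, which the paper carries out with its parameters $\zeta,\delta,\eta,\eta'$, is entirely absent from your write-up, and without it the chain ``$\chi(G)\ge(1-o(1))d/(2\ln d)\ge(\tfrac12-\epsilon)\Delta/\ln\sqrt f$'' is unsupported.

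Relatedly, your intermediate claim $M=o\bigl(d^2/(\ln d)^2\bigr)$ is a red herring: it would only yield $f\gtrsim(\ln d)^2$, i.e.\ $\ln\sqrt f\gtrsim\ln\ln d$, under which both condition~(i) and the desired lower bound on $\chi(G)$ can fail. Everything rests on the much stronger estimate $M\le\Delta^{2\epsilon/(1+2\epsilon)}/(\ln\Delta)^2$ (equivalently, every neighborhood spans at most $\Delta^{\delta}$ edges for suitable $\delta$), which you dismiss as ``a direct calculation.'' That calculation is precisely the content of the paper's Lemma~\ref{random_lemma} and requires an argument covering the whole window $d\in(d_\epsilon\ln n,(n\ln n)^{1/3})$: when $d$ is polynomially large the per-vertex mean $d^3/(2n)\le\tfrac12\ln n$ and the union-bounded Chernoff threshold $O(\ln n)$ is $\Delta^{o(1)}$ only because $\ln\Delta=\Theta(\ln n)$ there, while when $d$ is polylogarithmic the mean is $o(1)$ and one must argue the threshold is $O(1)$ yet still below $\Delta^{2\epsilon/(1+2\epsilon)}/(\ln\Delta)^2$. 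Your proposal gestures at both regimes but proves neither, so the key quantitative lemma is missing; once you supply it and redo the final comparison using the lower bound on $f$ with $\eta$ tied to $\epsilon$, the argument matches the paper's.
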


Third, armed with Theorem~\ref{improvement_mike}, we are able to prove
the following result concerning the chromatic number of {\it arbitrary\/} graphs, as a function of the maximum degree and the maximum number of triangles in any neighborhood:
\begin{theorem}\label{sparse_graphs}
Let $G$ be a graph with maximum degree $\Delta$ in which the neighbors of every vertex span at most $\Delta^2/f$ edges. For all $\epsilon > 0$, there exist $\Delta_{\epsilon}, f_{\epsilon}$ such that if $\Delta \ge \Delta_{\epsilon}$ and $f \in [f_{\epsilon}, \Delta^2+1]$, then 
\begin{equation}\label{eq:looks_good}
\chi(G) \le  (2+\epsilon) \Delta / \ln  \sqrt{ f}  .
\end{equation}
Furthermore, if $G$ is a graph on $n$ vertices then, for every $c> 0$,  there exists an algorithm that constructs such a coloring in polynomial time with probability at least $1 - \frac{1}{n^c} $.
\end{theorem}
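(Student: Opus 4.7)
The plan is to reduce Theorem~\ref{sparse_graphs} to Theorem~\ref{improvement_mike} via a random partition of the vertex set. Whenever $f \geq \Delta^{(2+2\epsilon)/(1+2\epsilon)}(\ln\Delta)^2$, Theorem~\ref{improvement_mike} applies directly with the same $\epsilon$, yielding the stronger bound $\chi(G) \le \chi_\ell(G) \le (1+\epsilon)\Delta/\ln\sqrt{f} \le (2+\epsilon)\Delta/\ln\sqrt{f}$. At the other extreme, whenever $f < e^{4+2\epsilon}$ the target bound exceeds $\Delta+1$ and is realized by any greedy proper coloring, so $f_\epsilon$ may be taken to be a sufficiently large absolute constant. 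All the real work is in the intermediate range $f_\epsilon \le f < \Delta^{(2+2\epsilon)/(1+2\epsilon)}(\ln\Delta)^2$.

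In this range, fix $\epsilon' \le \epsilon/3$ and pick $k$ so that $\Delta/k$ equals (up to constants) the largest value for which Theorem~\ref{improvement_mike}'s hypothesis still holds with parameters $(\Delta/k, f)$ and $\epsilon'$, namely $\Delta/k \approx \bigl(f/(\ln\Delta)^2\bigr)^{(1+2\epsilon')/(2+2\epsilon')}$. Partition $V$ into $V_1,\ldots,V_k$ uniformly and independently at random. For each vertex $v$, conditional on $v\in V_i$, its induced degree $\Delta_i(v)$ is a sum of at most $\Delta$ independent Bernoulli$(1/k)$ variables, and the number of edges $T_i(v)$ among $N(v)\cap V_i$ is a sum of at most $\Delta^2/f$ independent Bernoulli$(1/k^2)$ variables. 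A Chernoff bound with multiplicative slack $2$, plus a union bound over vertices, gives that with high probability each $G[V_i]$ satisfies $\Delta_i \le 2\Delta/k$ and $T_i \le O(1)\cdot \Delta^2/(fk^2)$, hence $f_i := \Delta_i^2/T_i = \Omega(f)$. Applying Theorem~\ref{improvement_mike} to each $G[V_i]$ with $k$ disjoint color palettes then yields a proper coloring of $G$ using at most
\[
\sum_{i=1}^k (1+\epsilon')\frac{\Delta_i}{\ln\sqrt{f_i}} \;\le\; k\cdot(1+\epsilon')\cdot\frac{2\Delta/k}{(1-o(1))\ln\sqrt{f}} \;\le\; (2+\epsilon)\frac{\Delta}{\ln\sqrt{f}}
\]
colors, as required. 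The leading constant $2$ (rather than $1$) stems precisely from the multiplicative slack in the Chernoff step, which is what permits the argument to survive the regime where $\Delta/k$ is only a modest multiple of $\ln\Delta$.

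The main obstacle is the concentration step when $\Delta/k$ shrinks close to a constant (which happens when $f$ is itself small compared to $\mathrm{polylog}(\Delta)$), since vanilla Chernoff then no longer provides $\Delta_i \le 2\Delta/k$ via a union bound. In this regime I would appeal to the LLL instead: each bad event ``$\Delta_i(v) > 2\Delta/k$'' or ``$T_i(v)$ exceeds its target'' depends only on the random choices at vertices within distance two of $v$, so the dependency graph has bounded degree in $\Delta$, and the general LLL certifies existence of a partition meeting all bounds simultaneously. For the algorithmic version, the partition is found either by direct random sampling (when Chernoff suffices) or by the Moser--Tardos algorithm / Theorem~\ref{soft} (when the LLL is needed), after which Theorem~\ref{improvement_mike}'s algorithm is invoked on each $G[V_i]$; a union bound over the $k \le n$ parts preserves the $1-1/n^c$ success probability after adjusting $c$.
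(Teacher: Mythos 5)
There is a genuine gap, in fact two. First, the concentration you claim for the per-part neighborhood edge counts does not hold. The indicators of the events ``both endpoints of $e$ land in $V_i$'' for $e\in E_v$ are not independent (edges of $E_v$ share endpoints), and, more importantly, the multiplicative upper-tail statement you need --- every vertex $v$ has $T_{i}(v)\le O(1)\cdot\Delta^2/(fk^2)$ in its own part --- is simply false for worst-case graphs, whether one tries a union bound or the LLL. For instance, if many of the $\Delta^2/f$ edges inside $N(v)$ are incident to a single vertex $u$ (possible whenever $\Delta^2/f\lesssim\Delta$, i.e.\ $f\gtrsim\Delta$, squarely inside your intermediate range), then conditioning on the single event that $u$ lands in $v$'s part, which has probability $1/k$, already raises the conditional expectation of $T_i(v)$ to about $\Delta^2/(fk)$, a factor $\approx k$ above your target; so the bad event has probability at least on the order of $1/k$, far too large for an LLL whose dependency degree is $\mathrm{poly}(\Delta)$. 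Consequently $f_i=\Omega(f)$ cannot be guaranteed, and the parts need not even satisfy the hypothesis of Theorem~\ref{improvement_mike}. This is precisely the difficulty that the paper, following \cite{alon1999coloring}, handles in Lemma~\ref{subgraphs_few_triangles} with the ``bad neighbor'' machinery: vertices with many common neighbors are split off and colored with $O(1)$ extra colors via a bounded-outdegree orientation, and for the remaining good vertices one proves only a \emph{constant} bound on the per-part neighborhood edges (via a max-degree-or-large-matching counting argument), rather than a bound proportional to the scaled-down mean. Your proposal has no substitute for this step.

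Second, the small-$f$ regime is not covered. Since $f_\epsilon$ may depend only on $\epsilon$, you must handle constant and polylogarithmic $f$, and there your choice of $k$ forces $\Delta/k\approx f^{(1+2\epsilon')/(2+2\epsilon')}$ up to logarithmic corrections, which is a constant (and in any case $o(\ln\Delta)$ once $f$ is at most polylogarithmic in $\Delta$). Then the bad event ``$\Delta_i(v)>2\Delta/k$'' has probability $\exp(-\Theta(\Delta/k))$, i.e.\ constant or $\Delta^{-o(1)}$, while the dependency degree is $\mathrm{poly}(\Delta)$, so your LLL fallback fails just as the Chernoff-plus-union-bound route does; moreover $\Delta/k$ may then be below the $\Delta_{\epsilon'}$ required by Theorem~\ref{improvement_mike}. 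The factor-$2$ slack only buys you down to $\Delta/k=\Omega(\ln\Delta)$, not to constants. The paper sidesteps this regime by a different mechanism: for $f<\Delta^{(2+\epsilon^2)\epsilon}$ it performs iterated random halvings (Lemmas~\ref{bisection} and~\ref{param_lemma}), stopping while the parts still have degree at least a large power of $f$, and only then invokes the separately proven dense case (Theorem~\ref{sparse_case}); the constant $2$ in the final bound comes from the $\Delta/\ln\Delta+(1+o(1))\Delta/\ln\sqrt{f}$ accounting in that reduction, not from a Chernoff constant. A minor further point: even where your degree concentration is sharp, a union bound over vertices needs $\Delta/k\gtrsim\ln n$, which is unrelated to the size of $f$ versus $\mathrm{polylog}(\Delta)$, so the LLL (or Theorem~\ref{soft}) would be needed there as well.
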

Theorem~\ref{sparse_graphs} improves a classical result of Alon, Krivelevich and Sudakov~\cite{alon1999coloring} which established~\eqref{eq:looks_good} with an unspecified (large) constant in place of $2+\epsilon$.  
The main idea in their analysis is to break down the input graph into triangle-free subgraphs, and color each one of them separately using distinct sets of colors by applying the result of Johansson~\cite{JO}. Note that even if one  used Molloy's recent result~\cite{molloy2017list} in place of Johansson's in this scheme, the corresponding constant would still be in the thousands. Instead, we break down the graph into subgraphs with {\it few\/} triangles per neighborhood, and use Theorem~\ref{improvement_mike} to color the pieces. The proof of Theorem~\ref{sparse_graphs} can be found in Appendix~\ref{determ_graphs_proof}. 

As final remark, we note that Vu~\cite{vu2002general} proved the analogue of the main result of~\cite{alon1999coloring} (again with a large constant) for the list chromatic number. While we don't currently see how to sharpen Vu's result to an analogue of Theorem~\ref{sparse_graphs} for the list chromatic number using our techniques, we note that our Theorem~\ref{improvement_mike} improves over~\cite{vu2002general} for all $ f \ge   \Delta^{\frac{2+2\epsilon}{1+2\epsilon} } (\ln \Delta)^2$.

\section{Proof  of main theorem}\label{soft_LLL}

In Sections~\ref{witness_forests} and \ref{bounding_the_sum} we present the proof of our main result, Theorem~\ref{soft}. In Section~\ref{remarks_proofs_strategies} we show how to extend the theorem to allow flaw choice strategies other than following a fixed permutation over flaws.

Throughout this section we use standard facts about operator norms, summarized briefly in Appendix~\ref{background}.

\subsection{Charges as norms of transition matrices} 

We will first show how charges can be seen as the norms of certain transition matrices. For more concrete examples of this connection, see Appendix~\ref{app:spectral}.

Recall that for any $S \subseteq [m]$, we denote by $S^P$ and $S^N$ the subsets of $S$ that correspond to \primary\ and non-\primary~flaws, respectively. 
\begin{definition}
For every $i \in [m]$ and every set of flaw indices $S\subseteq [m]$,  let $A_i^S$ be the $|\Omega|\times|\Omega|$ matrix where $A_i^S[\sigma,\tau] = \rho_i(\sigma,\tau)$ if the set of flaws introduced by $\sigma \to \tau$ covers $S$, i.e., the set of primary flaws introduced by the transition $\sigma \to \tau$ \emph{equals} $S^P$ and the set of non-primary flaws introduced by  $\sigma \to \tau$ contains $S^N$; otherwise $A_i^S[\sigma,\tau] = 0$.
\end{definition}

Let $\|  \cdot \|_1$ denote the  matrix norm induced by the $L^1$-vector-norm, and recall that it is equal to the max column sum. Let also $M = \mathrm{diag}(\mu(\sigma))$ denote the $|\Omega| \times | \Omega |$ diagonal matrix whose entries correspond to the probability measure~$\mu$.  
Our key observation is that the charges~$\gamma_i^S$ introduced in~\eqref{formal_charge_definition}
can be expressed as
\begin{align}\label{charg_norms}
\gamma_i^S = \|  M A_i^S M^{-1} \|_1 .
\end{align}
The reader is encouraged to verify this equivalence, which is an immediate consequence of the
definitions.
\begin{remark}\label{remark:norms}
Although we are specializing here to the $\| \cdot \|_1$ norm and matrix $M = \mathrm{diag}(\mu(\sigma))$, Theorem~\ref{soft} holds for any choice of matrix  norm and  invertible matrix $M$. It is an interesting research direction whether using other norms can be useful in applications. 
\end{remark}
 
\subsection{Tracking the set of current flaws} \label{witness_forests}

We say that a trajectory $\Sigma = ( \sigma_1, \sigma_2, \ldots, \sigma_{t+1} )$  followed by the algorithm  is a  \emph{bad $t$-trajectory} if  every state $\sigma_i$, $i \in [t+1]$, is flawed. Thus, our goal is to bound the probability that the algorithm follows a bad $t$-trajectory.

Given a bad trajectory, we will track the flaws introduced into the state in each step, where a flaw is  said to ``introduce itself" whenever addressing it fails to remove it. Of the flaws introduced at each step, we disregard those that later  get eradicated collaterally, i.e., by an action addressing some other flaw. The rest form the ``witness sequence" of the trajectory, i.e., a sequence of sets of flaws. 

Fix any permutation $\pi$ on $[m]$. For any $S\subseteq [m]$, let $\pi(S) = \min_{j \in S}\pi(j)$, i.e., the lowest index in $S$ according to $\pi$. Recalling that $U(\sigma)$ is the set of indices of flaws present in $\sigma$, in the following we assume that the index of the flaw addressed in state $\sigma$ is $\pi(U(\sigma))$, which we sometimes abbreviate as $\pi(\sigma)$. Also, to lighten notation, we will denote $A \setminus \{\pi(B)\}$ by $A - \pi(B)$.

\begin{definition}\label{BC}
Let $\Sigma = (
\sigma_1, \sigma_2, \ldots, \sigma_{t+1})
$ be any bad $t$-trajectory.  
%
Let  $B_0 = U(\sigma_1)$. For $1 \le i \le t$, let
\[
B_i = U(\sigma_{i+1}) \setminus [ U (\sigma_i) - \pi(\sigma_i)]  ,
\]
i.e., $B_i$ comprises the indices of the flaws introduced in the $i$-th step. For $0 \le i \le t$, let
\[
C_i = 	\{k \in B_i \mid  \exists j \in [i+1,t] :   k \notin U(\sigma_{j+1})  \wedge  \forall \ell \in [i+1,j]:   k \ne \pi(\sigma_{\ell}) \}
 ,
\]
i.e., $C_i$ comprises the indices of the flaws introduced in the $i$-th step that get eradicated collaterally. The \emph{witness sequence} of~$\Sigma$ is the sequence of sets 
\[
w(\Sigma) =  
(B_0 \setminus C_0, B_1 \setminus C_1, \ldots, B_t \setminus C_t) . 
\]
\end{definition}

A crucial feature of witness sequences is that they allow us to recover the sequence of flaws addressed.

\begin{definition}\label{def:reconstruct}
Given an \emph{arbitrary} sequence $S_0, \ldots, S_t$ of subsets of~$[m]$,
let $S^*_1 = S_0$, while for $1 \le i \le t$, let   
\[
S^*_{i+1} =
\begin{cases}
\left[ S^*_i - \pi(S^*_i)  \right] \cup S_i & \text{if $S^*_i \neq \emptyset$}  ; \\
\emptyset	& \text{otherwise}  .
\end{cases}
\]
If $S^*_i \neq \emptyset$ for all $1 \le i \le t$, then we say that $(S_i)_{i=0}^t$ is \emph{plausible} and write $\pi(S^*_i ) = (i)$. 
\end{definition}

\begin{lemma}\label{lem:reconstruct}
If $\Sigma  = (\sigma_1, \sigma_2, \ldots, \sigma_{t+1})$ is any bad $t$-trajectory, then $w(\Sigma) = (S_0, \ldots, S_t)$ is plausible, $\pi(\sigma_i) = \pi(S^*_i ) = (i)$ for all $1 \le i \le t$, and for every flaw index $z \in [m]$, the number of times $z$ occurs in the multiset $\bigcup_{i=0}^t S_i$ minus the number of times it occurs in the multiset $\bigcup_{i=1}^t (i)$ equals $\bm{1}_{z \in S^*_{t+1}}$.
\end{lemma}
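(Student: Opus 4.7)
The plan is to prove, by induction on $i$, two invariants from which all three conclusions follow: (I) $S^*_i \subseteq U(\sigma_i)$, and (II) $\pi(\sigma_i) \in S^*_i$. These imply $\pi(S^*_i) = \pi(\sigma_i) = (i)$: by (I) the $\pi$-minimizer in $S^*_i$ has $\pi$-value at least $\pi(\pi(\sigma_i))$, while (II) supplies a candidate in $S^*_i$ achieving this bound. In particular, $S^*_i \neq \emptyset$ for every $i \le t$, giving plausibility.

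For the base case $i = 1$, we have $S^*_1 = S_0 = B_0 \setminus C_0 \subseteq B_0 = U(\sigma_1)$, and $\pi(\sigma_1) \notin C_0$ because the clause ``$\forall \ell \in [1,j]: k \neq \pi(\sigma_\ell)$'' in the definition of $C_0$ fails at $\ell = 1$ for $k = \pi(\sigma_1)$. For the inductive step I would first establish, by unrolling the recursion of Definition~\ref{def:reconstruct}, the characterization: $z \in S^*_i$ iff there is a largest $j^* \in [0, i-1]$ with $z \in S_{j^*}$ (equivalently $z \in B_{j^*}$ and $z \notin C_{j^*}$) such that $z \neq \pi(\sigma_\ell)$ for every $\ell \in [j^*+1, i-1]$. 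Using this, (II) at step $i$ follows by taking $j^*$ to be the largest index in $[0, i-1]$ with $\pi(\sigma_i) \in B_{j^*}$ (which exists because $\pi(\sigma_i) \in U(\sigma_i)$): the addressing of $\pi(\sigma_i)$ at step $i$ itself blocks $\pi(\sigma_i) \in C_{j^*}$, and the maximality of $j^*$ rules out $\pi(\sigma_i) = \pi(\sigma_\ell)$ for $\ell \in [j^*+1, i-1]$ (a successful such addressing would eject $\pi(\sigma_i)$ from all later states, contradicting $\pi(\sigma_i) \in U(\sigma_i)$; a failed one would place $\pi(\sigma_i) \in B_\ell$, contradicting the choice of $j^*$). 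Invariant (I) is obtained by the same unpacking: for $k \in S^*_i$ with $k \neq \pi(\sigma_i)$, if $k \notin U(\sigma_{i+1})$ then $k \notin C_{j^*}$ forces $k = \pi(\sigma_\ell)$ for some $\ell \in [j^*+1, i]$; the characterization eliminates $\ell \le i-1$, leaving $\ell = i$ and the contradiction $k = \pi(\sigma_i)$.

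Finally, for the multiset identity I would telescope $\chi_i := \bm{1}_{z \in S^*_i}$. A case analysis on the truth values of ``$z \in S_i$'' and ``$z = (i)$'' yields $\chi_{i+1} - \chi_i = \bm{1}_{z \in S_i} - \bm{1}_{z = (i)}$, where the two potentially violating cases are ruled out by the invariants: $z \in S^*_i \wedge z \in S_i \wedge z \neq (i)$ is impossible because $z \in B_i$ requires $z \notin U(\sigma_i) \setminus \{\pi(\sigma_i)\}$, contradicting (I) together with $z \neq \pi(\sigma_i)$; and $z \notin S^*_i \wedge z = (i)$ is impossible because $(i) = \pi(S^*_i) \in S^*_i$. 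Summing from $i = 1$ to $t$ with $\chi_1 = \bm{1}_{z \in S_0}$ gives the identity. I expect the main obstacle to be the definitional unpacking in the inductive step: the clause in $C_{j^*}$ excludes \emph{every} addressing of $z$ in the relevant interval (not merely those preceding a collateral first disappearance), so one must track failed addressings through the characterization above rather than through any naive semantic reading of $S^*_i$ as the ``surviving'' flaws in $U(\sigma_i)$.
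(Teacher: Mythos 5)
Your proof is correct, but it shares with the paper only the first invariant. The paper also proves $S^*_i \subseteq U(\sigma_i)$ by induction, yet pairs it with the exact identity $U(\sigma_i)\setminus S^*_i = H_i$, where $H_i$ is the forward-looking set of flaws present in $\sigma_i$ that are destined to be eradicated collaterally at some step $j\ge i$; with the bookkeeping relation $H_{i+1}=(H_i\setminus L_i)\cup C_i$ the inductive step becomes a one-line set computation, $\pi(\sigma_i)=\pi(S^*_i)$ follows because $\pi(\sigma_i)\notin H_i$, and the multiset identity is obtained by noting that the difference equals $\bm{1}_{z\in U(\sigma_{t+1})}$ and that $H_{t+1}=\emptyset$ forces $U(\sigma_{t+1})=S^*_{t+1}$. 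You instead use the weaker companion invariant $\pi(\sigma_i)\in S^*_i$, recovered through a backward-looking characterization of membership in $S^*_i$ (most recent non-collateral introduction, no intermediate addressing), which costs more case analysis directly against the definitions of $B_i$ and $C_i$; note that your claim that the addressing at step $i$ by itself blocks $\pi(\sigma_i)\in C_{j^*}$ handles only witnesses $j\ge i$, and for $j<i$ you need the persistence of $\pi(\sigma_i)$ in $\sigma_{j^*+1},\ldots,\sigma_i$, which is exactly the maximality-of-$j^*$ argument you already invoke elsewhere, so this is a presentational compression rather than a gap. What your route buys is the final claim: telescoping $\chi_i=\bm{1}_{z\in S^*_i}$ via $\chi_{i+1}-\chi_i=\bm{1}_{z\in S_i}-\bm{1}_{z=(i)}$ yields the multiset identity as a purely formal consequence of the recursion and the two invariants, which is more explicit than the paper's one-sentence pairing assertion; what the paper's $H_i$-invariant buys is a shorter inductive step and the extra structural fact $U(\sigma_{t+1})=S^*_{t+1}$ for free.
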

\begin{proof}
Recall that $S_i = B_i \setminus C_i$. For $1 \le i \le t+1$, let $L_i$ comprise the elements of $U(\sigma_i)$ eradicated collaterally during the $i$-th step, and let $H_i$ comprise the elements  of $U(\sigma_i)$ eradicated collaterally during any step $j \ge i$. Observe that $H_{i+1} = (H_i \setminus L_i) \cup C_i$. We will prove, by induction, that for all $1 \le i \le t+1$,
\begin{eqnarray}
S_i^* & \subseteq & U(\sigma_i) ;\label{eq:nomadness} \\
U(\sigma_i) \setminus S_i^* & = & H_i
 . \label{eq:onlychaff}
\end{eqnarray}
Note that if~\eqref{eq:nomadness} and \eqref{eq:onlychaff} hold for a given $i$, then $\pi(\sigma_i) = \pi(S_i^*)$, since $\pi(\sigma_i) \not\in H_i$  by the definition of $H_i$, and $\pi(A) = \pi(A \setminus B)$ whenever $\pi(A) \not\in B$. Moreover,  $S_i^* \neq \emptyset$, because otherwise $U(\sigma_i) = H_i$, an impossibility. To complete the proof it suffices to note that for any $z \in [m]$, the difference in question equals $\bm{1}_{z \in U(\sigma_{t+1})}$ and that $U(\sigma_{t+1}) = S_{t+1}^*$ since, by definition, $H_{t+1} = \emptyset$. The inductive proof is as follows. 

For $i=1$, \eqref{eq:nomadness} and \eqref{eq:onlychaff}  hold since $S^*_1 = B_0 \setminus C_0$, while $U(\sigma_1) = B_0$. If \eqref{eq:nomadness} and \eqref{eq:onlychaff}  hold for some $i \geq 1$, then $S^*_{i+1} = \left[ S^*_i - \pi(\sigma_i)  \right] \cup S_i$ while, by definition, $U(\sigma_{i+1}) = [ \left( U(\sigma_i) - \pi(\sigma_i) \right)\setminus L_i] \cup B_i$. Thus, the fact that $S_i^* \subseteq U(\sigma_i)$ trivially implies $S_{i+1}^* \subseteq U(\sigma_{i+1})$, while
\[
U(\sigma_{i+1}) \setminus S_{i+1}^* = ( ( U(\sigma_{i}) \setminus S_{i}^* ) \setminus L_i) \cup (B_i \setminus S_i) = (H_i \setminus L_i) \cup C_i = H_{i+1}  .
\]
\end{proof}

The first step in our proof of Theorem~\ref{soft} is to give an upper bound on the probability that a given witness sequence occurs in terms of the charges $\gamma_i^S$.  In particular, and in order to justify Remark~\ref{remark:norms}, we will use an arbitrary norm $\| \cdot \|$ and invertible matrix $M$. 

Recall that $\| \cdot \|_*$ denotes the \emph{dual} of norm $\| \cdot \|$ and let $\theta^{\top} \in [0,1]^{ | \Omega | } $ denote the row vector corresponding to the probability distribution of the initial state $\sigma_1$. Moreover, for a state $\sigma $, let $e_{\sigma} $ denote the indicator vector of $\sigma$, i.e., $e_{\sigma}[ \sigma ] = 1$ and $e_{\sigma}[ \tau] = 0$ for all $\tau \in \Omega \setminus \{ \sigma \}$. 

\begin{lemma}\label{issuing_charges_lemma}
Fix any integer $t \ge 0$ and let $\Sigma$ be the random variable $(\sigma_1, \ldots, \sigma_{t+1})$.
Fix any arbitrary invertible matrix $M$ and operator norm $\| \cdot\|$, and let $\lambda_i^S = \| M A_i^S M^{-1} \|$. For any plausible sequence $\phi = (S_0, \ldots, S_t)$,
\begin{align}\label{weighted_sum_backtracking}
\Pr[w(\Sigma) = \phi]
\le
\| \theta^{\top} M^{-1} \|_*   \left(   \sum_{ \tau \in \Omega} \| M e_{\tau}   \| \right) \prod_{i = 1}^{t} \lambda_{(i)}^{S_i}   .
\end{align}
\end{lemma}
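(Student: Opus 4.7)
My plan is to express $\Pr[w(\Sigma)=\phi]$ as a sum of products of single-step transition probabilities, identify it with an entry of the matrix product $\theta^\top A_{(1)}^{S_1}\cdots A_{(t)}^{S_t}\bm{1}$, and then apply a change of basis by $M$ together with dual-norm and operator-norm inequalities to produce the factors $\lambda_{(i)}^{S_i}=\|MA_{(i)}^{S_i}M^{-1}\|$. The role of primary flaws will be essential in identifying the correct matrix at each step.

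First I would invoke Lemma~\ref{lem:reconstruct} to note that the plausible sequence $\phi=(S_0,\dots,S_t)$ alone determines the sequence of addressed flaws $(1),\dots,(t)$ via $(i)=\pi(S^*_i)$. Consequently, every trajectory $\Sigma=(\sigma_1,\dots,\sigma_{t+1})$ with $w(\Sigma)=\phi$ is one in which the algorithm addresses flaw $(i)$ at step $i$, so the $i$-th step contributes probability $\rho_{(i)}(\sigma_i,\sigma_{i+1})$. I then need to show that this transition is captured by the matrix entry $A_{(i)}^{S_i}[\sigma_i,\sigma_{i+1}]$. Since $S_i=B_i\setminus C_i$, the key point is that primary flaws cannot be eradicated collaterally, so $C_i$ contains only non-primary flaws; hence $B_i^P=S_i^P$ and $B_i^N\supseteq S_i^N$, meaning that $B_i$ \emph{covers} $S_i$, which is exactly the condition for $A_{(i)}^{S_i}[\sigma_i,\sigma_{i+1}]=\rho_{(i)}(\sigma_i,\sigma_{i+1})$. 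Summing the resulting products over all trajectories with $w(\Sigma)=\phi$ and then relaxing the sum to all $(\sigma_1,\dots,\sigma_{t+1})\in\Omega^{t+1}$ yields
\begin{equation*}
\Pr[w(\Sigma)=\phi]\;\le\;\theta^\top A_{(1)}^{S_1}A_{(2)}^{S_2}\cdots A_{(t)}^{S_t}\,\bm{1}.
\end{equation*}

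To conclude, I will insert $M^{-1}M$ between every consecutive pair of factors (and before the first/after the last matrix) and rewrite the bound as
\begin{equation*}
\bigl(\theta^\top M^{-1}\bigr)\,\prod_{i=1}^{t}\bigl(MA_{(i)}^{S_i}M^{-1}\bigr)\,(M\bm{1}).
\end{equation*}
The dual-norm inequality detaches the row vector $\theta^\top M^{-1}$, submultiplicativity of $\|\cdot\|$ across the $t$ conjugated factors contributes $\prod_{i=1}^{t}\lambda_{(i)}^{S_i}$, and a single application of the triangle inequality gives $\|M\bm{1}\|\le\sum_{\tau\in\Omega}\|Me_\tau\|$, producing exactly the bound~\eqref{weighted_sum_backtracking}.

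The main obstacle will be justifying the covering claim in the middle step: without the primary/non-primary distinction, one could only guarantee the set containment $B_i\supseteq S_i$ rather than the full covering relation, and the matrices $A_{(i)}^{S_i}$, which demand equality on the primary part of the introduced flaws, would fail to capture every trajectory with $w(\Sigma)=\phi$. The fact that collateral eradication is by definition impossible for primary flaws is precisely what closes this gap, and is one of the main conceptual payoffs of introducing primary flaws in the first place.
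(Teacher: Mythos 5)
Your proposal is correct and follows essentially the same route as the paper: use Lemma~\ref{lem:reconstruct} together with the fact that $C_i^P=\emptyset$ (primary flaws are never eradicated collaterally) to conclude $B_i$ covers $S_i$, so the probability of the event is dominated by $\theta^{\top}\prod_{i=1}^t A^{S_i}_{(i)}$ applied to a nonnegative vector, and then conjugate by $M$ and apply the dual-norm inequality and submultiplicativity. The only cosmetic difference is that you bound against the all-ones vector and invoke the triangle inequality $\|M\bm{1}\|\le\sum_{\tau}\|Me_{\tau}\|$, whereas the paper fixes $\sigma_{t+1}=\tau$, bounds each term against $\|Me_{\tau}\|$, and sums over $\tau$; these are equivalent.
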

\begin{proof}
%
%
%
%
%
%
%
Recall that for any $S \subseteq [m]$, we denote by $S^P$ and $S^N$ the subsets of $S$ that correspond to \primary\ and non-\primary~flaws, respectively. By Definition~\ref{BC} and Lemma~\ref{lem:reconstruct}, a necessary condition for $w(\Sigma) = \phi$ to occur is that $(i) \in U(\sigma_i)$ and $S_i \subseteq B_i$, for every $1 \le i \le t$.  Moreover, since \primary~flaws are never eradicated collaterally, i.e., $C_i^P = \emptyset$ always, it must also be that $S_i^P = B_i^P$ for $1 \le i \le t$. 
Fix any state $\tau \in \Omega$. The  probability that 
$(1) \in U(\sigma_1)   
\wedge 
S_{1}^P =
B_1^P(\Sigma)
\wedge
S_{1}^N \subseteq 
B_1^N(\Sigma) \wedge 
\sigma_2  = \tau $ 
equals the $\tau$-column (coordinate) of the row-vector $\theta^{\top} A_{(1)}^{S_{1}}$. More generally, we see that for any $t\ge 1$,
\begin{align}
\Pr\left[ \bigwedge_{i=1 }^t   \left( (i)  \in U(\sigma_i)  \right)  \bigwedge_{i=1}^{t} \left( S_{i}^P = B_i^P \right)   \bigwedge_{i=1}^{t} \left(  S_i^N \subseteq B_i^N \right)      \bigwedge \sigma_{t+1} = \tau \right]  = \theta^{\top} \prod_{i=1}^t A^{S_{i}}_{(i)} e_{\tau} .
\end{align} 
Consider now any  vector norm $\|  \cdot \|$ and the corresponding operator norm. By~\eqref{basic_inequality},
\begin{align}\label{eq:meat}
 \theta^{\top} \prod_{i=1}^t A_{(i)}^{S_{i}} e_{\tau} = 
\theta^{\top} M^{-1} \left(\prod_{i=1}^t M A^{S_{i}}_{(i)}M^{-1}\right) M e_{\tau}
& \le   
\left|\left|
\theta^{\top} M^{-1} \left(\prod_{i=1}^t M A^{S_{i}}_{(i) }M^{-1}\right) \right|\right|_{*}  \| M e_{\tau}  \|  .
\end{align}
Summing~\eqref{eq:meat} over all $\tau \in \Omega$  we conclude that
\begin{align}\label{vectors_vs_matrices}
\Pr[ w(\Sigma)=\phi] =
\sum_{\tau \in \Omega} \Pr[ w(\Sigma)=\phi \wedge \sigma_{t+1} = \tau ]  
\le
 \left|\left|    \theta^{\top} M^{-1} \prod_{i=1}^t   M A^{S_{i}}_{(i) }M^{-1} \right|\right|_{*}   \sum_{\tau \in \Omega }  \|M  e_{ \tau } \|.
\end{align}
Applying~\eqref{submult_1} and then~\eqref{submult_2} to~\eqref{vectors_vs_matrices}, and recalling the definition of $\lambda^{S_{i}}_{(i)}$, we conclude that
\[
\Pr[ w(\Sigma)=\phi] \le   \| \theta^{\top} M^{-1}       \|_* \left( \sum_{\tau \in \Omega }  \|M  e_{ \tau } \| \right)    \prod_{i=1}^t  \| M A^{S_{i}}_{(i) } M^{-1}   \|  
= \| \theta^{ \top} M^{-1} \|_*    \left( \sum_{\tau \in \Omega }  \|M  e_{ \tau } \| \right)   \prod_{i=1}^t \lambda^{S_{i}}_{(i) }  	 ,
\]
as claimed.
\end{proof}

Now define $$
  \mathcal{F}_t = \{w(\Sigma) : \text{$\Sigma$ is a bad $t$-trajectory of the algorithm}\}.  $$ 
Since $\mathcal{F}_t$ contains only plausible sequences, an immediate  corollary of Lemma~\ref{issuing_charges_lemma} is a bound on the probability that the algorithm fails in $t$ steps.
\begin{corollary}\label{corollary_basic}
The probability  that the algorithm fails to reach a flawless state within $t$ steps is at most
\begin{align}\label{final_step}
\left( \max_{ \sigma \in \Omega } \frac{ \theta(\sigma) }{ \mu(\sigma)  } \right)   \cdot \sum_{ \phi \in \mathcal{F}_t} \prod_{i=1}^t \gamma^{S_i}_{(i)} .
\end{align}
\end{corollary}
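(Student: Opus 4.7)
The plan is to obtain Corollary~\ref{corollary_basic} as a direct instantiation of Lemma~\ref{issuing_charges_lemma} with two specific choices: the invertible matrix $M = \mathrm{diag}(\mu(\sigma))$ and the operator norm $\|\cdot\|_1$ (whose dual is $\|\cdot\|_\infty$). With these choices, each of the three factors on the right-hand side of \eqref{weighted_sum_backtracking} collapses into a form that matches the claimed bound.

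First, I would observe that the event ``the algorithm fails to reach a flawless state within $t$ steps'' is exactly the event that the algorithm follows a bad $t$-trajectory, i.e.\ that the witness sequence $w(\Sigma)$ lies in $\mathcal{F}_t$. Since witness sequences of distinct bad trajectories may coincide but the event $\{w(\Sigma) = \phi\}$ depends only on $\phi$, a union bound gives
\[
\Pr[\text{failure in } t \text{ steps}] \le \sum_{\phi \in \mathcal{F}_t} \Pr[w(\Sigma) = \phi].
\]
Each summand can then be controlled by Lemma~\ref{issuing_charges_lemma}, applied to the plausible sequence $\phi = (S_0, \ldots, S_t)$.

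Next, I would carry out the three computations that identify the factors. By \eqref{charg_norms}, $\lambda_i^S = \|M A_i^S M^{-1}\|_1 = \gamma_i^S$, which takes care of the product $\prod_{i=1}^t \lambda_{(i)}^{S_i}$. For the leading factor, since $\|\cdot\|_1$ has dual $\|\cdot\|_\infty$, and $M^{-1} = \mathrm{diag}(1/\mu(\sigma))$, we get
\[
\|\theta^\top M^{-1}\|_* = \|\theta^\top M^{-1}\|_\infty = \max_{\sigma \in \Omega}\frac{\theta(\sigma)}{\mu(\sigma)}.
\]
For the middle factor, $M e_\tau$ is the vector with entry $\mu(\tau)$ at coordinate $\tau$ and zero elsewhere, so $\|M e_\tau\|_1 = \mu(\tau)$, whence $\sum_{\tau \in \Omega} \|M e_\tau\|_1 = \sum_\tau \mu(\tau) = 1$ (assuming $\mu$ normalized; otherwise scale $\mu$, noting that the charges $\gamma_i^S$ are invariant under rescaling of $\mu$). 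Combining these three pieces with Lemma~\ref{issuing_charges_lemma} yields
\[
\Pr[w(\Sigma) = \phi] \le \left(\max_{\sigma \in \Omega}\frac{\theta(\sigma)}{\mu(\sigma)}\right) \prod_{i=1}^t \gamma_{(i)}^{S_i},
\]
and summing over $\phi \in \mathcal{F}_t$ gives the stated bound.

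There is no real obstacle here: the entire content of the corollary is encapsulated in Lemma~\ref{issuing_charges_lemma} and the identification \eqref{charg_norms} of charges as matrix norms. The only subtlety worth flagging explicitly is the normalization of $\mu$, but this is immaterial because rescaling $\mu \mapsto \mu/Z$ leaves each $\gamma_i^S$ unchanged while scaling $\max_\sigma \theta(\sigma)/\mu(\sigma)$ by $Z$ and $\sum_\tau \mu(\tau)$ by $1/Z$, so the overall bound is invariant.
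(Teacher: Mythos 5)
Your proposal is correct and follows the paper's own proof essentially verbatim: instantiate Lemma~\ref{issuing_charges_lemma} with $M = \mathrm{diag}(\mu(\sigma))$ and the $\|\cdot\|_1$-norm, identify $\|\theta^\top M^{-1}\|_\infty = \max_{\sigma}\theta(\sigma)/\mu(\sigma)$ and $\sum_\tau \|Me_\tau\|_1 = 1$, and sum over the plausible sequences in $\mathcal{F}_t$. Your explicit union bound over $\phi$ and the remark on rescaling $\mu$ are fine clarifications of steps the paper leaves implicit.
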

\begin{proof}
We apply Lemma~\ref{issuing_charges_lemma} with $M = \mathrm{diag}(\mu(\sigma) )$ and the $\| \cdot \|_1$-norm.  Since the dual norm of $\|  \cdot \|_1$ is $\|  \cdot \|_{\infty}$, we have $\|  \theta^{\top} M^{-1} \|_{\infty} = \max_{ \sigma \in \Omega } \frac{ \theta(\sigma) }{ \mu(\sigma)  }$. Combining this with the fact that $\sum_{ \tau \in \Omega } \| M e_{\tau} \|_1 = 1$ concludes the proof.
\end{proof}

Thus, to complete the proof of Theorem~\ref{soft} we are left with the task of bounding the sum in~\eqref{final_step}.

\subsection{Bounding the sum}\label{bounding_the_sum}

Given $\psi_1, \ldots, \psi_m>0$ and $S \subseteq [m] $, let  $\pot(S) = \prod_{j \in S}   \psi_{j}$, with $\pot( \emptyset) = 1$. 
For each $i \in [m]$, let
\[
\zeta_i = \frac{1}{\psi_i} \sum_{S \subseteq [m]} \gamma^S_i \pot(S)  . 
\]
Finally, for each $i \in [m]$ consider the distribution on $2^{[m]}$ that assigns to each $S \subseteq [m]$ the probability
\[
p(i,S) := \frac{\gamma^S_i \pot(S)}{\sum_{S \subseteq [m]} \gamma^S_i \pot(S)} =
\frac{\gamma^S_i \pot(S)}{\zeta_i \psi_i} .
\]
For any $S_0 \subseteq [m]$, let $\mathcal{F}_t(S_0)$ comprise the witness sequences in $\mathcal{F}_t$ whose first set is $S_0$.  Consider the probability distribution on sequences of subsets of $[m]$ generated as follows: $R_1 = S_0$; for $i \ge 1$, if $R_i \neq \emptyset$, then $R_{i+1} = (R_{i} - \pi(R_i)) \cup S_i$, where $\Pr[S_i = S] = p(\pi(R_i), S)$ for any $S \subseteq [m]$. Under this distribution, by Lemma~\ref{lem:reconstruct}, each $\phi =(S_0, \ldots, S_t) \in  \mathcal{F}_t(S_0)$ receives probability $p_{\phi} = \prod_{i=1}^t p((i),S_i)$, while $\sum_{\phi \in \mathcal{F}_t(S_0)} p_{\phi} \le 1$. At the same time, by the last claim in Lemma~\ref{lem:reconstruct},
\begin{equation}\label{eq:kolpa}
p_{\phi} = \prod_{i=1}^t p((i),S_i) = 
\left( \prod_{i=1}^t p((i),S_i) \frac{\psi_{(i)}}{\pot(S_i)}\right) \frac{  \pot(S^*_{t+1})}{\pot(S_0)} 
=
\frac{  \pot(S^*_{t+1})   }{\pot(S_0)} \prod_{i=1}^t \frac{\gamma_{(i)}^{S_i}}{\zeta_{(i)}} .
\end{equation}
Combining~\eqref{eq:kolpa} with the fact $\sum_{\phi \in \mathcal{F}_t(S_0)} p_{\phi} \le 1$, we get
\begin{equation}\label{eq:kialla}
\sum_{\phi \in \mathcal{F}_t(S_0)} \prod_{i=1}^t \frac{\gamma_{(i)}^{S_i}}{\zeta_{(i)}}  \le 
\max_{S \subseteq [m]} \frac{\pot(S_0)}{\Psi(S)}  .
\end{equation}

Let $\zeta = \max_{i \in [m] } \zeta_i$. Then, summing equation~\eqref{eq:kialla} over all possible sets $S_0$ yields   
\begin{equation} \label{eq:spanos}
\sum_{\phi \in \mathcal{F}_t} 
\prod_{ i=1 }^t \gamma_{(i)}^{S_i} =
\sum_{S_0 \subseteq \mathrm{Span}(\theta)}
\sum_{\phi \in \mathcal{F}_t(S_0)} 
\prod_{ i=1 }^t \gamma_{(i)}^{S_i} \le 
\zeta^t
\sum_{S_0 \subseteq \mathrm{Span}(\theta)}
\sum_{\phi \in \mathcal{F}_t(S_0)} 
\prod_{ i=1 }^t \frac{\gamma_{(i)}^{S_i}}{\zeta_{(i)}} 
\le  
\max_{S \subseteq [m]}\sum_{S_0 \subseteq \mathrm{Span}(\theta)}  \frac{\pot(S_0)}{\Psi(S)}. 
\end{equation}

\begin{proof}[Proofs of Theorem~\ref{soft} and Remark~\ref{cor:faster_back} ]
Combining~\eqref{eq:spanos} with Corollary~\ref{corollary_basic}, we see that the binary logarithm of the probability that the algorithm does not encounter a flawless state within $t$ steps is at most $t  \log_2 \zeta + T_0$, where 
\begin{eqnarray*}
 T_0 		& =&   \log_2 \left( \max_{\sigma \in \Omega}  \frac{ \theta(\sigma) }{\mu(\sigma) } \right) +   \log_2 \left( \sum_{S \subseteq  \mathrm{Span}(\theta)} \pot(S)  \right) + \log_2  \left(  \max_{S \subseteq [m] } \frac{1}{ \pot(S) } \right)   .
\end{eqnarray*}
Therefore, if $t = (T_0 + s) / \log_2 (1/\zeta) \le (T_0 + s) / \delta$, the probability that the algorithm does not reach a flawless state within $t$ steps is at most  $2^{-s}$. This concludes the  proofs of the first part of Remark~\ref{cor:faster_back} and Theorem~\ref{soft}, since $\max_{\sigma \in \Omega} \theta(\sigma) \le 1$ and
\begin{align*}
 \log_2 \left( \sum_{S \subseteq  \mathrm{Span}(\theta)} \pot(S)  \right) + \log_2  \left(  \max_{S \subseteq [m] } \frac{1}{ \pot(S) } \right) \le \log_2  \frac{ \prod_{i =1 }^{m}  \left( 1 + \psi_i \right)}{ (\psi_{\min })^{ m}  }  \le m \log_2 \left( \frac{1 + \psi_{\max} }{ \psi_{\min}} \right) .
\end{align*}

To see the second part of Remark~\ref{cor:faster_back}, let $\mathcal{I}(\theta)$ denote the set comprising the sets of flaw-indices that may be present in a state selected according to $\theta$.  Recall now that when every flaw is \primary, the only equivalence classes of $\mathcal{F}_t$ that contribute to the sum in~\eqref{eq:spanos} are those for which $S_0 \in \mathcal{I}(\theta)$. Thus, for backtracking algorithms, the sum over $S \subseteq  \mathrm{Span}(\theta)$ in the definition of $T_0$ can be restricted to $S \in \mathcal{I}(\theta)$. Finally,  if every flaw is always present in the initial state and $\psi_i \in (0,1]$ for every $i \in [m]$, then  $I(\theta) = \{ F \} $ and $ \log \left( \frac{1}{ \max_{ S \subseteq [m] } \prod_{j \in S } \psi_j  } \right)  = - \log_2 \prod_{j \in [m] } \psi_j  $. This implies that the second and third term in the expression for~$T_0$ in Remark~\ref{cor:faster_back} cancel out, concluding its proof.  

\end{proof}

%

\subsection{Other flaw choice strategies}\label{remarks_proofs_strategies}

The only place where we used the fact that the flaw choice is based on a fixed permutation was to assert, in Lemma~\ref{lem:reconstruct}, that the witness sequence of a trajectory determines the sequence of addressed flaws. Thus, our analysis is in fact valid for every flaw choice strategy that shares this property.

One example of such a strategy is  ``pick a random occurring flaw and address it". To implement this, we can fix a priori an infinite sequence of uniformly random permutations $\pi_1, \pi_2, \ldots $ and at the $i$-th step address the lowest indexed flaw present according to $\pi_i$. It is straightforward to see that Lemma~\ref{lem:reconstruct} still holds if we replace $\pi$ with $\pi_i$ therein and in Definition~\ref{def:reconstruct}. 

As a second example, consider the following recursive way to chose which flaw to address at each step (which makes the algorithm non-Markovian). The algorithm now maintains a stack. The flaws present in $\sigma_1$, ordered according to some permutation $\pi$, comprise the initial stack contents. The algorithm starts by addressing the flaw at the top of the stack, i.e., $\pi(\sigma_1)$, as before. Now, though, any flaws introduced in the $i$-th step, i.e., the elements of $B_i$, go on the top of the stack (ordered by $\pi$), while all eradicated flaws are removed from the stack. The algorithm terminates when the stack empties. It is not hard to see that, by taking $S_0$ to be the initial stack contents, popping the flaw at the top of the stack at each step, and adding $S_i$ to the top of the stack (ordered by $\pi$), the sequence of popped flaws is the sequence of addressed flaws.

\section{Graph coloring proofs}\label{proof_improvement_mike}

\subsection{The algorithm} 

To prove Theorem~\ref{improvement_mike} we will generalize the algorithm of Molloy~\cite{molloy2017list} for coloring triangle-free graphs. The main issue we have to address is that in the presence of triangles, the natural generalization of Molloy's algorithm introduces monochromatic edges when the neighborhood of a vertex is recolored. As a result, the existing analysis fails completely even if each vertex participates in just one triangle. To get around this problem, we introduce backtracking steps into the algorithm, whose analysis is enabled by our new convergence condition, Theorem~\ref{soft}.

For each vertex $v \in V$, let $N_v$ denote the neighbors of $v$ and let $E_v = \{\{u_1, u_2\}: u_1,u_2 \in N_v\}$ denote the edges spanned by them. Recall that the color-list of $v$ is denoted by $\mathcal{L}_v$. It will be convenient to treat $\mathrm{Blank}$ also as a color. Indeed, the initial distribution $\theta$ of our algorithm assigns all its probability mass to the state where every vertex is colored $\mathrm{Blank}$. Whenever assigning a color to a vertex creates monochromatic edges, the algorithm will immediately uncolor enough vertices so that no monochromatic edge remains. Edges with two $\mathrm{Blank}$ endpoints are not considered monochromatic. To uncolor a vertex $v$, the algorithm picks a monochromatic edge $e$ incident to $v$ and assigns $e$ to $v$ instead of a color, thus also creating a record of the reason for the uncoloring. Thus, 
\[
\Omega \subseteq \prod_{v \in V}\{\mathcal{L}_v \cup \{\mathrm{Blank}\} \cup E_v \} .
\]
Let $L = (1 + \epsilon)  \frac{ \Delta }{ \ln f } f^{ - \frac{1}{2+2\epsilon} } $ and assume $\Delta$ is sufficiently large so that $L \ge 10$.

\subsubsection{The flaws}

We let $L_v(\sigma) \subseteq (\mathcal{L}_v  \cup \{\mathrm{Blank}\})$  be the set of colors we can assign to $v$ in state $\sigma$ without creating any monochromatic edge. We call these the \emph{available colors for $v$ in  $\sigma$} and note that $\mathrm{Blank}$ is always available. For each $v \in V$, we define a flaw expressing the fact that there are ``too few available colors for $v$,'' namely
\[
B_v  = \left\{  \sigma \in \Omega:  |L_v(\sigma) | < L \right\} .
\]

For each color $c$ other than $\mathrm{Blank}$, let $T_{v,c}(\sigma)$ be the set of $\mathrm{Blank}$ neighbors of $v$ for which $c$ is available in $\sigma$, 
i.e., the vertices that may ``compete" with $v$ for color $c$. For each $v \in V$, we define a flaw expressing the fact that there is ``too much competition for $v$'s available (real) colors,'' namely 
\[
Z_v =  
\left\{  
	\sigma \in \Omega: \sum_{c \in L_v(\sigma) \setminus \mathrm{Blank}
				} 
			| T_{v,c}(\sigma) | 
			> 				
			\frac{L}{10} |L_v(\sigma)| 
\right\}  .
\]

Finally, for each $v\in V$ and $e\in E$ we define a flaw for the fact that $v$ is uncolored (because of $e$), namely
\[
f_{v}^e =  \left\{\sigma \in \Omega: \sigma(v) = e \right\}  .
\]
Let $F_v = B_v \cup Z_v \cup \bigcup_{e \in E} f_v^e$ and let $\Omega^{+} = \Omega -\bigcup_{v \in V} F_v$; thus $\Omega^{+}$ denotes the partial colorings that do not suffer from any of the above flaws.

\begin{lemma}[\cite{molloy2017list}]\label{sufficient_partial_coloring} Given $\sigma \in \Omega^+$, a complete list-coloring of $G$ can be found efficiently.
\end{lemma}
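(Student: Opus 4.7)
The plan is to reduce $\sigma\in\Omega^+$ to a residual list-coloring instance on the Blank vertices and finish via the Phase~2 argument of Molloy~\cite{molloy2017list}.

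First I would unpack what $\sigma\in\Omega^+$ provides. Since no $f_v^e$ holds, every vertex is either properly list-colored or Blank, and no edge among already-colored vertices is monochromatic. Since no $B_v$ holds, $|L_v(\sigma)|\ge L$ at every $v$. Since no $Z_v$ holds, the competition bound
\[
\sum_{c\in L_v(\sigma)\setminus\{\mathrm{Blank}\}}|T_{v,c}(\sigma)|\;\le\;\tfrac{L}{10}\,|L_v(\sigma)|
\]
holds at every $v$. Completing $\sigma$ to a proper list-coloring of $G$ is thus equivalent to properly list-coloring the subgraph $H$ of $G$ induced by the Blank vertices, assigning each Blank $v$ the reduced list $\mathcal{L}'_v := L_v(\sigma)\setminus\{\mathrm{Blank}\}$ of size at least $L-1$, because by definition of $L_v(\sigma)$ any color in $\mathcal{L}'_v$ is compatible with every already-colored neighbor of $v$.

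Second, I would verify Molloy's Phase~2 sparsity hypothesis under the uniform sampler on the $\mathcal{L}'_v$. Using $|\mathcal{L}'_u|\ge L-1$ for every Blank neighbor $u$ of $v$ together with the competition bound,
\[
\sum_{u\in N_H(v)} \Pr[uv\text{ monochromatic}]
\;\le\;
\frac{1}{(L-1)(|L_v(\sigma)|-1)}\sum_{c\in\mathcal{L}'_v}|T_{v,c}(\sigma)|
\;<\;
\tfrac{1}{8},
\]
where the final inequality uses $|L_v(\sigma)|\ge L\ge 10$ to absorb the constant. Hence the expected number of monochromatic edges incident to any single vertex is bounded by an absolute constant strictly less than $1$, independently of the list sizes.

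Third, I would invoke Molloy's Phase~2 lemma, whose hypotheses are exactly ``lists of size $\ge L-1$'' and ``per-vertex expected monochromatic count bounded by a constant $<1$'', both of which we have just verified. Its conclusion is that the natural Moser--Tardos resampling algorithm (pick any monochromatic edge and resample its two endpoints under the uniform sampler) produces a proper $\mathcal{L}'$-coloring of $H$ in $O(|V|)$ expected steps with the standard exponential tail, yielding the efficient completion of $\sigma$ required. The main obstacle, and the whole technical content of Molloy's Phase~2 argument, is that neither the symmetric LLL (which would need lists of size $\Omega(\Delta)$ against dependency degree $\Theta(\Delta)$) nor Haxell-type sufficient conditions close for $L=\tilde\Theta(\Delta/\ln\Delta)$; Molloy closes the LLL inequality via a refined cluster-expansion/entropy-compression argument that replaces the raw degree by the per-vertex sparsity bound displayed above. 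Since that argument consumes only the two conditions $|L_v(\sigma)|\ge L$ and the competition bound, it transfers verbatim to our setting.
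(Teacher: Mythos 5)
Your proposal is correct and takes essentially the same route as the paper, which gives no standalone proof but cites Molloy's second phase and remarks that it is a fairly standard application of the (algorithmic) LLL: your reduction to the residual list-coloring instance on the Blank vertices with lists $L_v(\sigma)\setminus\{\mathrm{Blank}\}$ of size at least $L-1$, followed by verification of the sparsity hypothesis from the absence of the $B_v$ and $Z_v$ flaws, is exactly what that application consumes. One factual correction that does not affect the logic: Molloy's Phase~2 is a plain asymmetric-LLL / Moser--Tardos argument (entropy compression and the refined counting are the content of his \emph{first} phase), but since you invoke the lemma only as a black box with the correct hypotheses, and your computed per-vertex clash bound of $<1/8$ is strong enough for the standard LLL condition, the argument goes through.
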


The proof of Lemma~\ref{sufficient_partial_coloring} is a fairly standard application of the (algorithmic) LLL, showing that $\sigma$ can be extended to a complete list-coloring by coloring  all $\mathrm{Blank}$ vertices with actual colors. Thus, the heart of the matter is reaching a state in~$\Omega^{+}$ (i.e., a partial coloring avoiding all the above flaws).

\subsubsection{The flaw choice}

The algorithm can use any $\pi$-strategy in which every $B$-flaw has priority over every $f$-flaw.

\subsubsection{The actions}

To address $f_v^e$ at $\sigma$, i.e., to color $v$, the algorithm simply chooses a color from $L_v(\sigma)$ uniformly at random and assigns it to~$v$. The fact that $B$-flaws have higher priority than $f$-flaws implies that there are always at least $L$ such choices. 

Addressing $B$- and $Z$- flaws is significantly more sophisticated. For each  vertex $v$, for each vertex $u \in N_v$, let $R_u^v(\sigma) \supseteq L_u(\sigma)$ comprise those colors having the property that assigning them to $u$ in state $\sigma$ creates no monochromatic edge except, perhaps, in $E_v$. To address either $B_v$ or $Z_v$ in $\sigma$, the algorithm selects an action according to the following procedure:
\begin{algorithm}
\begin{algorithmic}[1]  
\Procedure{Recolor}{$v, \sigma$}
\State Assign to each colored vertex $u$  in $N_v$ a uniformly random color from $R_u^v(\sigma)$ \label{refresh:color}
\While {monochromatic edges exist}
	\State Let $u$ be the lowest indexed vertex participating in a monochromatic edge \label{refresh:pick_u}
	\State Let $e$ be the lowest indexed monochromatic edge with $u$ as an endpoint 
	\State Uncolor $u$ by assigning $e$ to $u$ \label{refresh:uncolor}
\EndWhile
\EndProcedure
\end{algorithmic}
\end{algorithm}


\begin{lemma}\label{bijection}
Let $S'(v,\sigma)$ be the set of colorings that can be reached at the end of Step~\ref{refresh:color} of {\sc recolor($v,\sigma$)} and let $S''(v,\sigma)$ be the set of possible final colorings. Then $|S'(v,\sigma)| = |S''(v,\sigma)|$. 
\end{lemma}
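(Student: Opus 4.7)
The plan is to establish a bijection between $S'(v,\sigma)$ and $S''(v,\sigma)$. The forward map $\phi$ that sends each Step-\ref{refresh:color} coloring to the final coloring produced by the (deterministic) while loop is well-defined and surjective by definition, so it will suffice to prove injectivity by exhibiting an explicit procedure that recovers the Step-\ref{refresh:color} coloring from any final state $\tau \in S''(v,\sigma)$.

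The reconstruction will rest on a simple structural observation: the while loop only \emph{uncolors} vertices and never reassigns a color, so any vertex $u$ still colored in $\tau$ bears exactly the color given to it in Step~\ref{refresh:color}. Moreover, at the moment a vertex $u$ is uncolored by being labeled with the edge $e = \{u, u'\}$, the edge $e$ must be monochromatic; since neither $u$ nor $u'$ has been recolored since Step~\ref{refresh:color} and $u'$ has not yet been uncolored, both still carry their Step-\ref{refresh:color} colors, and those colors coincide. Incidentally, the definition of $R_u^v(\sigma)$ guarantees that monochromatic edges after Step~\ref{refresh:color} lie entirely within $E_v$, so only vertices in $N_v$ are ever uncolored, which keeps the reconstruction confined to the relevant subgraph.

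The inverse map is then the following: for each vertex $u$ colored in $\tau$, set its Step-\ref{refresh:color} color to $\tau(u)$; for each uncolored vertex $u$ with label $\tau(u) = \{u, n(u)\}$, follow the chain $u, n(u), n(n(u)), \ldots$ in $\tau$ until reaching the first vertex colored in $\tau$, and assign $u$ that vertex's color. By the structural observation, consecutive vertices in the chain share the same Step-\ref{refresh:color} color, so this rule inverts $\phi$ and yields injectivity. The step that will require the most care — and the principal obstacle I anticipate — is showing that the chain terminates rather than cycling indefinitely through uncolored vertices. This will follow from strict monotonicity of uncoloring times along the chain: if $u_i$ is the $i$-th element of the chain and is uncolored at time $t_i$, then $n(u_i)$ must be colored at time $t_i$, so either $n(u_i)$ is colored in $\tau$ and the chain stops, or $n(u_i)$ is uncolored strictly later, forcing $t_{i+1} > t_i$. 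Since only finitely many uncoloring events occur, the chain cannot loop and must terminate at a vertex colored in $\tau$.
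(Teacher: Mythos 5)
Your proof is correct and takes essentially the same route as the paper's: both arguments rest on the recorded uncoloring edges forming an out-degree-at-most-one, cycle-free structure (via strictly increasing uncoloring times) and on the fact that the two endpoints of each recorded edge carried the same Step~\ref{refresh:color} color, the only difference being that you phrase injectivity as an explicit chain-following inverse while the paper compares the oriented paths of two executions and exhibits a distinguishing vertex. One small correction: your reconstruction rule should only be applied to vertices of $N_v$ that were colored in $\sigma$ (i.e., those uncolored during this call to {\sc Recolor}), since vertices already uncolored in $\sigma$ must remain uncolored, with their old edge labels, in the recovered Step~\ref{refresh:color} coloring; as every element of $S'(v,\sigma)$ agrees with $\sigma$ on those vertices, this restriction costs nothing and injectivity follows as you argue.
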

\begin{proof}
Since Steps~\ref{refresh:pick_u}--\ref{refresh:uncolor} are deterministic, $|S''(v,\sigma)| \le |S'(v,\sigma)|$. To prove that $|S''(v,\sigma)| \ge |S'(v,\sigma)|$, we will prove that if $u \in N_v$ has distinct colors in $\sigma_1', \sigma_2' \in S'$, then there exists $z \in V$ such that $\sigma''_1(z) \neq \sigma''_2(z)$. Imagine that in Step~\ref{refresh:uncolor} we also oriented $e$ to point away from~$u$. Then, in the resulting partial orientation, every vertex would have outdegree at most 1 and there would be no directed cycles. Consider the (potentially empty) oriented paths starting at $u$ in $\sigma''_1$ and $\sigma''_2$, and let $z$ be their last common vertex. If $z$ is uncolored, then $\sigma_1''(z) = e_1$ and $\sigma_2''(z) = e_2$, where $e_1 \neq e_2$; if $z$ is colored, then $\sigma_i''(z) = \sigma_i'(u)$.
\end{proof}

\subsection{Proving termination}
Let $D_v$ be the set of vertices at distance 1, 2 or 3 from $v$ and let
%
%
%
\begin{align*}
 S_v  =  \{ B_u  \}_ {u \in D_v} \cup \{ Z_u \}_{u \in D_v} \cup \{ f_{u}^{\{u,w\}} \}_{u,w \in N_v}  . 
\end{align*}

To lighten notation, in the following we write $\gamma^S(f)$ instead of $\gamma^S_f$. Let $q = (1+ \epsilon) \frac{ \Delta}{ \ln \sqrt{f} } \ge 1$.

\begin{lemma}\label{treli_arkouda}  
For every vertex $v \in V $ and edge $e \in E$:
\begin{enumerate}[label=(\alph*)]
\item if $S \not\subseteq S_v$, then $\gamma^S(B_v) = \gamma^S(Z_v) = \gamma^S(f_v^e) = 0$; \label{SNonly}
\item 
if $S \supseteq \{ f_{u_1}^{\{u_1, u_2\}}, f_{u_2}^{\{u_1, u_2\}}\}$, then 
$\gamma^S(B_v) = \gamma^S(Z_v) = \gamma^S(f_v^e) = 0$;
\label{Oneperedge}
\item $ \max_{S \subseteq F}   \gamma^{S}( f_v^e )  \le \frac{1}{L}  =: \gamma(f_v^e)$; \label{ta:fe}
\item $ \max_{ S \subseteq F} \gamma^S(B_v)        \le 2 \mathrm{e}^{- \frac{L}{6} }  =: \gamma(B_v)$; \label{ta:b}
\item  $ \max_{ S \subseteq F} \gamma^S(Z_v) \le   3 q \mathrm{e}^{ - \frac{L}{60} } =: \gamma(Z_v)$\label{ta:z},
\end{enumerate}
where the charges are computed with respect to the uniform measure over $\Omega$.
\end{lemma}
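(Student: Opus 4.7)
The plan is to dispatch the five claims in order of increasing difficulty: parts~(a)--(b) are structural, (c) is a direct calculation, and (d)--(e) form the probabilistic core. For part~(a), I will case-split on the action. Addressing $f_v^e$ only changes $v$'s coordinate --- replacing the sentinel $e$ with a color from $L_v(\sigma)$ --- so it can only affect $L_u$ and $T_{u,\cdot}$ for $u \in N_v$ and never uncolors any vertex; hence the only flaws possibly introduced are $B_u$ or $Z_u$ with $u \in N_v \subseteq D_v$. Addressing $B_v$ or $Z_v$ executes \textsc{recolor}$(v,\sigma)$: by the defining property of $R_u^v(\sigma)$, Step~\ref{refresh:color} creates monochromatic edges only inside $E_v$, so the uncoloring loop touches only vertices in $N_v$ and produces $f$-flaws of the form $f_u^{\{u,w\}}$ with $u,w \in N_v$; all other introduced flaws are $B_u$ or $Z_u$ with $u$ at distance at most two from $v$. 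Each such flaw lies in $S_v$.

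Part~(b) follows because, once \textsc{recolor} uncolors $u_1$ by assigning $\{u_1,u_2\}$ to it, that edge is no longer monochromatic and cannot later cause $u_2$ to be uncolored via the same edge; the $f_v^e$ action never uncolors anything, so the claim is vacuous there. For part~(c), addressing $f_v^e$ picks a color uniformly from $L_v(\sigma)$, and because $B$-flaws have higher priority than $f$-flaws we have $|L_v(\sigma)| \ge L$, so $\rho_{f_v^e}(\sigma,\tau) \le 1/L$. Moreover, since $\sigma(v) = e$ is fixed and every other coordinate is unchanged by the action, each $\tau$ has at most one preimage $\sigma \in f_v^e$, and for the natural uniform-over-legal-configurations measure the ratio $\mu(\sigma)/\mu(\tau)$ in~\eqref{formal_charge_definition} equals $1$, yielding $\gamma^S(f_v^e) \le 1/L$.

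The main content is in parts~(d)--(e). The key device is Lemma~\ref{bijection}: the distribution of the final coloring $\tau$ after \textsc{recolor}$(v,\sigma)$ inherits the product structure of the intermediate coloring $\sigma'$ produced in Step~\ref{refresh:color}, in which each colored $u \in N_v$ independently draws a uniform color from $R_u^v(\sigma)$. With $\mu$ chosen so that the normalization $\prod_{u} |R_u^v(\sigma)|$ cancels in the ratio $\mu(\sigma)\rho(\sigma,\tau)/\mu(\tau)$, the charge $\gamma^S(B_v)$ reduces to the probability, under this product distribution, that $|L_v(\sigma')| < L$, and $\gamma^S(Z_v)$ reduces to the probability that $\sum_{c \in L_v(\sigma') \setminus \mathrm{Blank}} |T_{v,c}(\sigma')| > \tfrac{L}{10}|L_v(\sigma')|$. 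The bounds $2e^{-L/6}$ and $3q e^{-L/60}$ will then follow from Chernoff-type concentration, with the extra factor $q$ in~(e) arising from a union bound over the at most $q$ colors in $L_v$.

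The hard part is the concentration estimate itself. The indicators ``color $c$ remains available at $v$ after refresh'' are \emph{not} independent across $c$ since survival depends jointly on all neighbors' chosen colors, but they \emph{are} independent across neighbors, and the triangle-count bound $\Delta^2/f$ controls the correlation through $E_v$. Following Molloy, I will express each survival probability as a product over neighbors and estimate its moment generating function; the calibration $L \approx \tfrac{\Delta}{\ln f}\cdot f^{-1/(2+2\epsilon)}$ is chosen precisely so that the exponents $-L/6$ and $-L/60$ survive the subsequent multiplication by the $\prod_{j \in S}\psi_j$ factors and the union bounds when verifying~\eqref{eq:main_soft_condition} of Theorem~\ref{soft}.
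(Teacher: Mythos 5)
Your parts (a)--(c) coincide with the paper's proof: the structural locality argument for (a), the observation that uncolorings happen serially so an edge can never be the recorded reason for uncoloring both of its endpoints for (b), and for (c) the unique preimage together with the priority of $B$-flaws forcing $|L_v(\sigma)|\ge L$. The gap is in the heart of the lemma, parts (d)--(e). You assert that the charge ``reduces to'' the probability that a single refresh experiment lands in the flaw, and the only justification offered is that $\mu$ can be ``chosen so that the normalization $\prod_u |R_u^v(\sigma)|$ cancels.'' That mechanism is not available and is not how the reduction works: the measure used in this application is simply the uniform one (so the ratio in \eqref{formal_charge_definition} contributes nothing), and no single measure on $\Omega$ can cancel a normalization that depends on which vertex $v$ is being addressed. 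What is actually needed, and what your sketch omits, is a counting argument over originating states: (i) $R_u^v$ is determined by the colors outside $N_v$, which {\sc Recolor} never changes, so $R_u^v(\sigma)=R_u^v(\tau)$ and the experiment can be run from the destination state $\tau$ over which the charge maximizes; (ii) because the $f$-flaws are \emph{primary} and the ``covers'' definition requires \emph{equality} on primary flaws, every $\sigma\in \mathrm{In}_f^S(\tau)$ has the same set of uncolored vertices (those of $\tau$ minus the vertices $u\in N_v$ with $f_u^e\in S$), hence by Lemma~\ref{bijection} the same transition probability $1/\Lambda_f^S(\tau)$; and (iii) the originating states then inject into the recolorings of $\tau$ (colors drawn from $R_u^v(\tau)$) that land in the flaw, giving $\gamma^S(f)\le \max_{\tau} \Pr[\text{\sc Recolor}(v,\tau)\in f]$ uniformly in $S$. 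Step (ii) is exactly where the primary-flaw machinery of Theorem~\ref{soft} is exploited---the paper stresses that without it the preimages would have differing uncolored sets and normalizations and the analysis could not be captured by the condition of~\cite{AIK}---yet primary flaws never appear in your argument for (d)--(e).

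Two smaller points. The flaws $B_v,Z_v$ are defined on the final state after the uncoloring loop, not on the intermediate coloring $\sigma'$; for $B_v$ the monotonicity of $L_v$ under uncolorings rescues your formulation, but for $Z_v$ both sides of the defining inequality can move under uncolorings, so the concentration must be carried out for the output of {\sc Recolor}, as in the paper's Lemma~\ref{key_bounds2}. Once the reduction above is in place, your concentration plan (product structure over neighbors, Chernoff-type bounds, and a union bound over the $q$ colors producing the factor $q$ in (e)) does match the paper's proof of Lemma~\ref{key_bounds2}.
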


We note that, while we give uniform bounds on the charges corresponding to each flaw, the analysis of our algorithm cannot be captured by the result of~\cite{AIK}.  This is  because we will crucially exploit the existence of primary flaws.

Before giving the proof of Lemma~\ref{treli_arkouda}, we first use it to derive Theorem~\ref{improvement_mike}.
\begin{proof}[Proof of Theorem~\ref{improvement_mike}]
For every flaw $f \in F$, we will take $\psi_f = \gamma(f) \psi$, where $\psi >0$ will be chosen later.

For any vertex $v \in V$, flaw $f \in \{B_v, Z_v, f_v^e\}$, and set of flaws $S \subseteq F$, Lemma~\ref{treli_arkouda} implies that $\gamma^S(f) = 0$ unless all $B$- and $Z$-flaws in $S$ correspond to vertices in $D_v$, per part~\ref{SNonly}, and every edge $e \in E_v$ contributes at most one flaw to $S$, per part~\ref{Oneperedge}. Therefore, for $f \in \{B_v, Z_v, f_v^e\}$,
\begin{align}
\frac{ 1}{\psi_f}
\sum_{S \subseteq F } \gamma^S(f) \prod_{g \in S} \psi_g \le \frac{1}{\psi}
\prod_{ u \in D_v } ( 1 + \gamma(B_u) \psi)( 1 + \gamma(Z_u) \psi)  \label{further_bound} 
\prod_{e= \{u_1,u_2\} \in E_v} \left( 1 +   \gamma(f_{u_1}^{e})\psi + \gamma(f_{u_2}^{e}) \psi \right) .
\end{align}

To bound the right hand side of~\eqref{further_bound} we use parts~\ref{ta:fe}--\ref{ta:z} of Lemma~\ref{treli_arkouda} along with the facts $|D_v| \le \Delta^3+1$ and $|E_v| \le \Delta^2/f$ to derive~\eqref{eq:stuct_bound} below. To derive~\eqref{zvbound}, we use the facts that $2\mathrm{e}^{- \frac{L}{6}} \le 3q \mathrm{e}^{-\frac{L}{60}}$, since $q \ge 1$, and that $1+ x \le \mathrm{e}^x$ for all $x$. Thus, for $f \in \{B_v, Z_v, f_v^e\}$, we conclude
\begin{eqnarray}
\frac{1}{\psi_f}
\sum_{S \subseteq F } \gamma^S(f) \prod_{g \in S} \psi_g
& \le & 
\frac{1}{\psi}
	\left( 1 + 2\mathrm{e}^{- \frac{L}{6} } \psi \right)^{\Delta^3+1}  
 	\left( 1 + 3q \mathrm{e}^{- \frac{L}{60} } \psi    \right)^{ \Delta^3+1 } 
	\left(  1+   \frac{2\psi}{L}   \right)^{  \frac {\Delta^2}{ f } }
	\label{eq:stuct_bound} \\
& \le &   
\frac{1}{\psi} \;
\mathrm{exp} \left(    \frac{  2\psi \Delta^2 }{ f L  }  + 6q\mathrm{e}^{-\frac{L}{60}} \psi  (\Delta^3 +1)\right)  =: \frac{1}{\psi} \exp(Q) 
\label{zvbound} .
\end{eqnarray} 


Setting $\psi = (1 + \epsilon)$, we see that the right hand side of~\eqref{zvbound} is strictly less than 1 for all $\Delta \ge \Delta_{\epsilon}$, since $Q\xrightarrow{\Delta \to \infty} 0$  for all $f \in [\Delta^{\frac{2+2\epsilon}{1+2\epsilon} } (\ln \Delta)^2, \Delta^2+1]$. To see this last claim, recall that $L = (1 + \epsilon)  \frac{ \Delta }{ \ln f } f^{ - \frac{1}{2+2\epsilon} }$ and $q = (1+ \epsilon) \frac{ \Delta}{ \ln \sqrt{f} }$, and note that $\ln f < 3 \ln \Delta$ and $f^{ \frac{ 1+ 2 \epsilon}{2 + 2 \epsilon }} \ge \Delta (\ln \Delta)^{ \frac{ 2+ 4 \epsilon}{2 + 2 \epsilon }}$. Thus,
\begin{align}
  \frac{2 \psi \Delta^2 }{f L }  =  
  \frac{2 \Delta^2 }{ f  \frac{ \Delta}{ \ln f } f^{ - \frac{1}{ 2+ 2\epsilon} }  }  = 
  \frac{2 \Delta \ln f  }{f^{ \frac{ 1+ 2 \epsilon}{2 + 2 \epsilon }  }  }  \le
  \frac{2  \ln f  }{( \ln \Delta )^ \frac{2 + 4 \epsilon }{ 2+ 2 \epsilon } }  
  \le 
  \frac{6 \ln \Delta} {  ( \ln \Delta )^ \frac{2 + 4 \epsilon }{ 2+ 2 \epsilon }  } = 
  \frac{6} {  ( \ln \Delta )^ \frac{\epsilon }{ 1 +  \epsilon }  } 
	\xrightarrow{\Delta \to \infty} 0     , \label{trade_off}
\end{align}
while the facts $L = \Omega(\Delta^{\frac{\epsilon}{1+2\epsilon}})$ and $q \le (1+\epsilon) \Delta$ imply that $6q\mathrm{e}^{-\frac{L}{60}} \psi  (\Delta^3 +1) \xrightarrow{\Delta \to \infty} 0$.

\end{proof}

\subsubsection{Proof of Lemma~\ref{treli_arkouda}}

\begin{proof}[Proof of part \ref{SNonly}]
Addressing $B_v$ or $Z_v$ by executing {\sc Recolor}$(v, \cdot)$ only changes the color of vertices in $N_v$, with any resulting uncolorings being due to edges in $E_v$. Thus, only flaws in $S_v$ may be introduced. Addressing $f_v^e$ by coloring $v$ trivially can only introduce flaws $B_u, Z_u$, where $u \in N_v$.
\end{proof}

\begin{proof}[Proof of part \ref{Oneperedge}]
Since addressing an $f$-flaw never introduces another $f$-flaw, we only need to discuss procedure {\sc Recolor}. Therein, vertices are uncolored serially in time, so that any time a vertex $w$ is uncolored there exists, at the time of $w$'s uncoloring, a monochromatic edge $e = \{w,u\}$. Therefore, an edge $e = \{u_1, u_2\}$ can never be the reason for the uncoloring of both its endpoints, i.e., $f^e_{u_1} \cap f^e_{u_2} = \emptyset$.
%
%
%
%
\end{proof}

\begin{proof}[Proof of part~\ref{ta:fe}]
If addressing $f_v^e$ results in $\tau$, then the previous state $\sigma$ must be the mutation of $\tau$ that results from assigning $e$ to $v$. Since $\pi(\sigma) = f_v^e$ implies $\sigma \not\in B_v$, it follows that $|L_v(\sigma)| \ge L$. Since colors are chosen uniformly from $L_v(\sigma)$, it follows that $\gamma(f_v^e) \le 1/L$. 
%
\end{proof}

\begin{proof}[Proof of parts~\ref{ta:b} and~\ref{ta:z}]

Observe that every flaw corresponding to an uncolored vertex is \primary, since procedure {\sc Recolor} never colors an uncolored vertex and addressing $f_v^e$ only colors $v$. Thus,  when computing $\gamma^S(f)$, for $f \in \{B_v, Z_v\}$ and $S \subseteq F$, we can restrict to pairs $(\sigma, \tau)$ such that the set of uncolored vertices in $\tau$ is exactly the union of the set of uncolored vertices in $\sigma$ and the set $\{u \in N_v : f_u^e \in S \}$. Fixing $f \in \{B_v, Z_v\}$, $S \subseteq F$, and $\tau$, let us denote by $\mathrm{In}^{S}_f(\tau)$ the candidate set of originating states, and by $\mathcal{U}_f^S(\tau)$ their common set of uncolored vertices. Then, for any $f \in \{B_v, Z_v\}$ and any $S \subseteq F$,
\begin{align}\label{charge_in_def}
\gamma^{S}(f)  = \max_{ \tau \in \Omega }  \sum_{ \sigma \in  \mathrm{In}_f^S( \tau)} \rho_{f}(\sigma, \tau) .
\end{align}
To bound $\rho_f(\sigma,\tau)$ in~\eqref{charge_in_def}, we recall that {\sc Recolor} assigns to each colored vertex $u \in N_v$ a random color from $R_u^v(\sigma)$ and invoke Lemma~\ref{bijection} to derive the first equality in~\eqref{almost_there}. For the second equality we observe that for every $u \in N_v$, the set $R_u^v$ is determined by the colors of the vertices in $V \setminus N_v$. Since {\sc Recolor} only changes the color of vertices in $N_v$, it follows that $R_u^v(\sigma) = R_u^v(\tau)$, yielding 
\begin{align}\label{almost_there}
\rho_{f}(\sigma,\tau) = \frac{ 1}{ \prod_{u \in N_v \setminus \mathcal{U}^S_f(\tau)} | R_u^v(\sigma) |  } =   \frac{ 1}{ \prod_{u \in N_v \setminus  \mathcal{U}_f^S(\tau)} | R_u^v(\tau) |  } :=  \frac{1}{ \Lambda_f^S(\tau)} . 
\end{align}

Next we bound $|\mathrm{In}_{f}^S(\tau)|$ as follows. First we observe that if $\sigma \in \mathrm{In}^{S}_f(\tau)$, then $\sigma(u) \neq \tau(u)$ implies $u \in N_v \setminus  \mathcal{U}_f^S(\tau)$ and, therefore, $\sigma(u) \in R_u^v(\tau)$ since $ \sigma(u) \in L_u(\sigma) \subseteq R_u^v(\sigma)  = R_u^v(\tau)$. Thus, the set of $\tau$-mutations that result from recoloring each vertex in $N_v \setminus  \mathcal{U}_f^S(\tau)$ with a color from $R_u^v(\tau)$ so that the resulting state belongs to~$f$ is a superset of $\mathrm{In}_{f}^S(\tau)$. Denoting this last set by $\mathrm{Viol}(f,\tau)$, we conclude that
\begin{align}
\gamma^S(f) = \max_{\tau \in \Omega}  \frac{  |\mathrm{In}_{f}^S(\tau)| }{ \Lambda_f^S(\tau) } \le \max_{\tau \in \Omega} \frac{ |\mathrm{Viol}(f,\tau)  |}{ \Lambda_f^S(\tau) }   = \max_{\tau \in \Omega} \Pr [{\text{\sc recolor}}(v,\tau) \in f]   , \label{eq:bearhands}
\end{align}
where for the last equality we use the definition of  $\Lambda_f^S(\tau)$. 
\begin{remark}
We note that expressing the sum of the transition probabilities into a state in terms of a random experiment, as we do in~\eqref{eq:bearhands}, was the key technical  idea of~\cite{molloy2017list} in order to apply the entropy compression method. It is also the one that breaks down if we allow our algorithm to go through improper colorings. 
\end{remark}

To conclude the proof of Lemma~\ref{treli_arkouda} we prove the following bounds in Appendix~\ref{proof_of_key_bounds2} via fairly routine calculations.
\begin{lemma}\label{key_bounds2} For each vertex $v$ and $\sigma \in \Omega$:
\begin{enumerate}[label=(\alph*)]
\item $\Pr [{\text{\sc recolor}}(v,\sigma) \in B_v] \le 2   \mathrm{e}^{ - \frac{L}{6}}$. \label{eq:L6}
\item $\Pr [{\text{\sc recolor}}(v,\sigma) \in Z_v] \le 3 q \mathrm{e}^{ - \frac{L}{60}}$.
\label{eq:L60}
\end{enumerate}
\end{lemma}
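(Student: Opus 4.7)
The plan is to mirror the structure of Molloy's original probabilistic analysis~\cite{molloy2017list} but adapted to handle the backtracking/uncoloring tail of {\sc Recolor}. The first observation is that the only randomness in {\sc Recolor}$(v,\sigma)$ lives in Step~\ref{refresh:color}, which independently reassigns each colored $u \in N_v$ a uniformly random color from $R_u^v(\sigma)$; the remainder of the procedure is deterministic. Crucially, the uncoloring phase only converts colors into $\mathrm{Blank}$, and a $\mathrm{Blank}$ neighbor blocks no color at $v$. Writing $\tau_0$ for the (random) coloring at the end of Step~\ref{refresh:color} and $\tau$ for the final coloring, this monotonicity gives $L_v(\tau) \supseteq L_v(\tau_0)$, and a parallel argument shows that the competition sum defining $Z_v$ can only decrease along the uncoloring. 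Hence it suffices to prove both tail bounds at the random state $\tau_0$.

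For part~(\ref{eq:L6}), I would decompose $|L_v(\tau_0)|$ as $\sum_{c\in\mathcal{L}_v}\mathbf{1}[\text{no }u\in N_v\text{ picks }c]$. Since $B$-flaws have priority over $f$-flaws, when we address $B_v$ or $Z_v$ no vertex of $N_v$ is uncolored in $\sigma$, so every $u \in N_v$ is reassigned a color; the definition of $R_u^v(\sigma)$ together with $\sigma \notin B_u$ (because $B$-flaws have been previously fixed, or would otherwise be addressed first) gives a uniform lower bound on $|R_u^v(\sigma)|$. A direct computation of $\mathbb{E}|L_v(\tau_0)|$ then yields a lower bound comfortably above $L$, and a one-sided concentration inequality (Chernoff for the independent $c$-by-$c$ Bernoullis, or McDiarmid applied to the Lipschitz function of the vector of assigned colors) delivers $\Pr[|L_v(\tau_0)| < L] \le 2\mathrm{e}^{-L/6}$ after tuning the deviation to match the target threshold.

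For part~(\ref{eq:L60}), the target variable $X := \sum_{c}|T_{v,c}(\tau_0)|$ expands as $\sum_{u\in N_v}\sum_{c}\mathbf{1}[u\text{ is }\mathrm{Blank}\text{ in }\tau_0]\cdot\mathbf{1}[c\in L_u(\tau_0)]$. A vertex $u$ becomes $\mathrm{Blank}$ only via the deterministic uncoloring phase driven by monochromatic edges in $E_v$, of which there are at most $\Delta^2/f$, so few uncolorings occur on average; meanwhile, $c \in L_u(\tau_0)$ factors across independent neighbors of $u$ outside $N_v$. Computing $\mathbb{E}[X]$ gives a bound of the form $\tfrac{L}{20}|L_v(\tau_0)|$, and a concentration estimate (Talagrand, or Bernstein/Freedman on the martingale revealing the colors one neighbor at a time) upgrades this to $\Pr[X > \tfrac{L}{10}|L_v(\tau_0)|] \le 3q\,\mathrm{e}^{-L/60}$; the extra $q$ factor accommodates the simultaneous low-probability event that $|L_v(\tau_0)|$ itself is atypically small.

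The main obstacle I anticipate is the concentration step for~(\ref{eq:L60}), because $X$ is a function of many coordinates that interact in two ways: the random color vector determines both which $u$'s the uncoloring phase subsequently drops and which colors survive in $L_u$, and the target event is phrased relative to the random quantity $|L_v(\tau_0)|$. The cleanest route is a two-step conditioning argument: first invoke the concentration from~(\ref{eq:L6}) to restrict to the high-probability event that $|L_v(\tau_0)|$ is close to its mean, and then apply a Lipschitz-type concentration to $X$ under that conditioning. Keeping the constants tight enough to extract exactly $L/60$ (rather than something larger) is the delicate part; this is where exploiting the sparsity condition $|E_v|\le \Delta^2/f$ to limit how many neighbors of $v$ can be uncolored will be essential.
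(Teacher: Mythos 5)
Your reduction to the state $\tau_0$ after Step~\ref{refresh:color} and your expectation bound for part~(\ref{eq:L6}) both have real gaps. First, the ``uniform lower bound on $|R_u^v(\sigma)|$'' is not available: the flaw choice only guarantees that $B$-flaws precede $f$-flaws, so when $B_v$ or $Z_v$ is addressed a neighboring flaw $B_u$ may well be present (and some vertices of $N_v$ may be uncolored -- \textsc{Recolor} only reassigns the colored ones), and ``previously fixed'' does not help since flaws get reintroduced. Second, even granting $|R_u^v(\sigma)|\ge L$ for all $u$, a direct computation only gives $\ex|L_v| \ge \sum_{c}\prod_{u}(1-1/|R_u^v(\sigma)|)$, which can be as small as $q\,\mathrm{e}^{-\Delta/L}$, far below $L$ because $\Delta/L$ is polynomially large. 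The missing idea is the averaging over colors: setting $\rho(c)=\sum_u 1/(|R_u^v(\sigma)|-1)$, summed over colored $u\in N_v$ with $c\in R_u^v(\sigma)$, each such $u$ contributes total weight at most $1$ across its non-$\mathrm{Blank}$ colors, so $\sum_c\rho(c)\le\Delta$, and convexity of $\mathrm{e}^{-x}$ yields $\ex|L_v(\tau)|\ge q\,\mathrm{e}^{-\Delta/q}=2L$. The concentration step must then be the multiplicative Chernoff bound for the negatively correlated per-color indicators (deviation half the mean, Lemma~\ref{ChernoffNegative}); your McDiarmid alternative over the $\Delta$ neighbor-coordinates only gives $\exp(-\Theta(L^2/\Delta))$, which is vacuous when $f$ is close to $\Delta^2$.

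For part~(\ref{eq:L60}) the plan fails earlier. The monotonicity claim is false: uncoloring a vertex assigns it an \emph{edge} label (this is exactly what makes the $f$-flaws primary), and it enlarges both $L_v$ and the lists $L_u$ of $\mathrm{Blank}$ neighbors, so the competition sum and the threshold $\frac{L}{10}|L_v|$ both \emph{increase}; the event $Z_v$ is not monotone along the uncoloring and the reduction to $\tau_0$ does not follow. Relatedly, a neighbor is $\mathrm{Blank}$ in the final state only if it chose $\mathrm{Blank}$ in Step~\ref{refresh:color} (independently, with probability $1/|R_u^v(\sigma)|$); it is never made $\mathrm{Blank}$ by the backtracking loop, so the sparsity $|E_v|\le\Delta^2/f$ plays no role in this lemma. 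Finally, bounding the aggregate $X=\sum_c|T_{v,c}|$ against the random threshold cannot handle the case where $|L_v(\tau)|$ is small (say, constant): the relevant sum runs only over the few colors surviving in $L_v(\tau)$, about which a global bound on $X$ says nothing. The paper argues per color: colors with $\rho(c)\ge L/20$ appear in $L_v(\tau)$ with expected number at most $q\,\mathrm{e}^{-L/40}$, while for each color with $\rho(c)<L/20$ the quantity $|T_{v,c}(\tau)|$ is dominated by a sum of independent $\mathrm{Blank}$-indicators with mean at most $\rho(c)$, hence exceeds $L/10$ with probability at most $2\mathrm{e}^{-L/60}$; a union bound over the $q$ colors plus the deterministic observation that on the good event every color of $L_v(\tau)\setminus\mathrm{Blank}$ contributes less than $L/10$ gives exactly $3q\,\mathrm{e}^{-L/60}$. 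Without this heavy/light split and per-color comparison, the delicate concentration step you flag does not go through.
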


\end{proof}

\section{Applications to backtracking algorithms}\label{SoftcoreLLL}

An important class of algorithms naturally devoid of ``collateral fixes'' are \emph{backtracking} algorithms. In particular, consider a Constraint Satisfaction Problem (CSP) over a set of variables  $V = \{v_1, v_2 \ldots, v_n \}$, each variable $v_i$ taking values in a domain $\mathcal{D}_{i}$, with a set of constraints  $\mathcal{C} = \{c_1, c_2, \ldots,c_m \}$  over these variables. The backtracking algorithms we consider operate as follows. (Note that in Step~\ref{state:init}, we can always take $\theta$ to be the distribution under which all variables are unassigned; this does not affect the convergence condition~\eqref{eq:main_soft_condition} but may have a mild effect on the running time.)

\begin{algorithm}\caption*{{\bf Generic Backtracking Algorithm}}\label{GenericBackAlgo}
\begin{algorithmic}[1]
\State Sample a partial non-violating  assignment $\sigma_0$ according to a distribution $\theta$ and set $i = 0$ \label{state:init}
\While {unassigned variables exist } 

\State Let $v$ be the lowest indexed unassigned variable in $\sigma_i$
\State Choose a new value for $v$ according to a state-dependent  probability distribution
\If {one or more constraints are violated}
\State Remove the values from enough  variables so that no constraint is violated 
\EndIf

\State Let $\sigma_{i+1}$ be the resulting assignment

\State $i \leftarrow i+1$

\EndWhile
\end{algorithmic}
\end{algorithm}

Let $\Omega$ be the set of partial assignments to $V$ that do not violate any constraint in $\mathcal{C}$. For each variable $v_i \in V$, let flaw $f_i \subseteq \Omega$ comprise the partial assignments in which $v_i$ is unassigned. Clearly, each flaw $f_i$ can only be removed by addressing it, as addressing any other flaw can only unassign $v_i$. Thus, every flaw is \primary\ and a flawless state is a complete satisfying assignment.

In this section we present applications of our main theorem to analyze backtracking
search algorithms.  First, we prove a useful corollary of Theorem~\ref{soft} that holds in the
so-called {\it variable setting}. Second, we analyze a backtracking algorithm of Esperet 
and Parreau~\cite{acyclic} for acyclic  edge coloring that lies outside the variable setting.  
In particular, we recover their $4\Delta$ bound on the acyclic chromatic index and, further,
we show how it can make constructive an existential result of Bernshteyn~\cite{bernshteyn2016new}.
We emphasize that all these analyses follow very easily from our framework.

\subsection{The variable setting}

In this section we show how we can use Theorem~\ref{soft} to employ backtracking algorithms in order to capture applications in the variable setting, i.e., the setting considered by Moser and Tardos. In particular, we consider a product measure over variables $V$ and define a bad event for each constraint $c \in \mathcal{C}$ being violated.  We show the following corollary of Theorem~\ref{soft}. 

\begin{theorem}\label{BackProduct}
 Let $P$ be any product measure over a set of variables $V$ and let $A_c$ be the event that constraint $c$ is violated. If there exist positive real numbers $\{ \psi_v \}_{v \in V}$ such that for every variable $v \in V$,
\begin{align}\label{tade_sat}
\frac{1}{\psi_v} \left( 1 + \sum_{c \ni v} P(A_c) \prod_{u \in c} \psi_u \right) < 1
,
\end{align}
then there exists a backtracking algorithm that finds a satisfying assignment after an expected number of $O\left( \log (P_{\min}^{-1} ) + |V| \log_2 \left(  \frac{ 1 + \psi_{\max} }{ \psi_{\min}  } \right)  \right) $steps.
 \end{theorem}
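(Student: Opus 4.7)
The plan is to realize Theorem~\ref{BackProduct} as a direct instantiation of Theorem~\ref{soft} for a specific backtracking algorithm in the variable setting. Let $\Omega$ be the set of partial assignments to $V$ that violate no constraint, and for each $v \in V$ let $f_v$ be the set of states in which $v$ is unassigned; as noted in the paragraph above the theorem, every $f_v$ is primary and a flawless state is a complete satisfying assignment. The backtracking algorithm I analyze is the natural one: start from the all-unassigned state $\sigma_0$ (so $\theta$ is the point mass on $\sigma_0$, in which every flaw is present); address $f_v$ by drawing a new value for $v$ from $P_v$; if any constraint becomes violated, unassign every variable of the lowest-indexed violated constraint. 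The flaw-choice rule is any $\pi$-strategy, and the measure is $\mu(\sigma) \propto \prod_{u \in V_\sigma} P_u(\sigma(u))$, where $V_\sigma$ denotes the assigned variables of $\sigma$.

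The heart of the proof is computing the charges $\gamma_v^S$. Because backtracking unassigns exactly one violated constraint per step, addressing $f_v$ introduces either no flaw (success) or the set of flaws $\{f_u : u \in c\}$ for some constraint $c \ni v$ (failure); since every flaw is primary, this forces $\gamma_v^S = 0$ unless $S = \emptyset$ or $S = \{f_u : u \in c\}$ with $c \ni v$. For $S = \emptyset$ the unique preimage $\sigma$ of $\tau$ is obtained by deleting $\tau(v)$, with $\rho_v(\sigma,\tau) = P_v(\tau(v))$; the $\mu$-ratio cancels the transition probability and gives $\gamma_v^\emptyset = 1$. For $S = \{f_u : u \in c\}$, any $\sigma \in \mathrm{In}_v^S(\tau)$ differs from $\tau$ only by an assignment $\alpha$ to $c \setminus \{v\}$ together with the chosen value $x$ for $v$ such that $(\alpha,x)$ violates $c$; the $\mu$-factors telescope across $c$ to yield
\[
\frac{1}{\mu(\tau)}\sum_{\sigma \in \mathrm{In}_v^S(\tau)} \mu(\sigma)\,\rho_v(\sigma,\tau) \;=\; \sum_{\substack{(\alpha,x)\\ \text{violates } c}} \prod_{u \in c} P_u\bigl((\alpha,x)(u)\bigr) \;\le\; P(A_c),
\]
so $\gamma_v^c \le P(A_c)$. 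Substituting into~\eqref{eq:main_soft_condition} and identifying $\prod_{f_u \in S}\psi_{f_u}$ with $\prod_{u \in c}\psi_u$ reproduces exactly the hypothesis~\eqref{tade_sat}, giving $\delta := 1 - \max_v \zeta_v > 0$.

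For the running time I invoke the tail bound of Theorem~\ref{soft}: the probability of not terminating within $(T_0+s)/\delta$ steps is $\le 2^{-s}$, so the expected running time is $O(T_0/\delta) = O(T_0)$. Normalizing $\mu$ gives $Z = \sum_\sigma \prod_{u \in V_\sigma} P_u(\sigma(u)) \le \sum_{V' \subseteq V}\prod_{u \in V'}\sum_x P_u(x) = 2^{|V|}$, so $\mu_{\min} \ge P_{\min}/Z$ and hence $\log_2 \mu_{\min}^{-1} \le \log_2 P_{\min}^{-1} + |V|$; the extra $|V|$ is absorbed into the $|V|\log_2((1+\psi_{\max})/\psi_{\min})$ term of the general $T_0$-bound in Theorem~\ref{soft}, yielding the claim. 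The only delicate step is the telescoping charge computation for $S = c$, which is precisely where the product structure of $P$ enters; everything else is bookkeeping exploiting that all flaws are primary, so $\mathrm{In}_v^S(\tau)$ can be described exactly rather than as a conservative over-approximation.
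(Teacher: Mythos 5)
Your proposal is correct and follows essentially the same route as the paper: the same backtracking algorithm, the same measure $\mu(\sigma)\propto\prod_{u}P_u(\sigma(u))$ over partial non-violating assignments, the same charge computation ($\gamma_v^\emptyset=1$, $\gamma_v^{c}\le P(A_c)$ for $c\ni v$, and $0$ otherwise, exploiting that all flaws are primary), and then a direct plug-in to Theorem~\ref{soft}. Your extra bookkeeping on the running time (bounding the normalizing constant by $2^{|V|}$) only makes explicit what the paper leaves implicit.
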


Before proving Theorem~\ref{BackProduct}, we first use it to capture a well-known application of the Lov\'{a}sz Local Lemma to  sparse $k$-SAT formulas  when $P$ is the uniform measure.  For a $k$-SAT formula $\Phi$, we denote its maximum degree by $\Delta \equiv \Delta(\Phi)$, i.e., each variable of $\Phi$ is contained in at most $\Delta$ clauses.  
 
\begin{theorem}\label{simple_sat}
Every $k$-SAT formula $\Phi$   with maximum degree $\Delta < \frac{ 2^{k} }{ \mathrm{e}k} $ is satisfiable. Moreover, there exists a backtracking algorithm that finds a satisfying assignment of $\Phi$ efficiently.
\end{theorem}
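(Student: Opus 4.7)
The plan is to derive Theorem~\ref{simple_sat} as a direct application of Theorem~\ref{BackProduct}. I would take $P$ to be the uniform product measure on $\{0,1\}^{V}$ and $A_{c}$ the event that clause $c$ is violated, so that $P(A_{c}) = 2^{-k}$ for every clause. The corresponding algorithm is precisely the randomized DPLL with single-clause backtracking described in the introduction: each chosen variable is set to $0$ or $1$ uniformly, and whenever a clause is violated its $k$ variables are unassigned.

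With $\psi_{v} \equiv \psi$ set uniformly over $v \in V$, and using that every clause contains exactly $k$ variables while every variable sits in at most $\Delta$ clauses, condition~\eqref{tade_sat} collapses to
\begin{equation*}
\frac{1}{\psi}\bigl(1 + \Delta \cdot 2^{-k}\psi^{k}\bigr) < 1, \qquad \text{equivalently}\qquad \Delta \cdot 2^{-k}\psi^{k} < \psi - 1.
\end{equation*}
I would plug in the classical symmetric LLL choice $\psi := k/(k-1)$, for which $\psi - 1 = 1/(k-1)$ and $\psi^{k} = k^{k}/(k-1)^{k}$, reducing the requirement to
\begin{equation*}
\Delta \;<\; \frac{2^{k}(k-1)^{k-1}}{k^{k}} \;=\; \frac{2^{k}}{k}\left(1 - \frac{1}{k}\right)^{k-1}.
\end{equation*}
Since the sequence $(1 + 1/m)^{m}$ increases to $\mathrm{e}$, we have $(k/(k-1))^{k-1} < \mathrm{e}$, and taking reciprocals yields $(1 - 1/k)^{k-1} > 1/\mathrm{e}$ for every $k \ge 2$. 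Hence the right-hand side strictly exceeds $2^{k}/(\mathrm{e}k)$, so the hypothesis $\Delta < 2^{k}/(\mathrm{e}k)$ strictly implies the displayed inequality, and Theorem~\ref{BackProduct} delivers a satisfying assignment.

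For the running time, the uniformity of $P$ gives $\log_{2} P_{\min}^{-1} = n$, and with $\psi_{\max} = \psi_{\min} = k/(k-1)$ we obtain $(1+\psi_{\max})/\psi_{\min} = 2 - 1/k \le 2$, so $|V|\,\log_{2}((1+\psi_{\max})/\psi_{\min}) = O(n)$. Theorem~\ref{BackProduct} therefore yields an expected $O(n)$-step backtracking algorithm, completing the proof. I do not foresee any real obstacle here: once Theorem~\ref{BackProduct} is in hand, this is the textbook symmetric LLL calculation, and the only mild subtlety is preserving strict inequality, which is handled by the sharp bound $(1 - 1/k)^{k-1} > 1/\mathrm{e}$.
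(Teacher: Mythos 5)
Your proposal is correct and essentially identical to the paper's proof: the paper also applies Theorem~\ref{BackProduct} with the uniform measure and symmetric weights, writing $\psi = 2\alpha$ and optimizing over $\alpha$ to obtain exactly the threshold $\frac{2^k}{k}\left(1-\frac{1}{k}\right)^{k-1}$, whose maximizer $\alpha = \frac{k}{2(k-1)}$ corresponds precisely to your direct choice $\psi = k/(k-1)$. If anything, your write-up is slightly cleaner, since you state the comparison $(1-1/k)^{k-1} > 1/\mathrm{e}$ in the correct direction (the paper's displayed chain has the inequality sign flipped, a harmless typo) and you make the $O(n)$ running-time bookkeeping explicit, which the paper leaves implicit.
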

\begin{proof}
Setting $\psi_v   = \psi = 2 \alpha > 0 $ we see that  it suffices to find a value $\alpha > 0 $ such that
\begin{align*}
 \frac{ 1}{\psi}  +  \frac{1}{2^k} \Delta  \psi^{k-1}   =   \frac{1}{2 \alpha }   +  \frac{1}{2 } \Delta  \alpha^{k-1} < 1 ,
\end{align*}
which  is feasible  whenever
\begin{align*}
  \Delta   < \max_{\alpha > 0 }  \frac{ 2 \alpha  -1}{ \alpha^k  }  =  \frac{ 2^k }{ k}  \cdot \left( 1  - \frac{1}{k} \right)^{k-1} \le \frac{2^k}{ \mathrm{e}k }  .
\end{align*}
\end{proof}
 
\begin{remark}
In~\cite{2ke} it is shown that using a non-uniform product measure $P$ one can improve the bound of Theorem~\ref{simple_sat} to $\Delta < \frac{ 2^{k+1} }{ \mathrm{e}(k+1)} $ and that this is asymptotically tight. We note that we can achieve the same bound using Theorem~\ref{BackProduct} with the same~$P$, but since this a rather involved LLL application we will not explicitly present it here.
\end{remark} 
 
\subsubsection{Proof of Theorem~\ref{BackProduct}}

We consider the following very simple backtracking algorithm.  Start with each variable unassigned. Then, in each state~$\sigma$, choose the lowest indexed unassigned variable~$v$ and sample a value for it according to the product measure~$P$. If one or more constraints become violated, remove the value from every variable in the lowest indexed violated constraint. 

Let $\Omega$ be the set of partial non-violating assignments. Let  $\mu: \Omega \rightarrow \mathbb{R}$ be the probability measure that assigns to each state $\sigma \in \Omega $  the value  $\mu(\sigma) \propto  \prod_{ \substack{ v \in V \\ v \notin  U(\sigma) } } P( \sigma(v) )$, where for brevity we abuse notation by  letting $P(\sigma(v))$ denote the event that variable $v$ is assigned value $\sigma(v)$. 

 Theorem~\ref{BackProduct} will follow immediately from the following lemma. (For brevity, we will index flaws  with variables instead of integers.)

\begin{lemma}\label{weights_product}
For each vertex  $v$  and set of variables $S \ne \emptyset$,
\begin{align*}
\gamma^S_v=
\begin{cases}
 1    \text{\quad if $S = \emptyset$;} \\
P(A_c) \text{\quad if  $S= c$, where $c$ is a constraint containing $v$;}\\
0 \text{\quad otherwise.} 
\end{cases}
\end{align*}
\end{lemma}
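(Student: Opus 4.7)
The plan is to classify all possible sets $S$ of flaws that can be introduced when addressing $f_v$ at some state $\sigma$, and then evaluate the charge $\gamma^S_v$ in each case. Since every flaw here is \primary, the covering condition in Definition~\ref{sparser} collapses to equality: the multiset of introduced flaws must coincide \emph{exactly} with $S$. Analyzing the action, when we address $f_v$ at $\sigma$ we sample a value $a$ for $v$ according to $P$; either (i) no constraint becomes violated, so $v$ is now assigned and no flaws are introduced at all ($S=\emptyset$), or (ii) some constraint is violated, in which case the lowest-indexed such constraint $c$ must contain $v$ (the only freshly assigned variable), and we unassign every variable of $c$, introducing precisely the flaws $\{f_u : u \in c\}$. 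No other pattern of introductions is possible, so $\gamma^S_v = 0$ whenever $S$ is neither $\emptyset$ nor equal to $c$ for some constraint $c \ni v$.

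For the case $S = \emptyset$, the unique $\sigma \in \mathrm{In}^{\emptyset}_v(\tau)$ for a given $\tau$ is the mutation of $\tau$ in which $v$ is unassigned. Then $V \setminus U(\sigma) = (V \setminus U(\tau)) \setminus \{v\}$, and the normalizing constant in $\mu$ cancels, giving $\mu(\sigma) = \mu(\tau)/P(\tau(v))$, while the transition probability is $\rho_v(\sigma,\tau) = P(\tau(v))$. Hence $\mu(\sigma)\rho_v(\sigma,\tau)/\mu(\tau) = 1$, and taking the max over $\tau$ yields $\gamma^{\emptyset}_v = 1$.

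For $S = c$ with $v \in c$, any $\sigma \in \mathrm{In}^{c}_v(\tau)$ must agree with $\tau$ outside $c$, have $v$ unassigned, and assign values to every variable in $c\setminus\{v\}$ in such a way that (together with the sampled $a$ for $v$) the constraint $c$ becomes violated while $c$ is the lowest-indexed violated constraint. A bookkeeping computation gives $\mu(\sigma)/\mu(\tau) = \prod_{u \in c\setminus\{v\}} P(\sigma(u))$ and $\rho_v(\sigma,\tau) = P(a)$, so summing over all $\sigma$ and the corresponding value $a$ reconstructs a product-measure expectation over assignments to the variables of $c$ that violate $c$. This sum is bounded by $P(A_c)$; the upper bound is all that is needed to apply Theorem~\ref{BackProduct} through~\eqref{tade_sat}, and equality can be witnessed by choosing a $\tau$ whose values outside $c$ make no other constraint ``nearly violated.''

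The main obstacle is purely notational: correctly tracking the ratio $\mu(\sigma)/\mu(\tau)$ through the change in the set of assigned variables (the $Z$-factors cancel but one must be careful about which product each $P(\cdot)$ belongs to), and recognizing that the sum over valid pre-states $\sigma$, weighted by the sampling probability of $a$, is exactly the product-measure probability that a random assignment to $c$ violates $c$. Once these are spelled out, the three cases of the lemma follow immediately.
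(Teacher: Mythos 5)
Your proposal is correct and follows essentially the same route as the paper's proof: rule out all $S$ other than $\emptyset$ and constraints containing $v$ by the structure of the backtracking step, recover the unique pre-state for $S=\emptyset$ to get charge $1$, and for $S=c$ map pre-states (with their sampled values) into the violating assignments of $c$ so the weighted sum telescopes to $P(A_c)$. Your observation that the argument strictly yields an upper bound $\gamma^c_v \le P(A_c)$, which is all that condition~\eqref{tade_sat} requires, is accurate and mirrors the injection step in the paper's own computation.
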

\begin{proof}
Notice that the actions related to flaw $f_v$  can only remove the value from sets of variables that correspond to constraints that contain $v$. Thus, $\gamma^S_{v} = 0 $ for every set $S \ne \emptyset$  that does not correspond to a constraint containing $v$.  
Recalling the definition of charges and $U(\sigma)$, we have
\begin{align}\label{backtracking_prwti}
\gamma^S_v =   \max_{\tau \in \Omega }  \sum_{ \substack{ \sigma \in f_v   \\  S = U(\tau) \setminus \left( U(\sigma) \setminus \{v \}  \right)  } }  \frac{\mu(\sigma)}{ \mu(\tau)} \rho_v(\sigma,\tau).
\end{align}
To see the claim for the case of the empty set, notice that, given a state~$\tau$, there exists at most one state $\sigma$ such that $\rho_{v}(\sigma,\tau) > 0 $ and that $U(\tau) \setminus \left( U(\sigma) \setminus \{v \}  \right) = \emptyset$ . This is because we can uniquely reconstruct $\sigma $ from $\tau$ by removing the value from $v$ at $\tau$.  Then we have
\begin{align*}
\frac{\mu(\sigma) }{ \mu(\tau) } \rho_v(\sigma,\tau)=  \frac{\prod_{ u \in V \setminus U(\sigma) }  P(\sigma(u) )}{ \prod_{ u \in V \setminus U(\tau) }  P(\tau(u) ) }    P(\tau(v) )   = \frac{1}{ P(\tau(v) )  } P(\tau(v) ) = 1 .
\end{align*}
To see the claim for the case where $ S = c$, consider the set  $\mathrm{viol}(c)$ consisting of the set of assignments of the variables of $c$ that violate $c$. Notice now that  for every state $\tau \in \Omega$  there is an injection from the set of states $\sigma$ such that  that $\rho_{v}(\sigma,\tau) > 0$ and  $S = U(\tau) \setminus  \left( U(\sigma) \setminus \{v \}  \right)$ to $\mathrm{viol}(c)$.  This is because $c$ should be violated at  each such state~$\sigma$, and hence each state $\sigma$ should be of the form $ \sigma = \tau_{\alpha}$ for $\alpha \in \mathrm{viol}(c)$, where $\tau_{\alpha}$ is the state induced by $\tau$ when assigning $\alpha$ to the variables of $c$.  Observe further that, for every state of the form $\tau_{\alpha}, \alpha \in \mathrm{viol}(c)$, we have that
\begin{align}\label{deuteri_backtracking}
\frac{ \mu(\tau_{\alpha} ) }  { \mu(\tau)  } \rho_{v}(\tau_{\alpha},\tau) =   \left( \prod_{ u \in c \setminus \{ v \} } P \left( X_u = \tau_{\alpha} (u) \right)   \right) P(  X_v= \tau(v)  )  = P\left(  A_{c}^{\alpha}  \right) ,
\end{align}
where $P(A_{c}^{\alpha}  ) $ is the probability of the event that the variables of $c$ receive assignment $\alpha$.  Combining~\eqref{deuteri_backtracking} with~\eqref{backtracking_prwti} and the fact that  $P(A_c ) = \sum_{ \alpha \in \mathrm{viol}(c) } P(A_{c}^{\alpha} )$ concludes the proof of Lemma~\ref{weights_product}.
\end{proof}

Finally, plugging Lemma~\ref{weights_product} into Theorem~\ref{soft} concludes the proof of Theorem~\ref{BackProduct}. 

%

\subsection{Acyclic edge coloring}

An edge-coloring of a graph is \emph{proper} if all edges incident to each vertex have distinct colors. A proper edge coloring is \emph{acyclic} if it has no bichromatic cycles, i.e., no cycle receives exactly two (alternating) colors. The smallest number of colors for which a graph $G$ has an acyclic edge-coloring  is denoted by $\chi'_a(G)$.

Acyclic Edge Coloring was originally motivated by the work of Coleman et al.~\cite{coleman2,coleman1} on the efficient computation of Hessians and, since then, there has been a series of papers~\cite{alon_lll,acyclic,Haeupler_jacm,CLLL,mike_stoc,Ndreca} that upper bound $\chi'_a(G)$ for graphs with bounded degree.  The current best result was given recently by Giotis {\it et al.}~\cite{Kirousis}, who showed that  $ \chi'_a(G) \le 3.74 \Delta$ in graphs with maximum degree~$\Delta$.

\subsubsection{A simple backtracking algorithm}\label{AECARA}

We  show how one can apply Theorem~\ref{soft} to recover the main application of the framework of~\cite{acyclic} with a much simpler proof.  We chose the result of~\cite{acyclic} because, although it 
is not the sharpest known bound for the acyclic chromatic index, it is a canonical example of how the
entropy compression method is applied to analyze backtracking algorithms and has inspired many other
results in the area.  (Indeed, the authors in~\cite{acyclic} already give several applications of 
their techniques to other problems besides acyclic edge coloring.)

Let $G $ be a graph with $m$ edges $E = \{ e_1, \ldots, e_m \}$ and suppose we have $q$ available colors. 
\begin{definition}\label{4available}
Given a graph $G=(V,E)$ and a  (possibly partial) edge-coloring of $G$, say that color $c$ is \emph{4-forbidden for $e \in E$} if assigning $c$ to $e$ would result in either a violation of proper edge-coloration, or  a bichromatic 4-cycle containing $e$. Say that $c$ is \emph{4-available} if it is not 4-forbidden.
\end{definition}

\begin{lemma}[\cite{acyclic}]\label{lem:2D}
In any proper edge-coloring of $G$, at most $2(\Delta-1)$ colors are 4-forbidden for any $e \in E$.
\end{lemma}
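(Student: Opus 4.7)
The plan is to fix an edge $e = \{u, v\}$ and carefully account for the 4-forbidden colors by separating them into two kinds: those that are forbidden because they already appear on an edge incident to $u$ or $v$ (the proper-coloring obstruction), and those that are forbidden only because they would create a bichromatic 4-cycle through $e$. I would introduce $A$ = set of colors appearing on edges at $u$ other than $e$, and $B$ = set of colors appearing on edges at $v$ other than $e$. Since $\deg(u), \deg(v) \le \Delta$, we immediately have $|A|, |B| \le \Delta - 1$, and the colors forbidden by properness are exactly $A \cup B$.

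Next, let $C$ denote the set of colors whose only reason for being 4-forbidden is that assigning them to $e$ would produce a bichromatic 4-cycle. If $c \in C$, then there exist vertices $w, x$ such that the 4-cycle $u \to v \to w \to x \to u$ becomes bichromatic, which forces the edges $\{v, w\}$ and $\{u, x\}$ to carry a common color $c'$ and the edge $\{w, x\}$ to carry the color $c$. In particular $c' \in A \cap B$. Because the current coloring is proper, $c'$ appears on at most one edge incident to $v$ and at most one incident to $u$; hence $c'$ uniquely determines both $w$ and $x$, and therefore uniquely determines $c$ as the color of $\{w, x\}$.

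This gives an injection $c \mapsto c'$ from $C$ into $A \cap B$, so $|C| \le |A \cap B|$. Combining with the inclusion-exclusion bound
\[
|A \cup B \cup C| \;\le\; |A \cup B| + |C| \;\le\; \bigl(|A| + |B| - |A \cap B|\bigr) + |A \cap B| \;=\; |A| + |B| \;\le\; 2(\Delta - 1),
\]
which is exactly the claimed bound.

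The only subtle point, and the one I would be most careful about, is verifying that the map $c \mapsto c'$ is genuinely well-defined and injective: the bichromatic 4-cycle could in principle pivot through different choices of $(w, x)$, but properness of the coloring collapses these choices and pins down a unique $c'$ for each $c$. Once that is in place, the rest is a clean counting argument and no further graph-theoretic input is needed.
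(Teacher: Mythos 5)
Your proof is correct and follows essentially the same counting as the paper: the paper charges each 4-forbidden color either to an edge at $u$ or to an edge at $v$ (using, for each edge at $v$ whose color already appears at $u$, the unique color completing a 4-cycle with $e$), which is exactly your injection of $C$ into $A \cap B$. One small remark: a color $c \in C$ may admit several witnesses $c'$, so the map $c \mapsto c'$ requires an arbitrary choice of witness; what properness actually gives is the reverse uniqueness (each $c'$ determines at most one $c$), which is all that injectivity --- and hence $|C| \le |A \cap B|$ --- needs.
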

\begin{proof}
The 4-forbidden colors for $e = \{u,v\}$ can be enumerated as: (i)~the colors on edges adjacent to $u$; and (ii)~for each edge $e_v$ adjacent to $v$, either the color of $e_v$ (if no edge with that color is adjacent to $u$), or the color of some edge $e'$ which together with $e, e_v$ and an edge adjacent to $u$ form a cycle of length $4$. 
\end{proof}

Consider the following backtracking algorithm for Acyclic Edge Coloring with $q = 2(\Delta-1) + Q $ colors. At each step, choose the lowest indexed  uncolored edge $e$ and attempt to color it choosing uniformly at random among the $4$-available colors for $e$.   If one or more bichromatic cycles are created, then choose the lowest indexed one of them, say $C =  \{  e_{i_1}, e_{i_2}, \ldots, e_{i_{2 \ell} }  = e \}$, and remove the colors from all its edges except $e_{i_{1}}$  and $e_{i_{2} } $.

The main result of~\cite{acyclic} states that every graph $G$ admits an acyclic edge coloring with  $ q > 4(\Delta -1)$  colors. Moreover, such a coloring can be found in $O \left( |E| |V|  \Delta^2 \ln \Delta \right)$ time with high probability.

We prove the following theorem, which achieves the same bound on~$q$ and 
improves the running time bound when the graph is dense.  More significantly, our
analysis is much simpler.
\begin{theorem}\label{our_aec}
Every graph $G$ admits an acyclic edge coloring with  $ q > 4(\Delta -1)$  colors. Moreover, such a coloring can be found in $O \left(|E| |V|\Delta \right)$ time with high probability.
\end{theorem}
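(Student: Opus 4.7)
The plan is to apply Theorem~\ref{soft} to the algorithm of Section~\ref{AECARA}, with $\Omega$ the set of proper partial edge-colorings of $G$ that contain no bichromatic cycle and, for each edge $e$, the flaw $f_e = \{\sigma \in \Omega : e \text{ is uncolored in } \sigma\}$. Since the algorithm only uncolors edges in direct response to the edge it just colored creating a bichromatic cycle, every $f_e$ is primary and a flawless state is a full acyclic edge-coloring. I will take $\mu$ to be the uniform measure on $\Omega$, so that each ratio $\mu(\sigma)/\mu(\tau)$ equals $1$ and each charge $\gamma_e^S$ reduces to a sum of transition probabilities $\rho_e(\sigma,\tau)$ over preimages $\sigma \in \mathrm{In}_e^S(\tau)$.

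The heart of the argument is the charge computation. Fix a target state $\tau$. Every preimage $\sigma$ with $\rho_e(\sigma,\tau)>0$ falls in one of two cases: either (i)~coloring $e$ in $\sigma$ creates no bichromatic cycle (so the set of introduced flaws is empty), or (ii)~it creates a bichromatic cycle $C$ of length $2\ell$ through $e$ and the subsequent uncoloring introduces exactly the set $S(C)$ of $2\ell-2$ flaws corresponding to the unkept edges of $C$. The $4$-availability restriction (Definition~\ref{4available} and Lemma~\ref{lem:2D}) rules out bichromatic $4$-cycles, so in case (ii) we have $\ell \ge 3$. In both cases the preimage $\sigma$ is \emph{uniquely} determined by $\tau$ together with $C$: the two kept edges $e_{i_1},e_{i_2}$ of $C$ carry the two bichromatic colors in $\tau$, pinning down the alternating coloring $\sigma$ must have on $C\setminus\{e\}$, while all edges outside $C$ agree with $\tau$. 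Moreover $\rho_e(\sigma,\tau) \le 1/Q$ with $Q = q - 2(\Delta-1)$, and a standard walk-count bounds the number of cycles of length $2\ell$ through $e$ by $(\Delta-1)^{2\ell-2}$. Combining these, regardless of whether distinct cycles produce distinct $S(C)$, we get
\begin{equation*}
\sum_{S\subseteq F}\gamma_e^S\,\psi^{|S|} \;\le\; \frac{1}{Q}\biggl(1 + \sum_{\ell\ge 3}\bigl((\Delta-1)\psi\bigr)^{2\ell-2}\biggr) \enspace.
\end{equation*}

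Setting $\psi_e=\psi$ uniformly and writing $x = (\Delta-1)\psi\in(0,1)$, condition~\eqref{eq:main_soft_condition} reduces to $(\Delta-1)(1-x^2+x^4) < xQ(1-x^2)$. The identity $(x^2+x-1)^2 = x^4+2x^3-x^2-2x+1$ implies $(1-x^2+x^4)/(x-x^3) \ge 2$ with equality at $x = (\sqrt{5}-1)/2 = 1/\phi$, so the condition holds whenever $Q > 2(\Delta-1)$, i.e.\ $q > 4(\Delta-1)$, with constant slack $\delta = 1-\zeta_{\max} = \Omega(1)$. For the running time I would invoke the refined estimate in Remark~\ref{cor:faster_back}: every flaw is primary, every flaw is present in the all-uncolored initial state, and the optimal $\psi$ satisfies $\psi\le 1$, so $T_0 = \log_2\mu_{\min}^{-1} = O(|E|\log\Delta)$; combined with the standard per-step cost of $O(|V|\Delta)$ for enumerating $4$-forbidden colors and detecting a newly created bichromatic cycle through $e$ by alternating-color BFS from its endpoints, this yields the claimed bound. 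The main obstacle I anticipate is nailing down the exact constant in the optimization, but the golden-ratio identity makes this step clean and tight.
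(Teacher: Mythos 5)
Your proposal is correct and follows essentially the same route as the paper's proof: the same primary flaws $f_e$ over partial acyclic colorings with the uniform measure, the same charge computation via unique reconstruction of the preimage from $\tau$ and the bichromatic cycle $C$ (with $\rho_e \le 1/Q$ from Lemma~\ref{lem:2D} and at most $(\Delta-1)^{2\ell-2}$ cycles of length $2\ell \ge 6$ through $e$), the same geometric series, and the same golden-ratio optimization — you parametrize by $x=(\Delta-1)\psi$ where the paper uses $\alpha = Q\psi$, but the optimum $x^\ast=(\sqrt5-1)/2$ is exactly the paper's $\alpha^\ast/c$. The only loosenesses (constant slack $\delta$ really requires $q \ge (4+\epsilon)(\Delta-1)$, and $\log_2\mu_{\min}^{-1}$ carries a $\log\Delta$ factor) are shared with, or even slightly improved over, the paper's own bookkeeping.
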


\begin{proof}

Let $\Omega$ be the set of partial acyclic edge colorings of $G$. For each edge $e$, let $f_{e} $ be the subset (flaw) of $\Omega$ that contains the partial acyclic edge colorings of $G$ in which $e$ is  uncolored. We will apply Theorem~\ref{soft} using the $\| \cdot \|_1$ norm and $M =  {\rm diag}(\frac{1}{| \Omega | })$.

We first compute the charges $\gamma_e^{S}$ for each edge $e$ and set of edges $S$. Notice that for $\gamma_e^S$ to be non-zero, it should either be that $S = \emptyset$, or that $S$ contains $e$ and  there exists a cycle $C = \{e_{i_1}, e_{i_2} \} \cup S$ so that, when a recoloring of $ e$ makes $C$ bichromatic, the backtracking step uncolors precisely the edges in $S$. With that in mind, for each edge $e$ and  each set $S$ that contains $e$, let $\mathcal{C}_e(S)$ denote the set of cycles with the latter property.

\begin{lemma}\label{AEC_weights}
For each edge $e$, let
\begin{align*}
\gamma_e^S=
\left\{
	\begin{array}{ll}
		\frac{1}{Q}	   				& \mbox{if } S = \emptyset \\
		\frac{|\mathcal{C}_e(S)| }{Q }   	& \mbox{if }  e \in S  \\
		0 							& \mbox{otherwise} .
	\end{array}
\right.   
\end{align*}
\end{lemma}

\begin{proof}
Notice that
\begin{align*}
\gamma_e^S  =  \max_{ \tau \in \Omega } \sum_{  \substack{ \sigma  \in f_e \\ S = U(\tau)  \setminus \left( U(\sigma)  \setminus \{ e \} \right) }} \rho_e (\sigma, \tau)  \le \max_{ \tau \in \Omega } \sum_{  \substack{ \sigma  \in f_e \\ S = U(\tau)  \setminus \left( U(\sigma)  \setminus \{ e \} \right) }} \frac{1}{Q} ,
\end{align*}
since, according to Lemma~\ref{lem:2D}, $\rho_e(\sigma,\tau) \le \frac{1}{Q}$ for each pair $(\sigma, \tau) \in f_e \times \Omega$.
The proof follows by observing that, for each state $\tau$:
\begin{itemize}
 \item If $S = \emptyset$, then there exists at most one state $\sigma$ such that $\rho_e(\sigma, \tau) > 0$  and $U(\tau) \setminus \left( U(\sigma) \setminus  \{ \mathrm{e} \} \right)  = \emptyset$ (we can reconstruct $\sigma$ from $\tau$ by uncoloring $e$).
\item If $ S \ni e $ and $|S| = 2 \ell -2 $, then there exist at most $| \mathcal{C}_e(S)  |$ states such that $\rho_e(\sigma,\tau)> 0$ and $S = U(\tau) \setminus \left( U(\sigma)  \setminus  \{ \mathrm{e}  \}\right) $. Given  a cycle $ C = S \cup \{ e_{i_1}, e_{i_2} \} $ we reconstruct $\sigma$ from $\tau$ by finding the colors of edges  in $S \setminus \{ e \} $ from $ \tau(e_{i_1}),  \tau(e_{i_2})$, exploiting the facts that the backtracking step corresponds to an uncoloring of a bichromatic cycle,  edge~$e$ is uncolored, and every other edge has the same color as in~$\tau$. 
\item  For all other $S$  there exists no state $\sigma$ such that $\rho( \sigma ,  \tau) > 0$ and  $S =U(\tau) \setminus \left( U(\sigma) \setminus \{ \mathrm{e} \} \right)  $.

\end{itemize}

\end{proof}

Observe that there are at most $(\Delta-1)^{2 \ell-2}$ cycles of length $ 2 \ell$ containing a specific edge $e$. In other words, there exist at most $(\Delta-1)^{ 2 \ell -3}$  sets of edges $S $  of size $2 \ell -2$ that contain $e$ and such that $\gamma_e^S > 0$ and, in addition, note that we always have $| \mathcal{C}_e(S) | \le \Delta-1$.

Thus, if $ Q = c (\Delta-1)$ for some constant $c$, setting $\psi_e =  \alpha \gamma_{\emptyset}^e  =   \frac{\alpha}{ Q} $, where  $\alpha\in (1,c)$ is a constant, Lemma~\ref{AEC_weights} implies
\begin{eqnarray*}
 \frac{ 1 }{ \psi_e } \left(   \sum_{   S \subseteq E }  \gamma_e^{S} \prod_{ e \in S} \psi_j  \right)  & \le & \min_{ \alpha  \in (1,c) } \left( \frac{1}{ \alpha } +  \sum_{i= 3}^{\infty} \left(\frac{ \Delta-1}{Q} \right)^{2i-2} \alpha^{2i-3} \right) \\
 & \le & \min_{\alpha   \in (1,c)} \left(  \frac{1}{\alpha} +  \frac{1}{c}  \sum_{i=3}^{\infty}  \left( \frac{\alpha}{ c} \right)^{2i-3 } \right) 	\\
 &   = & \min_{\alpha   \in (1,c) } \left( \frac{1}{\alpha} + \frac{  \alpha^3}{ c^2 (c^2 - \alpha^2) }    \right)  =  \frac{2}{c} \\
\end{eqnarray*}
for $\alpha^* = c  \left( \frac{ \sqrt{5} -1 }{2} \right)$. Thus, if $c > 2 $ the probability that the algorithm fails to find an acyclic edge coloring within $\frac{T_0 + s}{ \delta}$ steps is $2^{-s}$, where $\delta = 1 - \frac{2}{c}$, and, according to  Remark~\ref{cor:faster_back},
\begin{align*}
T_0 =  \log_2 |  \Omega |   = O( |E|  ).
\end{align*}
The proof is concluded by observing that each step can be performed in time $O(|V| \Delta) $ (the time it takes to find a $2$-colored cycle containing a given edge, if such a cycle exists, in a graph with a proper edge-coloring).
\end{proof}

\subsubsection{An application of the local cut lemma}

Bernshteyn~\cite{bernshteyn2017local} introduced a non-constructive generalized LLL condition, called the ``Local Cut Lemma", with the aim of to drawing connections between the LLL and the entropy compression method. He later applied it 
in~\cite{bernshteyn2016new} to the problem of Acyclic Edge Coloring, giving improved bounds assuming further constraints on the graph besides sparsity. In particular, he proved the following:

\begin{theorem}[\cite{bernshteyn2016new}]\label{first_bern}
Let $G$ be a graph with maximum degree $\Delta$ and let $H$ be a fixed  bipartite graph. If $G$ does not contain $H$ as 
 a subgraph, then there exists an acyclic edge coloring of $G$ using at most $3 (\Delta + o(1))$ colors.
\end{theorem}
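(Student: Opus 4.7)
The plan is to run the simple backtracking algorithm of Section~\ref{AECARA} with a reduced palette of $q = 3\Delta + g(\Delta)$ colors (for some slowly-growing $g(\Delta) = o(\Delta)$ to be chosen), and to reanalyze it via Theorem~\ref{soft} using a stronger cycle-count bound that exploits $H$-freeness. Setting $Q = q - 2(\Delta-1) = (1+o(1))\Delta$, Lemma~\ref{lem:2D} still guarantees at least $Q$ $4$-available colors at every step. Exactly as in the proof of Theorem~\ref{our_aec}, choose $\psi_e = \alpha/Q$ for a parameter $\alpha$ to be optimized; by Lemma~\ref{AEC_weights}, the convergence condition~\eqref{eq:main_soft_condition} for $f_e$ takes the form
\begin{align*}
\zeta_e \;\le\; \frac{1}{\alpha} + \sum_{\ell\ge 3}\frac{N_{2\ell}(e)\,\alpha^{2\ell-3}}{Q^{2\ell-2}},
\end{align*}
where $N_{2\ell}(e)$ is the number of $2\ell$-cycles of $G$ through $e$.

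The only additional ingredient is a refinement of the generic bound $N_{2\ell}(e) \le (\Delta-1)^{2\ell-2}$ under the $H$-freeness hypothesis. This is exactly the extremal-graph-theoretic content of Bernshteyn's proof in~\cite{bernshteyn2016new}, which can be abstracted via K\H{o}v\'ari--S\'os--Tur\'an-style arguments---using that if $G$ excludes a fixed bipartite $H$ with parts of sizes $s, t$, then $G$ is $K_{s,t}$-free---into an estimate of the form $N_{2\ell}(e) \le C_\ell(H)\,(\Delta-1)^{2\ell-2-\beta}$, where $\beta = \beta(H) > 0$ is an absolute constant and $C_\ell(H)$ grows at most polynomially in $\ell$. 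Plugging this into the display above and writing $c = Q/(\Delta-1) = 1 + o(1)$, the $\ell$-th tail term reads $(\Delta-1)^{-\beta}\,C_\ell(H)(\alpha/c)^{2\ell-3}/c$. Choosing $g(\Delta)$ so that $c - 1 = \Delta^{-\beta'}$ for an appropriately small $\beta' < \beta$, and $\alpha$ to be the midpoint of $(1, c)$, both terms of $\zeta_e$ can be driven strictly below $1$: $1/\alpha$ is $1 - \Omega(\Delta^{-\beta'})$, while the tail, being a geometric-type series in $\alpha/c \in (0, 1)$ times the prefactor $(\Delta-1)^{-\beta}$, is $o(\Delta^{-\beta'})$. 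Hence $\zeta_e < 1$ for all $\Delta$ large, and Theorem~\ref{soft} together with Remark~\ref{cor:faster_back} delivers a polynomial-time randomized algorithm producing an acyclic edge coloring with $3\Delta + o(\Delta)$ colors with high probability.

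The main obstacle is the Tur\'an-type cycle-count estimate itself, which is the combinatorial heart of the argument and is taken directly from~\cite{bernshteyn2016new}; our framework contributes no new extremal content here. What Theorem~\ref{soft} does contribute is the algorithmic machinery that, once the cycle-count bound is in hand, applies essentially as a black box, bypassing Bernshteyn's appeal to the Local Cut Lemma entirely. This is what allows the theorem to become constructive ``effortlessly''.
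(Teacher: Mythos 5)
Your proposal is correct and follows essentially the same route as the paper: run the backtracking algorithm of Section~\ref{AECARA} with the same charges (Lemma~\ref{AEC_weights}) and weights $\psi_e = \alpha/Q$, and replace the generic bound on the number of cycles through an edge by Bernshteyn's structural estimate for $H$-free graphs (the paper's Lemma~\ref{absence_lemma}), which lets the required ratio $Q/(\Delta-1)$ tend to $1$. Your explicit parametrization $c-1=\Delta^{-\beta'}$ with $\beta'$ chosen small enough is just a spelled-out version of the paper's optimization over $\alpha$ showing the needed constant approaches $1$ as $\Delta\to\infty$.
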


We now show how to use our framework to give a constructive proof of Theorem~\ref{first_bern}. This will follow immediately from the following structural lemma in~\cite{bernshteyn2016new}.

\begin{lemma}[\cite{bernshteyn2016new}]\label{absence_lemma}
There exist positive constants $\gamma,\delta$ such that the following holds. Let $G$ be a graph with maximum degree $\Delta$ that does not contain $H$ as a subgraph. Then, for any edge $e \in E(G)$ and for any integer $k \ge 4$, the number of cycles of length $k$ in $G$ that contain $e$
is at most $\gamma \Delta^{k-2-\delta }$.
\end{lemma}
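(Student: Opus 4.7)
The plan is to reduce the problem to a codegree bound and then execute a Kővári--Sós--Turán-type supersaturation argument. Let $A, B$ denote the two parts of the fixed bipartite graph $H$, with $|A|=s\le|B|=t$. Since $H$ is a subgraph of $K_{s,t}$, any graph containing $K_{s,t}$ also contains $H$; contrapositively, ``$G$ is $H$-free'' forces $G$ to be $K_{s,t}$-free. The only consequence of $H$-freeness used is the resulting codegree restriction: no $s$ distinct vertices of $G$ admit $t$ or more common neighbors in $G$.

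A $k$-cycle through $e=\{u,v\}$ corresponds bijectively to a simple $(k-1)$-path $P=(u=w_0,w_1,\dots,w_{k-1}=v)$ in $G-e$, and the trivial upper bound on the number $N_k(u,v)$ of such paths is $(\Delta-1)^{k-2}$, obtained by picking each $w_i$ among the neighbors of $w_{i-1}$. The goal is to extract a uniform multiplicative saving $\Delta^{\delta}$ with $\delta=\delta(H)>0$ depending only on $s,t$.

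The execution I have in mind is a double count, localized to the $(k-1)$-ball $\mathcal{B}$ around $e$ (which has at most $2\Delta^{k-1}$ vertices, making all subsequent constants depend only on $\Delta$ and $H$). Consider the bipartite incidence graph between the $s$-subsets of $V(\mathcal{B})$ and the vertices of $V(\mathcal{B})$, with $(T,x)$ incident whenever $x$ is adjacent in $G$ to every vertex of $T$; by the codegree restriction, each $T$ has at most $t-1$ incidences, capping the total by $(t-1)\binom{|V(\mathcal{B})|}{s}$. On the other hand, every path $P$ contributes a canonical family of such incidences: for an intermediate position $j\in\{2,\ldots,k-3\}$, include every pair $(T,w_j(P))$ where $T$ ranges over $s$-subsets of $N_G(w_j(P))$ containing the path-neighbor $w_{j+1}(P)$. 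Each path therefore contributes $\binom{\Delta-1}{s-1}$ incidences at the position $j$, for a total (after summing over $j$) of order $N_k(u,v)\cdot k\cdot\binom{\Delta-1}{s-1}$ path-incidences. Comparing this lower bound with the Kővári--Sós--Turán upper bound on total incidences, bounding $|V(\mathcal{B})|$ by $(k-1)(\Delta-1)^{k-2}$ and absorbing all $k$-dependent factors into the exponent $\Delta^{k-2-1/s}$ by a convexity ($\binom{d}{s}$ in $d$) argument, yields $N_k(u,v)\le\gamma(H)\,\Delta^{k-2-1/s}$, so one may take $\delta=1/s$.

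The main obstacle will be the multiplicity bookkeeping in the double count: two distinct paths $P,P'$ can contribute the \emph{same} pair $(T,x)$ to the incidence count, which risks degrading the argument into a triviality. Resolving this requires an averaging over the position $j\in\{2,\ldots,k-3\}$ so that, for a positive fraction of $j$, a typical path contributes to ``rare'' incidences whose multiplicity is bounded by a function of $s,t$ alone; a cleaner alternative is to weight each incidence by the reciprocal of its multiplicity (a standard Jensen step) and show that the weighted total still yields the desired exponent. This is the only step where uniformity of $\gamma,\delta$ in $k$ is non-obvious, and it is what dictates the final numerical value of $\delta$.
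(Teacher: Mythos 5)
The paper does not prove this lemma; it imports it from Bernshteyn~\cite{bernshteyn2016new} as a black box, so there is no in-paper proof to compare against. Read on its own terms, your sketch has a structural failure, not a bookkeeping issue. The $(k-1)$-ball $\mathcal{B}$ around $e$ has $\Theta(\Delta^{k-1})$ vertices, so the K\H{o}v\'{a}ri--S\'{o}s--Tur\'{a}n/codegree cap on incidences is $(t-1)\binom{|V(\mathcal{B})|}{s}=\Theta\bigl(\Delta^{(k-1)s}\bigr)$. Even if every incidence had multiplicity exactly $1$, dividing this by the per-path contribution $k\binom{\Delta-1}{s-1}=\Theta(k\Delta^{s-1})$ would bound the path count only by $\Theta\bigl(\Delta^{(k-2)s+1}/k\bigr)$, which for every $s\ge 2$ and $k\ge 4$ is \emph{larger} than the trivial $(\Delta-1)^{k-2}$; the comparison is vacuous because the ball is too large for KST to have any purchase. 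The multiplicity problem you flag compounds this and is not a constant depending on $s,t$: a fixed incidence $(T,x)$ at position $j$ arises from any length-$j$ path $u\to x$ combined with any length-$(k-2-j)$ path from a vertex of $T$ to $v$, giving multiplicity of order $\Delta^{k-4}$. Averaging over the $O(k)$ positions, or a Jensen reweighting, cannot absorb a $\Delta^{k-4}$ factor.

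The missing idea is that the codegree restriction must be exploited \emph{locally}, in a neighborhood of radius depending only on $H$ and hence of size $\mathrm{poly}(\Delta)$ with an exponent bounded in terms of $H$ alone --- not in the global $(k-1)$-ball. Concretely, one needs a lemma of the type: in an $H$-free graph of maximum degree $\Delta$, the number of walks of some \emph{fixed} length $\ell_0=\ell_0(H)$ between any two prescribed vertices is $O\bigl(\Delta^{\ell_0-1-\delta}\bigr)$ with $\delta=\delta(H)>0$. Applying that pointwise bound to a single window of length $\ell_0$ inside the $(k-1)$-path $u\to v$, and counting the remaining $k-1-\ell_0$ steps trivially by $\Delta^{k-1-\ell_0}$, yields $O\bigl(\Delta^{k-2-\delta}\bigr)$ uniformly in $k\ge \ell_0+1$; the finitely many smaller $k$ are absorbed into $\gamma$. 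Your sketch neither proves such a local window lemma nor can the global incidence count substitute for it, and the asserted conclusion $N_k(u,v)\le\gamma(H)\Delta^{k-2-1/s}$ is stated rather than derived.
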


\begin{proof}[Constructive Proof of Theorem~\ref{first_bern}]
Notice that in this case, making almost identical calculations to those in the proof of Theorem~\ref{our_aec}, invoking Lemma~\ref{absence_lemma} to upper bound the number of cycles that contain $e$  and setting $\alpha = \frac{c}{\beta}$, we obtain
\begin{align*}
 \frac{ 1 }{ \psi_e } \left( \sum_{S \subseteq E  }   \gamma_e^{S} \prod_{ h \in S} \psi_h \right)    \le   \min_{\alpha \in (1,c) }  \left(  \frac{1}{\alpha} + \frac{ (\alpha)^3 \gamma \Delta^{-\delta }  }{ c^2 (c^2 - \alpha^2)  }  \right)   = \frac{1}{c}  \min_{ \beta > 1}  \left( \beta + \frac{ \beta \gamma \Delta^{-\delta} } { \beta ( \beta^2-1) }   \right) .
\end{align*}
Thus, as  $\Delta$ grows,  the value of $c$ required for the algorithm to terminate approaches $1$, concluding the proof.
\end{proof}

%
%

%
%
%
%

\section{Acknowledgements}
We are grateful to  Paris Syminelakis for various insightful remarks and comments. We also thank David Harris for helpful discussions. 

\bibliographystyle{plain}
\bibliography{kolmo}

\newpage
\appendix
\section{Matrices and norms}\label{background}

Let  $\| \cdot \|$ be any norm over vectors in $\mathbb{R}^n$. The \emph{dual} norm, also over vectors in  $\mathbb{R}^n$,  is defined as   
\begin{align*}
 \| z  \|_{*} = \sup_{ \| x \| = 1} |  z^{\top } x|  .
\end{align*}
For example, the dual norm of $\|  \cdot  \|_{\infty}$ is  $\| \cdot \|_1 $. It can be seen that $\| \cdot \|_{**} = \| \cdot \| $ and that for any  vectors $x, z$,
\begin{align}\label{basic_inequality}
z^{\top} x  = \|x \|   \left(\frac{ z^{\top}x }{ \| x \| }  \right) \le    \| z  \|_{*} \| x \|  .
\end{align}
The corresponding \emph{operator norm}, over  $n \times n$ real matrices, is defined as
\begin{align*}
\| A \| \equiv \sup_{ \|x\| = 1} 
\|  A x\|
.  
\end{align*}
For example, if $A$ is a matrix with non-negative entries then $\| A \|_{\infty}$ and $\| A \|_{1 }$ can be  seen to be  the maximum \emph{row}  and \emph{column} sum of $A$, respectively.  
Operator norms are submultiplicative, i.e., for every operator  norm $\|  \cdot \|$ and any two $n \times n$  matrices $A, B$,
\begin{align}\label{submult_2}
\| A B \| \le \| A \| \| B \| .
\end{align}
Finally, for any vector norm $\| \cdot \|$, any row vector $x^{\top}$ and any $n \times n$ matrix $A$, we have
\begin{align}\label{submult_1}
\| x^{\top}  A \|_*  \le \| x^{\top} \|_* \| A \| .
\end{align}

\section{The matrix norms framework in action}\label{app:spectral}

In this appendix we illustrate how the framework of matrix norms captures a wide variety of convergence arguments for local search algorithms, both LLL-inspired ones and others. Recall from our discussion in the introduction that our goal is to bound $\rho(A)$, the spectral radius of the transition matrix between flawed states.

We begin with the classical potential function argument.
Consider any function $\phi$  on $\Omega$ such that $\phi(\sigma) > 0$ for $ \sigma \in \Omega^*$, while $\phi(\sigma) =0 $ for $\sigma \notin \Omega^*$. 
The potential argument asserts that eventually $\phi =0$ (i.e., the particle escapes $\Omega^*$) if $\phi$ is always reduced in expectation, i.e., if for every  $\sigma \in \Omega^*$,
\begin{align}\label{eq:potential}
\sum_{ \tau \in \Omega} P[\sigma,\tau] \phi(\tau) < \phi(\sigma) .
\end{align}
To express this argument via matrix norms, let $\Amat' = M \Amat M^{-1}$ where $M$ is the diagonal $|\Omega^*| \times |\Omega^*|$ matrix ${\rm diag}(1/\phi(\sigma))$. Thus, $\Amat'[\sigma,\tau] = \Amat[\sigma,\tau] \phi(\tau) / \phi(\sigma)$. Recalling that $\| \cdot \|_{\infty}$ is the maximum row sum of a matrix, we see that condition~\eqref{eq:potential} for the potential function is nothing other than $\|\Amat' \|_{\infty} < 1$, implying $\rho(A) = \rho(A') \le\|\Amat' \|_{\infty} <1$. \smallskip
%

Next we show how the same approach, using the dual matrix norm $\|\cdot\|_{1}$, captures the Moser-Tardos algorithm for $k$-SAT. Given a $k$-SAT formula on $n$ variables with clauses $c_i$, let $\Omega=\{0,1\}^n$ denote the set of all assignments, and $\Omega^*$ the set of non-satisfying assignments. To simplify exposition, we assume that in each step, the algorithm picks the lowest-indexed unsatisfied clause~$c_i$ and resamples all variables in~$c_i$. Thus, the state-evolution is a Markov chain, and if we denote its transition matrix by $P$, we are interested in the submatrix $A$ that is the projection of~$P$ onto $\Omega^*$.
For each clause $c_i$, let $\Amat_i$ be the $|\Omega^*| \times |\Omega^*|$ submatrix of $A$ comprising all rows (states) where the resampled clause is $c_i$. (All other rows of $A_i$ are~0.)
For $t \geq 1$, let $\mathcal{W}_t$ contain every $t$-sequence of (indices of) clauses that has non-zero probability of being the first $t$ clauses resampled by the algorithm. In other words, $\mathcal{W}_t$ is the set of all \mbox{$t$-sequences} of indices from $[m]$ corresponding to non-vanishing $t$-products of matrices from $\{A_1, \ldots, A_m\}$, i.e., $\mathcal{W}_t = \{W=(w_i) \in [m]^t : \prod_{i=1}^t A_{w_i} \neq 0\}$. For every operator norm $\|\cdot\|$ we get:
\begin{align}\label{eq:lllnorm}
	\rho(\Amat)^t = \rho(A^t) \le \left\|A^t\right\|
	= \biggl\|  \biggl( \sum_{i \in [m] } \Amat_i \biggr)^t \biggr\|
	= \biggl\|\sum_{ W \in \mathcal{W}_t } \prod_{i =1}^t  \Amat_{w_i} \biggr\|
	\le \sum_{ W \in \mathcal{W}_t } \biggl\|\prod_{i =1}^t  \Amat_{w_i}   \biggr\|
	\le \sum_{ W \in \mathcal{W}_t } \prod_{i =1}^t  \left\|\Amat_{w_i}   \right\|.
\end{align}
The first inequality here follows from the fact that $\rho(A) \le \|A\|$ for any operator norm $\|\cdot\|$, the second inequality is the triangle inequality, and the third follows by submultiplicativity of operator norms.

To get a favorable bound, we will apply~\eqref{eq:lllnorm}  with the norm $\|\cdot\|_1$, i.e., the maximum column sum. We see that for all $j \in [m]$, every column of $A_j$ has at most one non-zero entry, since $A_j(\sigma,\tau)>0$ only if $\sigma$ is the (unique) mutation of $\tau$ so that $c_j$ is violated. Recalling that all non-zero entries of $A$ equal $2^{-k}$, we conclude that $\|A_j\|_1 = 2^{-k}$ for all $j \in [m]$. Therefore, $\|\Amat^t\|_1\le|\mathcal{W}_t| 2^{-kt}$. To bound $|\mathcal{W}_t |$ we use a simple necessary condition for membership in $\mathcal{W}_t$ which, by a standard counting argument, implies that if each clause shares variables with at most $\Delta$ other clauses then $|\mathcal{W}_t| \le 2^m (\mathrm{e} \Delta)^t$. Therefore $\rho(A)^t \le 2^m (\mathrm{e} \Delta 2^{-k})^t$, implying that if $\Delta  < 2^k/ \mathrm{e}$ then $1 > \|\Amat\|_1 \ge \rho(\Amat)$ and the algorithm terminates within $O(m)$ steps with high probability.  This matches exactly the Moser-Tardos condition (which is tight).

A very similar argument can be used to capture even the most general existing versions of the algorithmic
LLL~\cite{AIJACM,HV,AIK}, which are described by 
arbitrary flaws and, for each flaw~$f_i$, an arbitrary corresponding transition matrix~$\Amat_i$ for addressing the flaw.
Note that \eqref{eq:lllnorm} is in essence a weighted counting
of witness sequences, 
the weight of each sequence 
being the product of the norms
$\| \Amat_{w_i}\|_{1}$. Observe also that in our $k$-SAT example above, the only probabilistic notion was the transition matrix $A$ and we did not make any implicit or explicit reference to a probability measure $\mu$. To cast general algorithmic LLL arguments in this same form, any measure $\mu$ is incorporated as a \emph{change of basis} for the transition matrix~$\Amat$, i.e., we bound $\|A'\|_1 = \|M \Amat M^{-1}\|_1$ as $\sum_{ W \in \mathcal{W}_t } \prod_{i =1}^t \| M \Amat_{w_i} M^{-1} \|_1$, where $M$ is the diagonal $|\Omega^*| \times |\Omega^*|$ matrix ${\rm diag}(\mu(\sigma))$,  similarly to the potential function argument. We thus see that the measure $\mu$  is nothing other than a tool for analyzing the progress of the algorithm. 

\section{Comparison with previous LLL conditions}\label{comparison}

Here we give background on the LLL and explain how our main theorem compares with previous algorithmic LLL conditions. 

\subsection{Non-constructive conditions}

The original statement of the LLL asserts, roughly, that, given a family of ``bad" events in a probability space, if each bad event individually is not very likely and, in addition, is independent of all but a small number of other bad events, then the probability 
of avoiding all bad events is strictly positive.  Erd\H{os} and Spencer~\cite{LopsTrav} later noted that independence in the LLL can be replaced by positive correlation, yielding the original version of what is known as  the \emph{Lopsided} LLL,  more sophisticated  versions of which have also been established in~\cite{albert1995multicoloured,dudek2012rainbow}. Below we state the Lopsided LLL in  its most powerful form that holds for arbitrary probability spaces and families of bad events (see e.g.,~\cite[p.228]{mike_book}). 
\newpage
\begin{gLLL}\label{generalLOPS}
Let $(\Omega,\mu)$ be a probability space and let $\mathcal{B} = \{B_1, B_2,\ldots,B_m\}$ be a set of $m$ (bad) events. For each $i \in [m]$, let $L(i) \subseteq	 [m] \setminus \{i\}$ be such that $\mu(B_i \mid \bigcap_{j \in S} \overline{B_j}) \le  b_i$ for every $S \subseteq [m] \setminus (L(i) \cup \{i\})$. If there exist positive real numbers $\{\psi_i\}_{i=1}^m$ such that, for all $i \in [m]$,
\begin{equation}\label{eq:LLL}
\frac{b_i}{\psi_i}  \sum_{ S \subseteq   L(i) \cup \{i \} }  \prod_{j \in S} \psi_j \le 1  ,
\end{equation}
then the probability that none of the events in $\mathcal{B}$ occurs is at least $\prod_{i=1}^m 1/(1+\psi_i)> 0$. 
\end{gLLL}

Writing $x_i = \psi_i / (1+\psi_i) \in [0,1)$, condition~\eqref{eq:LLL} takes the more familiar form $b_i \le x_i \prod_{j \in L(i)} (1- x_j)$. The form in~\eqref{eq:LLL}, though, is more amenable to refinement (under additional assumptions)  and comparison. 

The above form of the LLL is motivated by the fact that, in complex applications, small but non-vanishing correlations tend to travel arbitrarily far in the space~$\Omega$.  To isolate these dependencies so that they can be treated locally,  it can be crucial~\cite{albert1995multicoloured,dudek2012rainbow,kahnChrom,kahnListChrom} to allow mild negative correlations between each bad event $B_i$ and the events outside its ``special" set $L(i)$, achieved by allowing $b_i \ge \mu(B_i)$.  

\subsubsection{Extensions}\label{non_constructive_extensions}

In the setting of the General LLL above, we often define the \emph{lopsidependency} graph, which is  the directed graph on $[m]$ in which each vertex $i$ points to the vertices in~$L(i)$. 
In many applications the lopsidependency graph is in fact symmetric, and even when it is not
it is useful to consider an undirected version in which the neighborhood of each vertex $i\in[m]$
includes~$L(i)$ and possibly some other vertices.  
(One can trivially achieve this by simply ignoring the direction of the edges
in the lopsidependency graph.)  Given such an undirected graph~$G$, one can restrict
the sum in~\eqref{eq:LLL} to $S \subseteq L(i)$ that are {\it independent\/} in~$G$. This yields the so-called \emph{cluster expansion} condition~\cite{bissacot}. 
It should be clear that, if the lopsidependency graph is already symmetric (so that the neighborhood
of~$i$ is exactly~$L(i)$), then this leads to
an improvement over~\eqref{eq:LLL}.

In fact, for a given graph $G$ as above, the \emph{exact} condition for avoiding all bad events as a function of the bounds $\{b_i\}_{i=1}^m$ is known, and is due to Shearer~\cite{Shearer}. Unlike the cluster-expansion condition, though, Shearer's condition involves a separate condition for every independent set in $G$. Moreover, when $\mu$ is a product measure and the dependencies between events can be expressed in terms of variable sharing (that is, under the assumptions of the variable setting~\cite{MT}), several works~\cite{CompHarr,he2017variable,szege_meet} have shown that Shearer's condition can be improved, i.e., that more permissive conditions exist. 


Finally, Scott and Sokal~\cite{scott2005repulsive} introduced a so-called \emph{soft-core} LLL condition, in an effort to quantify interaction strengths between bad events (whereas in all other works two bad events either interact or they don't). Unlike our present work, which quantifies general point-to-set interactions, the condition in~\cite{scott2005repulsive} only quantifies \emph{pairwise} (i.e., point-to-point) interactions. Finding combinatorial applications for the soft-core LLL condition was left as an open question in~\cite{scott2005repulsive}. To the best of our knowledge, it remains open.  

\subsection{Constructive conditions}

Using the framework of Section~\ref{sec:informal} for local search algorithms, we say that flaw $f_i$ \emph{causes} flaw $f_j$, and write $i \to j$, if there exist $\sigma \in f_i, \tau \in f_j$ such that $\rho_i(\sigma, \tau)>0$ and the transition $\sigma \to \tau$ introduces flaw $f_j$. (Thus, causality is to be interpreted as \emph{potential} causality.) Let $\Gamma(i) = \{j : i \to j\}$ be the set of flaws caused by $f_i$. We  call the digraph over $[m]$ in which each vertex $i$ points to the vertices in $\Gamma(i)$ \mbox{the \emph{causality} digraph.}

Let $\mu$ be an arbitrary probability measure on~$\Omega$. For $i \in [m]$ and $\tau \in \Omega$, let $\nu_i(\tau)$ be the probability of ending up in state $\tau$ after sampling a state $\sigma \in f_i$ according to $\mu$, and then addressing $f_i$ at $\sigma$. The \emph{distortion} associated with $f_i$ is defined as
\begin{align*}
d_i :=  \max_{\tau \in \Omega } \frac{\nu_i (\tau) }{\mu (\tau) }  \ge 1 .
\end{align*}
Thus, the distortion of  $\mu$ associated with flaw $f_i$ is the greatest inflation of a state probability induced by sampling $\sigma \in f_i$ according to $\mu$ and addressing flaw $f_i$ at $\sigma$. As shown in~\cite{IS}, if we set $b_i = \mu(f_i) d_i$ for all $i \in [m]$, then a causality digraph is a lopsidependency graph for $\mu$, thus making the following algorithmic LLL condition, established in~\cite{AIK}, the algorithmic counterpart of the General LLL~\eqref{eq:LLL}: 

\begin{aLLL}
Let $\gamma_i := \mu(f_i) d_i$. If there exist positive real numbers $\{\psi_i\}_{i =1}^m$ such that, for all $i \in [m] $,
\begin{align} \label{eq:AlgoLLL}
 \frac{\gamma_i}{\psi_i} \sum_{ S \subseteq \Gamma(i) } \prod_{j \in S } \psi_j  < 1    ,
\end{align}
then a local search algorithm as above reaches a flawless object quickly with high probability.
\end{aLLL}

If $d_i = 1$, i.e., $\nu_i(\tau) = \mu(\tau)$ for all $\tau \in \Omega$, we say that the algorithm is a \emph{resampling oracle}~\cite{HV} for $f_i$. When one has resampling oracles for all flaws, i.e., $d_i = 1$ for all $i \in [m]$, then condition~\eqref{eq:LLL} is the algorithmic counterpart of the Lopsided LLL, as established by Harvey and Vondr\'ak~\cite{HV}. Condition~\eqref{eq:AlgoLLL} also subsumes the flaws/actions condition of~\cite{AIJACM}: in that setting $\rho_i(\sigma,\cdot)$ is uniform over the set of possible next states, while the analysis does not reference a measure $\mu$. Taking $\mu$ to be uniform and applying condition~\eqref{eq:AlgoLLL} in fact sharpens the condition of~\cite{AIJACM}.

Observe that a resampling oracle perfectly removes the conditioning on the old state $\sigma$ belonging to $f_i$, since the new state $\tau$ is distributed according to $\mu$.  For example, if $\mu$ is a product measure, this is precisely what is achieved by the resampling algorithm of Moser and Tardos~\cite{MT}. Requiring an algorithm to be a resampling oracle for every flaw may be impossible to achieve by local exploration within $\Omega$, i.e., by ``local search."  (Note that restricting to local search is crucial since longer-range resampling, even if it were possible, would tend to rapidly densify the causality digraph.)  Allowing distortion greater than~1 frees the algorithm from the strict resampling requirement of perfect deconditioning.
Optimizing the tradeoff between distortion and the density of the causality digraph has recently led to strong algorithmic results~\cite{AIK,LLLWTL,IS, molloy2017list}.

\subsubsection{Our new condition}

Our Theorem~\ref{thm:softprelim} is a strict generalization of the above algorithmic LLL condition~(\ref{eq:AlgoLLL}). To see this, observe that $\gamma_i^S \le \gamma_i^{\emptyset} = \gamma_i$ for every set $S \subseteq [m]$, as
\begin{align*}
\gamma_i^S  \le  \max_{\tau \in \Omega} \left\{\frac{1}{\mu(\tau)}\sum_{\sigma \in f_i} \mu(\sigma) \rho_i(\sigma,\tau) \right\}
= \max_{\tau \in \Omega} \frac{\nu(\tau)}{\mu(\tau)}  \mu(f_i)
= d_i \, \mu(f_i) = \gamma_i.
\end{align*}
Hence, since $\gamma_i^S = 0$ for $S \not\subseteq \Gamma(i)$, the l.h.s.\ of our condition~(\ref{eq:softprelim}) is never larger than the l.h.s.\ of~(\ref{eq:AlgoLLL}).

\subsubsection{Extensions}

Condition~\eqref{eq:AlgoLLL} can be improved under additional assumptions. Specifically, let $G$ be an \emph{undirected} graph on $[m]$ such that $\Gamma(i)$ is a subset of the neighbors of $i$, for every $i \in [m]$. (As discussed in Section~\ref{non_constructive_extensions}, one can again trivially get such a $G$ by ignoring the direction of edges in the causality graph.) In~\cite{AIK} it was shown that condition~\eqref{eq:AlgoLLL} can be replaced by the cluster expansion condition~\cite{bissacot} on $G$, while in~\cite{kolmofocs} it was shown that condition~\eqref{eq:AlgoLLL} can be replaced by Shearer's condition~\cite{Shearer}. Both of these conditions benefit by restricting consideration to independent sets of $G$, as discussed in Section~\ref{non_constructive_extensions}. Finally, Kolmogorov~\cite{kolmo_new} showed a new condition that interpolates between the cluster expansion condition and our new condition.

Harris and Srinivasan~\cite{LLLLBeyond,PartResmp1} have developed improved conditions for the convergence of algorithms operating in the so-called variable setting~\cite{MT}, based on refinements of the notion of dependency between bad events. These improvements are incomparable to condition~\eqref{eq:AlgoLLL}, as they do not apply to general local search algorithms (for instance, all algorithms in the variable setting are commutative~\cite{kolmofocs}).

\section{Proof of Lemma~\ref{key_bounds2}}\label{proof_of_key_bounds2}

Our computations are similar to the ones in~\cite{molloy2017list}. The following version of Chernoff Bounds will be useful:
\begin{lemma}\label{ChernoffNegative}
Suppose $\{X_i\}_{i=1}^m \in \{0,1\}$ are boolean variables, and set $Y_i = 1-X_i $, $X = \sum_{i=1}^{m} X_i$.  If $\{ Y_i \}_{i=1}^m $ are negatively correlated, then for any $ 0 < t \le \ex [X ]$
\begin{align*}
\Pr[  | X - \ex[ X] | > t  ] < 2 \exp\left(-\frac{t^2}{3 \ex [ X ]}\right) .
\end{align*}
\end{lemma}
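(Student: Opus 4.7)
The plan is to establish this lemma by the standard Chernoff/MGF method, applied directly to $X = \sum_{i=1}^m X_i$, with the hypothesis of negative correlation of the $Y_i$'s playing the role that independence would play in the textbook proof. Since the conclusion is a two-sided bound, I would bound each of $\Pr[X \ge \mathbb{E}[X]+t]$ and $\Pr[X \le \mathbb{E}[X]-t]$ separately by $\exp(-t^2/(3\mathbb{E}[X]))$ and finish with a union bound, absorbing the factor $2$ into the statement.

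The central step would be to prove the MGF factorization
\[
\mathbb{E}\bigl[e^{\lambda X}\bigr] \;\le\; \prod_{i=1}^m \mathbb{E}\bigl[e^{\lambda X_i}\bigr]
\]
for $\lambda$ of the appropriate sign for each tail. For $\lambda \le 0$, I would write $e^{\lambda X_i} = e^{\lambda(1-Y_i)} = e^{\lambda} + (1-e^{\lambda})Y_i$, note that the coefficient $(1-e^{\lambda})$ is nonnegative, expand the product $\prod_i \bigl(e^{\lambda} + (1-e^{\lambda})Y_i\bigr)$ into a sum of nonnegative terms of the form $e^{\lambda(m-|S|)}(1-e^{\lambda})^{|S|}\prod_{i\in S}Y_i$, and apply the negative-correlation inequality $\mathbb{E}\bigl[\prod_{i\in S}Y_i\bigr] = \Pr\bigl[\bigcap_{i\in S}\{Y_i=1\}\bigr] \le \prod_{i\in S}\mathbb{E}[Y_i]$ term by term. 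A symmetric expansion in terms of $X_i$ handles $\lambda \ge 0$, where the same subset-product inequality is used in the dual direction; I would read ``negatively correlated'' in the usual Panconesi--Srinivasan / Dubhashi--Ranjan sense, under which both directions are available.

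With the factorization in hand, the rest is the textbook Chernoff calculation. Using $\mathbb{E}\bigl[e^{\lambda X_i}\bigr] = 1 + p_i(e^{\lambda}-1) \le \exp\bigl(p_i(e^{\lambda}-1)\bigr)$ with $p_i = \mathbb{E}[X_i]$, I get $\mathbb{E}\bigl[e^{\lambda X}\bigr] \le \exp\bigl(\mathbb{E}[X](e^{\lambda}-1)\bigr)$. Applying Markov's inequality and optimizing $\lambda > 0$ yields the multiplicative upper-tail bound $\Pr[X \ge (1+\delta)\mathbb{E}[X]] \le \exp(-\mathbb{E}[X]\delta^2/3)$, and similarly the lower tail gives $\Pr[X \le (1-\delta)\mathbb{E}[X]] \le \exp(-\mathbb{E}[X]\delta^2/2)$, both valid for $0<\delta \le 1$. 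Substituting $\delta = t/\mathbb{E}[X]$, which is in $(0,1]$ precisely under the hypothesis $0 < t \le \mathbb{E}[X]$, gives the claimed bound.

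The main obstacle, and the only step that requires any thought, is the MGF factorization: the conclusion is false if ``negatively correlated'' is read in the narrowest (pairwise) sense, so the plan implicitly assumes the stronger subset-product/negatively-associated reading of the hypothesis. This is the convention used in all standard references on Chernoff-type bounds for negatively dependent variables; I would flag this in a short remark but not elaborate, since the remainder of the proof is then entirely routine.
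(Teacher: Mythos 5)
The paper does not prove this lemma: it is stated without proof as an off-the-shelf Chernoff inequality for negatively dependent indicators, with the surrounding text pointing to Molloy~\cite{molloy2017list} as the source of the computations that use it. So there is no ``paper's proof'' to compare against, and all I can assess is whether your reconstruction is sound.

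Your plan is the standard one and is essentially correct, with one caveat that is worth spelling out more precisely than your closing remark does. The decomposition $e^{\lambda X_i}=e^{\lambda}+(1-e^{\lambda})Y_i$ for $\lambda\le 0$, the nonnegativity of the coefficients, and the term-by-term use of $\ex\bigl[\prod_{i\in S}Y_i\bigr]\le\prod_{i\in S}\ex[Y_i]$ are exactly what is needed to get the lower-tail factorization from the stated hypothesis on the $Y_i$'s. For the upper tail you need the \emph{dual} inequality $\ex\bigl[\prod_{i\in S}X_i\bigr]\le\prod_{i\in S}\ex[X_i]$, and this is a genuinely different condition: the hypothesis is phrased only in terms of $Y_i=1-X_i$, and the ``negative cylinder'' / subset-product property is not preserved under complementation in general. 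It \emph{is} preserved if one strengthens ``negatively correlated'' to negative association, under which both families of inequalities hold simultaneously because NA is closed under applying a decreasing map coordinatewise. So the correct statement of your implicit assumption is ``NA,'' not merely ``subset-product for the $Y_i$'s.'' Your sentence claiming the Panconesi--Srinivasan/Dubhashi--Ranjan reading gives ``both directions'' glosses over this: in those references the $0$-correlation and $1$-correlation conditions are kept separate precisely because neither implies the other. Once that reading is adopted, the rest of your derivation (the $1+x\le e^x$ step, the optimization giving $\exp(-\delta^2\,\ex[X]/3)$ for the upper tail and $\exp(-\delta^2\,\ex[X]/2)$ for the lower tail on the range $0<\delta\le 1$, and the union bound) is the textbook calculation and yields exactly the stated constant. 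As a sanity check against the way the lemma is actually used in the paper: the lower-tail invocation in the proof of Lemma~\ref{key_bounds2}(a) relies on negative correlation of the $Y_c$'s, while the upper-tail invocation in Lemma~\ref{key_bounds2}(b) is for genuinely independent variables, so the two applications never stress the subtle direction.
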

\begin{proof}[Proof of part~\ref{eq:L6} of Lemma~\ref{key_bounds2}] 
Let $v \in V$ and $\sigma \in \Omega$ be arbitrary and let $\tau \in \Omega$ be the (random) output (state) of {\sc Recolor}$(v,\sigma)$. For each color $c \in \mathcal{L}_v$, let $P_v^c = \{u \in N_v : c \in R_u^v(\sigma)\}$ and define
\begin{align*}
 \rho(c)   =   \sum_{u \in P_v^c}  \frac{ 1 }{    |R_u^v(\sigma)|     - 1 }  . 
\end{align*}
Since $c \in R_u^v(\sigma)$ implies $|R_u^v(\sigma)| \ge 2$, and since $1-1/x > \exp(-1/(x-1))$ for $x \geq 2$, we see that
\begin{equation}
\ex[| L_v(\tau)  |] =  1+  \sum_{ c \in \mathcal{L}_v  } \prod_{ u \in P_v^c } \left( 
1 -  \frac{ 1}{   |R_u^v(\sigma)|  }  \right)
>  \sum_{ c \in \mathcal{L}_v }  \prod_{ u \in P_v^c  } \exp\left(-\frac{1}{|R_u^v(\sigma)| - 1}\right)
 =   \sum_{ c \in \mathcal{L}_v  } 	\mathrm{e}^{ -  \rho(c)  }							   \label{humble}  .
\end{equation}
Also, since each $R_u^v(\sigma)$ has $|R_u^v(\sigma)|- 1$ non-Blank colors, we see that
\begin{align}
Z_v := \sum_{c \in \mathcal{L}_v} \rho(c)  
\le  \sum_{u \in N_v}  \sum_{ c \in R_u^v(\sigma) \setminus \mathrm{Blank}  }    \frac{1 }{  |R_u^v(\sigma)| - 1}   
\le  \Delta  \label{humble2}  .
 \end{align}

The fact that $e^{-x}$ is convex implies that the right hand side of~\eqref{humble} is at least  $|\mathcal{L}_v| \exp(-Z_v/|\mathcal{L}_v|)$. Recalling that $|\mathcal{L}_v| = q = (1+ \epsilon) \frac{ \Delta}{ \ln \sqrt{f} }$ and combining~\eqref{humble} with~\eqref{humble2} yields
\[
\ex [ | L_v(\tau)   | ]  
> q \mathrm{e}^{ - Z_v/q } 
\ge (1 + \epsilon)  \frac{ \Delta }{  \ln \sqrt{f}}   \mathrm{e}^{- \Delta/q} 
= 2(1 + \epsilon)  \frac{  \Delta }{  \ln f}   f^{ - \frac{1}{2(1+\epsilon)}}
		 =2L . 
\]
Let $X_c$ be the indicator variable that $c \in L_v(\tau)$ so that $|L_v(\tau)| = 1+ \sum_{ c \in \mathcal{L}_v(\tau)} X_c$. It is not hard to see that the variables $Y_c = 1 - X_c$ are negatively correlated, so that applying Lemma~\ref{ChernoffNegative} with $ t = \frac{1}{2} \ex [ |L_v(\tau)| ]> L $ yields
\begin{align*}
\Pr \left[ |L_v(\tau)|  <  \tfrac{1}{2} \ex \left[|L_v(\tau)|\right] \right]  \le 2 \mathrm{e}^{ - \ex \left[ |L_v(\tau)|  \right] /12 } < 2 \mathrm{e}^{ - L/6 }.
\end{align*}
This concludes the proof.
\end{proof}

\begin{proof}[Proof of part~\ref{eq:L60} of Lemma~\ref{key_bounds2}]
Let $\Psi = \{c \in L_v(\sigma) : \rho(c)  \ge L/20\} \setminus \mathrm{Blank} $. The probability that $L_v(\tau)$ contains at least one color from $\Psi$ is at most
\begin{equation*}\label{eq:lalakis}
\ex \left[ | L_v(\tau) \cap \Psi  | \right] = \sum_{ c \in \Psi}  \prod_{ u \in P_v^c  } \left( 1 -  \frac{1}{   |R_u^v(\sigma)| }  \right) 
<
\sum_{ c \in \Psi }  \prod_{ u \in P_v^c  } \exp\left(-\frac{1}{2(|R_u^v(\sigma)| - 1)}\right)
<  \sum_{ c \in \Psi} \mathrm{e}^{ - \rho(c)/2  }   ,
\end{equation*}
where we used that $c \in R_u^v(\sigma)$ implies $|R_u^v(\sigma)| \ge 2$, and that $1-1/x < \exp(-1/(2(x-1)))$ for $x \geq 2$. Finally note that 
$\sum_{ c \in \Psi} \mathrm{e}^{ - \rho(c)/2  } \le q \mathrm{e}^{-L/40}$ by the definition of the set $\Psi$.

Recall that $T_{v,c}(\tau) = \{ u \in N_v : \tau(u) = \mathrm{Blank} \text{ and $c \in L_u(\tau)$}\}$. Since $L_u(\tau) \subseteq R_u(\tau) = R_u(\sigma)$, it follows that $T_{v,c} (\tau) \subseteq P_v^c$ and, therefore, $\ex \left[  |T_{v,c} (\tau)| \right] \le \sum_{ u \in P_v^c} 1/|R_u^v(\sigma)| \le  \rho(c)$. Since the vertices in $P_v^c$ are colored (and thus become $\mathrm{Blank}$) independently and since $\rho(c)<L/20$ for $c \not\in \Psi$, applying Lemma~\ref{ChernoffNegative} with $t =L/20$  yields
$\Pr \left[ |T_{v,c}(\tau) | > \ex \left[ | T_{v,c}(\tau) | \right] + L/20 \right] < 2\mathrm{e}^{- L/60}$. Applying the union bound over all $q$ colors, we see that the probability there is at least one $c \notin \Psi$ for which $|T_{v,c}(\tau)| > L/10$ is at most $2q \mathrm{e}^{-L/60}$.  Thus, with probability at least $1 - 3 q \mathrm{e}^{ - L/60}$, 
\begin{align*}
\sum_{c \in L_v(\tau) \setminus \mathrm{Blank} }  |T_{v,c} (\tau) |= \sum_{ c \in L_v(\tau) \setminus  ( \Psi  \cup \mathrm{Blank}  )  } | T_{v,c}(\tau) | < \frac{L}{10} |L_v(\tau)| .
\end{align*}
\end{proof}

\section{Proof of Theorem~\ref{sparse_graphs}}\label{determ_graphs_proof}


We will follow closely the approach adopted by the authors in~\cite{alon1999coloring}.  Throughout the proof we assume that $\epsilon  \in (0,\epsilon_0)$, where $\epsilon_0$ is sufficiently small, and that $f_{\epsilon} > 0 $ and $\Delta_{\epsilon} > 0$ are sufficiently large.

We distinguish two cases, depending on whether $f \ge \Delta^{ ( 2+ \epsilon^2) \epsilon }$ or not. To prove Theorem~\ref{sparse_graphs} for the case $f \ge \Delta^{ ( 2+ \epsilon^2) \epsilon }$ we will prove the following. 

\newcommand\fdelta{\zeta}
\newcommand\fepsilon{\theta}

\begin{theorem}\label{sparse_case}
For every $\fepsilon , \fdelta  \in (0,1)$, there exists    $ \Delta_{\fepsilon,\fdelta} > 0 $  such that  every graph  $G$ with maximum degree $\Delta \ge \Delta_{\fepsilon,\fdelta} $ in which the neighbors of every vertex span at most $\Delta^{ 2 - (2  + \fdelta) \fepsilon }$ edges, has chromatic number $\chi(G) \le   ( 1 +  \fdelta) (1+ \fepsilon^{-1} ) \frac{\Delta}{ \ln \Delta }$. 
\end{theorem}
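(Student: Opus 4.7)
The plan is to prove Theorem~\ref{sparse_case} by decomposing $G$ via a random vertex partition into $k := \lceil \Delta^{1/(1+\theta)}\rceil$ parts and coloring each induced subgraph using Theorem~\ref{improvement_mike} with a disjoint palette. Writing $\bar\Delta := \Delta/k = \Delta^{\theta/(1+\theta)}$, $\epsilon' := \zeta/10$, and $f_i := \bar\Delta^{(2+2\epsilon')/(1+2\epsilon')}(\ln \bar\Delta)^2$, one has $\ln\sqrt{f_i} = (1+o(1))\tfrac{\theta}{1+\theta}\ln\Delta$, so Theorem~\ref{improvement_mike} applied to each piece gives $\chi_\ell(G[V_i]) \le (1+\epsilon')\bar\Delta/\ln\sqrt{f_i}$, and summing the $k$ disjoint palettes yields
\begin{equation*}
\chi(G) \le k\cdot (1+\epsilon')\frac{\bar\Delta}{\ln\sqrt{f_i}} = (1+O(\epsilon'))\bigl(1+\theta^{-1}\bigr)\frac{\Delta}{\ln\Delta} \le (1+\zeta)\bigl(1+\theta^{-1}\bigr)\frac{\Delta}{\ln\Delta}
\end{equation*}
for $\Delta$ sufficiently large in $\theta,\zeta$.

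To produce such a partition, I would sample each vertex's part uniformly and independently at random, and then use the Lov\'asz Local Lemma (LLL) to avoid two natural families of bad events: $D_{v,i}$, that $v \in V_i$ has degree in $G[V_i]$ exceeding $(1+o(1))\bar\Delta$; and $T_{v,i}$, that $v \in V_i$ has more than $\bar\Delta^2/f_i$ edges among its neighbors in $V_i$. Chernoff gives $\Pr[D_{v,i}] \le \exp(-\Delta^{\Omega(1)})$. For $T_{v,i}$, the expected edge-count is at most $T/k^2 = \Delta^{-\theta(\zeta+(2+\zeta)\theta)/(1+\theta)}$, a negative power of $\Delta$, while the threshold $\bar\Delta^2/f_i$ is a positive power of $\Delta$. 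Each bad event depends only on the part assignments at the $O(\Delta^2)$ vertices within distance $2$ of $v$, so the LLL dependency degree is $O(k\Delta^2)$, polynomial in $\Delta$.

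The main obstacle is establishing a tail bound for $T_{v,i}$ strong enough to beat the polynomial dependency degree: the bare Markov bound is only polynomially small, which can fail the LLL product condition when $\theta$ or $\zeta$ is small. This is closed by a tighter concentration argument that exploits the bilinear structure of $T_{v,i}$ in the Bernoulli indicators $\mathbf{1}[u \in V_i]$ for $u \in N(v)$---for example, the subset-counting bound $\Pr[T_{v,i} \ge \tau] \le \binom{|E_v|}{\tau}\cdot k^{-\sqrt{2\tau}}$ using the fact that $\tau$ edges span at least $\sqrt{2\tau}$ distinct vertices, or an appropriate Kim--Vu-type deviation inequality---each yielding a super-polynomial tail. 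Once such a partition is secured, applying Theorem~\ref{improvement_mike} to each piece (noting $\bar\Delta \ge \Delta_{\epsilon'}$ and that $f_i$ lies in the admissible range for $\Delta$ large) and summing palette sizes via $k\cdot \bar\Delta \approx \Delta$ completes the proof.
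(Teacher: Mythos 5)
Your overall plan---randomly partition $V$ into polynomially many parts, control degrees and neighbourhood edge counts inside each part via the LLL, then colour the parts with disjoint palettes using Theorem~\ref{improvement_mike}---is exactly the paper's route (Lemma~\ref{subgraphs_few_triangles}), and your palette bookkeeping with $k=\lceil\Delta^{1/(1+\theta)}\rceil$ is fine. The gap is precisely at the step you flag as the main obstacle, and it is not closable as you propose: no concentration inequality can give a super-polynomial upper tail for $T_{v,i}$ under the theorem's hypothesis, because the hypothesis bounds only the \emph{total} number of edges spanned by $N(v)$, not the maximum degree of the graph $(N_v,E_v)$. If some neighbour $u$ of $v$ is adjacent to $d_u$ other neighbours of $v$, then $E_v$ contains a star of size $d_u$; with probability exactly $1/k$ its centre $u$ lands in $v$'s part, and then $T_{v,i}$ is typically of order $d_u/k$. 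For $\theta<1/(2+\zeta)$ the hypothesis allows $d_u=\Delta-2$ (e.g.\ the ``book'': two adjacent vertices with $\Delta-2$ common neighbours satisfies the edge condition at every vertex), so conditioned on that single membership event $T_{v,i}\approx\bar\Delta$, vastly above your threshold $\bar\Delta^2/f_i=\bar\Delta^{2\epsilon'/(1+2\epsilon')}/(\ln\bar\Delta)^2$. Hence $\Pr[T_{v,i}>\bar\Delta^2/f_i]=\Omega(1/k)$; this is a \emph{lower} bound on the tail, so neither Kim--Vu nor anything else can make it super-polynomially small, and your LLL step as described does not go through (the theorem must cover all $\theta\in(0,1)$). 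Your specific subset-counting bound is also vacuous in the relevant regime: at $\tau=\bar\Delta^2/f_i$, a positive power of $\Delta$, the entropy factor $\binom{|E_v|}{\tau}=\exp(\Theta(\tau\ln\Delta))$ overwhelms $k^{-\sqrt{2\tau}}=\exp(-O(\sqrt{\tau}\,\ln\Delta))$, so the right-hand side exceeds $1$.

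The missing idea is a separate treatment of exactly these high-codegree neighbours, which is what the paper's proof of Lemma~\ref{subgraphs_few_triangles} supplies. Call $u$ a \emph{bad} neighbour of $v$ if they have at least $\Delta^{1-(1+\zeta/2)\theta}$ common neighbours; the edge hypothesis forces each vertex to have at most $2\Delta^{1-(1+\zeta/2)\theta}$ bad neighbours. The LLL is then applied to three families of events: $A_v$ (too many same-part neighbours), $B_v$ (more than $10/(\theta\zeta)$ same-part \emph{bad} neighbours), and $C_v$ (the same-part \emph{good} neighbours span more than $100/(\theta\zeta)^2$ edges). The event $C_v$ has probability at most $\Delta^{-3}$ because many good-neighbour edges force either a good vertex of high $E_v$-degree---ruled out above the codegree threshold---or a large matching, each of whose edges costs two independent membership factors; this beats the $O(\Delta^2)$ dependency degree. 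Inside each part the vertices that are bad neighbours of someone are coloured with $O(1/(\theta\zeta))$ extra colours via a bounded-outdegree orientation, and only the remaining vertices are coloured by Theorem~\ref{improvement_mike}, now with merely $O_{\theta,\zeta}(1)$ triangles per vertex (i.e.\ $f$ nearly $\Delta_{\ast}^2$) rather than your polynomially large per-part triangle budget. Without some analogue of this bad/good split, the random-partition argument cannot be completed.
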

\begin{proof}[Proof of Theorem~\ref{sparse_graphs} for $f \ge \Delta^{( 2+ \epsilon^2) \epsilon } $]
We apply Theorem~\ref{sparse_case} with $\zeta = \epsilon^2$ and $\theta = \frac{ \ln f}{ (2+ \epsilon^2) \ln \Delta} \ge \epsilon$, so that $\Delta^2 / f = \Delta^{ 2 - (2  + \fdelta) \fepsilon }$. Since $\zeta, \theta <1$, we obtain
\begin{eqnarray*}
\chi(G) &\le &  ( 1 + \fdelta) \left( 1 + \frac{(2 + \fdelta) \ln \Delta }{\ln f } \right) \frac{ \Delta }{  \ln  \Delta }  \\
            & = &   ( 1 + \fdelta)  \frac{ \Delta }{  \ln  \Delta}  + (1 + \fdelta) ( 1 + \fdelta/2) \frac{ \Delta} { \ln \sqrt{  f } } \\
            & \le &  ( 1 + 2\fdelta)  \frac{ \Delta }{  \ln \sqrt{ f }}  + (1 + \fdelta) ( 1 + \fdelta/2) \frac{ \Delta} { \ln \sqrt{  f } } \\
            & = & \left(2+\frac{7\fdelta}{2} +\frac{\fdelta^2}{2}\right) \frac{ \Delta }{ \ln \sqrt{f} } \\
            & \le & (2 + \epsilon) \frac{ \Delta }{ \ln \sqrt{f} } .
\end{eqnarray*} 

\end{proof}

Theorem~\ref{sparse_case} follows immediately from  the following lemma, whose proof is similar to  Lemma 2.3 in~\cite{alon1999coloring} and can be found in Section~\ref{proof_subgraphs_few_triangles}.  The proof of Lemma~\ref{subgraphs_few_triangles} uses the standard Local Lemma and Theorem~\ref{improvement_mike},  so it can be made constructive using the Moser-Tardos  algorithm and the algorithm in Theorem~\ref{improvement_mike}.

\begin{lemma}\label{subgraphs_few_triangles}
For every $\fepsilon , \fdelta  \in (0,1)$   there exists $\Delta_{\fepsilon, \fdelta}> 0 $  such that  for every graph $G = (V,E)$  with maximum degree $\Delta \ge \Delta_{\fepsilon, \fdelta}$ in which the neighbors of every vertex span at most $\Delta^{2 - (2+ \fdelta) \fepsilon}$ edges, there exists a partition of the vertex set $V = V_1 \cup \ldots \cup V_k$ with $k=  \Delta^{1- \fepsilon}    $, such that for every $1 \le i \le k$,
\begin{align*}
\chi(G[V_i]) \le  (1+\fdelta)(1 + \fepsilon^{-1})  \frac{ \Delta^{\fepsilon} }{  \ln \Delta }.
\end{align*}
\end{lemma}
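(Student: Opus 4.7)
Plan. I would prove Lemma~\ref{subgraphs_few_triangles} by the standard ``random partition + Local Lemma + apply the sparse graph bound'' scheme that underlies the Alon--Krivelevich--Sudakov result, substituting Theorem~\ref{improvement_mike} for Johansson's theorem so as to handle the (few) triangles that survive in each class. Concretely, place each vertex independently and uniformly into one of $k = \Delta^{1-\fepsilon}$ classes $V_1,\ldots,V_k$. For a vertex $v$ landing in $V_i$ the conditional expected degree in $V_i$ is at most $\Delta/k = \Delta^{\fepsilon}$, and the conditional expected number of edges spanned by $N(v)\cap V_i$ is at most
\[
\frac{\Delta^{2-(2+\fdelta)\fepsilon}}{k^2} \;=\; \Delta^{-\fdelta\fepsilon},
\]
which tends to $0$. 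So the partition is easily within the regime where Theorem~\ref{improvement_mike} applies per class.

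For each vertex $v$ I would introduce two bad events: (a)~$v\in V_i$ and $\deg_{V_i}(v)>(1+\eta)\Delta^{\fepsilon}$, and (b)~$v\in V_i$ and the number of edges spanned by $N(v)\cap V_i$ exceeds some threshold $T'$. Event (a) has probability at most $\exp(-c\,\eta^2\,\Delta^{\fepsilon})$ by Chernoff. For event (b), write $X=\sum_{\{u_1,u_2\}\in E(H)}Y_{u_1}Y_{u_2}$, where $H$ is the graph induced on $N(v)$ and $Y_u=\mathbf{1}[u\in V_i]$; standard polynomial/Janson-type concentration (or a crude union bound over $t$-subsets combined with the fact that $\mathbb{E}[X]=\Delta^{-\fdelta\fepsilon}$) lets me take $T'= (\ln\Delta)^{O(1)}$ and get probability at most $\Delta^{-\omega(1)}$. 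Each bad event depends only on the placement of vertices in $B(v,2)$, so the dependency degree is polynomial in $\Delta$ and the symmetric LLL (made algorithmic by Moser--Tardos) produces a partition in which no bad event holds. Hence each $G[V_i]$ has maximum degree $\Delta'\le(1+\eta)\Delta^{\fepsilon}$ and each neighborhood spans at most $T'$ edges.

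I would then apply Theorem~\ref{improvement_mike} to $G[V_i]$ with the parameter $f':=(\Delta')^2/T'\in[\,\Delta'^{(2+2\epsilon)/(1+2\epsilon)}(\ln\Delta')^2,\Delta'^2+1]$ for a sufficiently small constant $\epsilon>0$; this interval contains $f'$ for large $\Delta$ since $T'=(\ln\Delta)^{O(1)}$ forces $f'\approx\Delta^{2\fepsilon}/\mathrm{polylog}$. The resulting bound is
\[
\chi_{\ell}(G[V_i]) \;\le\; (1+\epsilon)\,\frac{\Delta'}{\ln\sqrt{f'}} \;\le\; \frac{(1+\epsilon)(1+\eta)\,\Delta^{\fepsilon}}{\fepsilon\ln\Delta - o(\ln\Delta)}
\;\le\; (1+o_{\Delta}(1))\,\fepsilon^{-1}\,\frac{\Delta^{\fepsilon}}{\ln\Delta}.
\]
Choosing $\epsilon,\eta$ as small constants (e.g.\ $\epsilon=\eta=\fdelta/3$) and $\Delta\ge\Delta_{\fepsilon,\fdelta}$ large enough, this is at most $(1+\fdelta)(1+\fepsilon^{-1})\Delta^{\fepsilon}/\ln\Delta$, since $\fepsilon^{-1}\le 1+\fepsilon^{-1}$ gives plenty of slack. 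The chromatic bound $\chi(G[V_i])\le\chi_{\ell}(G[V_i])$ concludes the proof.

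The main obstacle I anticipate is event~(b): verifying that the threshold $T'$ can be taken polylogarithmic (or at least $T'=o(\Delta^{2\fepsilon})$ with enough slack to land in the admissible window of Theorem~\ref{improvement_mike}) while keeping its probability small enough to close the LLL inequality against a polynomial-in-$\Delta$ dependency degree. Because $X$ is a bilinear form in independent Bernoullis with vertex-sharing that destroys independence among the summands, a direct Chernoff argument fails and one needs either a Kim--Vu-type polynomial concentration inequality, a moment bound of order $\Theta(\ln\Delta)$, or a careful union bound over $t$-subsets of $E(H)$ using that every such subset covers at least $\sqrt{t}$ vertices. The degree event~(a), by contrast, is a straightforward Chernoff application, and once both events are controlled the final application of Theorem~\ref{improvement_mike} is essentially an arithmetic check as above.
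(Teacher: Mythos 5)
Your high-level scheme (random partition into $\Delta^{1-\fepsilon}$ classes, symmetric LLL over events local to each vertex, then Theorem~\ref{improvement_mike} applied per class) is the same as the paper's, and your handling of event (a) and the final arithmetic are fine. However, the step you yourself flag as the main obstacle --- event (b), that $N(v)\cap V_i$ spans more than a polylogarithmic number $T'$ of edges --- is not just technically delicate: the concentration you hope for is simply false under the lemma's hypothesis. The hypothesis only bounds the \emph{total} number of edges of the graph $H$ induced on $N(v)$ by $\Delta^{2-(2+\fdelta)\fepsilon}$; it allows a single vertex $u\in N(v)$ to have $H$-degree $D$ as large as $\min\{\Delta-1,\Delta^{2-(2+\fdelta)\fepsilon}\}$, and in particular $D\ge \Delta^{1-\fepsilon}\,(\ln\Delta)^{C}$ is consistent with the hypothesis whenever $(1+\fdelta)\fepsilon<1$. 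Conditioned on $u\in V_i$ (probability $\Delta^{-(1-\fepsilon)}$), the number of edges spanned by $N(v)\cap V_i$ dominates a $\mathrm{Binom}\bigl(D,\Delta^{-(1-\fepsilon)}\bigr)$ variable with mean $D\,\Delta^{-(1-\fepsilon)}$, which can be polynomially large in $\Delta$. Hence $\Pr[\mathrm{(b)}]\gtrsim \Delta^{-(1-\fepsilon)}$ for \emph{any} polylogarithmic (indeed any $\Delta^{o(1)}$) threshold $T'$, whereas the symmetric LLL against dependency degree $\approx\Delta^{2}$ needs event probabilities of order $\Delta^{-2}$ or smaller. No Kim--Vu/Janson/moment or subset union bound can rescue this, because the quadratic form genuinely has this heavy tail (its conditional expectations $p\cdot\deg_H(u)$ are huge); your proposed union bound over $t$-subsets also collapses exactly on the star configuration, where $t$ edges cover only $t+1$ vertices.

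The missing idea, which is how the paper (following Alon--Krivelevich--Sudakov) proceeds, is to split $N(v)$ into \emph{bad} neighbors ($u$ with $|N(u)\cap N(v)|\ge \Delta^{1-(1+\fdelta/2)\fepsilon}$; the edge hypothesis gives at most $2\Delta^{1-(1+\fdelta/2)\fepsilon}$ of them) and \emph{good} neighbors, and to run the LLL with three events: your degree event; ``more than $10/(\fepsilon\fdelta)$ bad neighbors of $v$ land in $v$'s class''; and ``the \emph{good} neighbors of $v$ in $v$'s class span more than $100/(\fepsilon\fdelta)^2$ edges''. The last event is tamed by a structural dichotomy: a graph with more than $100/(\fepsilon\fdelta)^2$ edges contains either a vertex of degree at least $9/(\fepsilon\fdelta)$ or a matching of size at least $9/(\fepsilon\fdelta)$, and in either branch the event forces $\Omega(1/(\fepsilon\fdelta))$ \emph{distinct} vertices into the class (common neighbors of $v$ and a good neighbor, whose number is capped by goodness, or the endpoints of a matching), so its probability is at most $\Delta^{-3}$. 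After the partition one cannot simply apply Theorem~\ref{improvement_mike} to $G[V_i]$ directly: the vertices of $V_i$ that are bad neighbors of some vertex of $V_i$ are pulled out into a set $B_i$ and colored with $20/(\fepsilon\fdelta)+1$ extra colors via a bounded-outdegree orientation, and Theorem~\ref{improvement_mike} is applied to $G[V_i\setminus B_i]$, where every vertex lies in $O(1/(\fepsilon\fdelta)^2)$ triangles; your concluding arithmetic then goes through essentially as written (a constant triangle bound only makes the parameter $f$ larger than your polylogarithmic $T'$ would). Without this bad/good decomposition and the separate coloring of $B_i$, the LLL step of your proposal cannot be closed.
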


\begin{proof}[Proof of Theorem~\ref{sparse_case}]
If $V_1, V_2, \ldots, V_k$, $k = \Delta^{1- \fepsilon}$ is the partition promised by 
Lemma~\ref{subgraphs_few_triangles} 
then
\begin{align*}
\chi(G) \le \sum_{i=1}^{ \Delta^{ 1- \fepsilon} } \chi(G[V_i])  \le ( 1 +  \fdelta ) (1+ \fepsilon^{-1} ) \frac{\Delta}{ \ln \Delta } .
\end{align*}
\end{proof}

To prove Theorem~\ref{sparse_graphs} for $f \in [ f_{\epsilon}, \Delta^{ (2 + \epsilon^2) \epsilon } )$, we will perform a sequence of random halving steps, as in~\cite{alon1999coloring}, to partition the graph into  subgraphs satisfying the condition of Theorem~\ref{sparse_graphs} 
with $f \ge \Delta^{ ( 2 + \epsilon^2) \epsilon } $ and color these subgraphs using disjoint sets of colors. 
To perform the partition we use the following lemma from~\cite{alon1999coloring}. As it is proven via the standard LLL, it can be made constructive using the Moser-Tardos algorithm.
\begin{lemma}[\cite{alon1999coloring}]\label{bisection}
Let $G(V,E)$ be a graph with maximum degree $\Delta \ge 2$ in which the neighbors of every vertex span at most $s$ edges. There exists a partition $V = V_1 \cup V_2$ such that
the induced subgraph $G[V_i], i = 1,2$, has maximum degree at most $\Delta/2 + 2 \sqrt{ \Delta \ln \Delta}$ and the neighborhors of every vertex in $G[V_i], i =1,2$, span at most $s/4 + 2 \Delta^{ \frac{3}{2} } \sqrt{  \ln  \Delta} $ edges.
\end{lemma}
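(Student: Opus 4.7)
\textbf{Proof proposal for Lemma~\ref{bisection}.} The plan is a standard application of the (symmetric) Lov\'asz Local Lemma with random $\pm1$ assignments. Independently for each $v \in V$, place $v$ in $V_1$ with probability $1/2$ and in $V_2$ otherwise; let $X_v \in \{1,2\}$ denote $v$'s side. For each vertex $v$ and each $i \in \{1,2\}$, introduce two bad events:
\begin{itemize}
\item $\mathcal{D}_{v,i}$: the number of neighbors of $v$ placed in $V_i$ exceeds $\Delta/2 + 2\sqrt{\Delta \ln \Delta}$;
\item $\mathcal{E}_{v,i}$: the number of edges in $E_v$ with both endpoints in $V_i$ exceeds $s/4 + 2\Delta^{3/2}\sqrt{\ln \Delta}$.
\end{itemize}
If none of these events occurs, the desired partition exists.

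Bounding the probabilities of the bad events is the first step. The random variable counted by $\mathcal{D}_{v,i}$ is a sum of at most $\Delta$ independent Bernoulli$(1/2)$ variables with mean at most $\Delta/2$, so a Chernoff bound gives $\Pr[\mathcal{D}_{v,i}] \le 2\exp(-4 \ln \Delta) = 2/\Delta^4$. The variable counted by $\mathcal{E}_{v,i}$ has mean at most $s/4$ but is \emph{not} a sum of independent indicators, since edges of $E_v$ share endpoints in $N_v$; this is the one potentially tricky step. The fix is to view the count as a function of the independent side-assignments $\{X_u\}_{u \in N_v}$ (at most $\Delta$ variables) and observe that flipping any single $X_u$ changes the count by at most the number of edges in $E_v$ incident to $u$, which is bounded by $\Delta$. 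McDiarmid's bounded-differences inequality then gives
\[
\Pr[\mathcal{E}_{v,i}] \le 2\exp\!\left(-\frac{2 \cdot (2\Delta^{3/2}\sqrt{\ln\Delta})^2}{\Delta \cdot \Delta^2}\right) = 2\exp(-8\ln\Delta) = 2/\Delta^8.
\]
So every bad event has probability at most $p := 2/\Delta^4$.

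The dependency structure is easy to describe: each of $\mathcal{D}_{v,i}$ and $\mathcal{E}_{v,i}$ is determined by $\{X_u : u \in N_v\}$, so a bad event at $v$ is mutually independent of all bad events at vertices $v'$ with $N_v \cap N_{v'} = \emptyset$, i.e., whose distance from $v$ exceeds $2$. The number of vertices within distance $2$ of $v$ is at most $\Delta^2 + 1$, and each such vertex contributes $4$ bad events, so the dependency degree is $d \le 4(\Delta^2 + 1)$. The symmetric LLL condition $e \cdot p \cdot (d+1) \le 1$ becomes $e \cdot (2/\Delta^4) \cdot (4\Delta^2 + 5) \le 1$, which holds comfortably for all sufficiently large $\Delta$ (and is the reason for the hypothesis $\Delta \ge 2$, up to adjusting constants). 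Hence, with positive probability none of the bad events occur, yielding the desired partition.

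The main (and essentially only) obstacle is the concentration for $\mathcal{E}_{v,i}$, because of the shared-endpoint dependency among edges of $E_v$; once one commits to bounded differences with the per-coordinate Lipschitz constant $\Delta$ (rather than trying to use Chernoff directly), the rest is bookkeeping. The argument is constructive via Moser--Tardos in the variable setting, as noted in the paper's use of Lemma~\ref{bisection}.
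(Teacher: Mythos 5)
Your argument is correct and is precisely the approach the paper relies on: Lemma~\ref{bisection} is not proved in this paper but quoted from~\cite{alon1999coloring}, whose proof is the same symmetric-LLL argument you give (random bisection, a Chernoff bound for the degree events, a bounded-differences estimate for the neighborhood-edge events, dependency degree $O(\Delta^2)$), and the paper's only added remark --- that the standard LLL proof becomes constructive via Moser--Tardos --- matches your closing comment. The one cosmetic point is that your LLL computation only kicks in once $\Delta$ exceeds a modest constant, which is harmless because for small $\Delta$ the conclusion is trivial (whenever $\Delta \le 16 \ln \Delta$ the additive terms $2\sqrt{\Delta\ln\Delta}$ and $2\Delta^{3/2}\sqrt{\ln\Delta}$ already exceed $\Delta$ and $\binom{\Delta}{2}$ respectively, so the partition $V_1=V$, $V_2=\emptyset$ works), so the statement holds for all $\Delta\ge 2$ as claimed.
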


We will also use the following lemma whose proof, presented in Section~\ref{param_lemma_proof}, is almost identical to a similar statement in the proof of Theorem 1.1 of~\cite{alon1999coloring}. 
\begin{lemma}\label{param_lemma}
Given $\Delta, f$ sufficiently large, let the sequences $\Delta_t$ and $s_t$ be defined as follows. $\Delta_0 = \Delta, s_0 = \Delta^2/f$ and 
\begin{align*}
\Delta_{t+1} = \Delta_t/2 + 2 \sqrt{ \Delta_t  \ln \Delta_t }, \enspace s_{t+1} = s_t/4 + 2 \Delta_t^{ \frac{3}{2} } \sqrt{ \ln \Delta_t } . 
\end{align*}
For any $\delta \in (0, 1/100)$ and $\zeta > 0$ such that $ \zeta (2 + \delta) < 1/10$, let $j$ be the smallest integer for which $f > \left( \frac{ \param \Delta }{ 2^j} \right)^{ (2+ \delta) \zeta} $. Then $\Delta_j \le  \param \Delta/2^j$ and $s_j \le \left(\param \Delta/2^j\right)^2/f$.
\end{lemma}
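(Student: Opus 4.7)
The plan is to establish both bounds by induction on $t$, tracking a cumulative multiplicative error in each sequence. I would write $\Delta_t = (1+\epsilon_t)\Delta/2^t$ and $s_t = (1+\eta_t)(\Delta/2^t)^2/f$ with $\epsilon_0 = \eta_0 = 0$, and reduce the statement to proving $\epsilon_j \le \delta$ and $\eta_j \le (1+\delta)^2 - 1$. The key quantitative input is the minimality of $j$: since index $j-1$ fails the defining inequality, $(1+\delta)\Delta/2^{j-1} \ge f^{1/((2+\delta)\zeta)}$, and hence $\Delta/2^t \ge f^{1/((2+\delta)\zeta)}/(2(1+\delta))$ for every $t \le j$. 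Because $(2+\delta)\zeta < 1/10$, this is a ``high power of $f$'' lower bound that is the engine keeping all error terms small.

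For the first induction, the recurrence together with the provisional bound $\epsilon_t \le 1$ gives
\[
\epsilon_{t+1} - \epsilon_t \;=\; \frac{2^{t+1}}{\Delta}\cdot 2\sqrt{\Delta_t \ln \Delta_t} \;\le\; C_1 \sqrt{2^t \ln \Delta/\Delta}.
\]
Summing the geometric-type series $\sum_{t=0}^{j-1}\sqrt{2^t} \le C\sqrt{2^j}$ and substituting $2^j/\Delta \le 2(1+\delta) f^{-1/((2+\delta)\zeta)}$ yields $\epsilon_j \le C_2 \sqrt{\ln \Delta}\cdot f^{-1/(2(2+\delta)\zeta)}$. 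Since $1/(2(2+\delta)\zeta) > 5$, this is at most $C_2 f^{-5}\sqrt{\ln \Delta}$, which falls below $\delta$ once $\Delta,f$ are sufficiently large, closing the induction and retroactively validating $\epsilon_t \le 1$.

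The computation for $s_t$ is parallel. Using $\Delta_t \le 2\Delta/2^t$ from Step 1 and $s_0 = \Delta^2/f$, the recurrence gives
\[
\eta_{t+1} - \eta_t \;=\; \frac{4^{t+1}}{s_0}\cdot 2\Delta_t^{3/2}\sqrt{\ln \Delta_t} \;\le\; C_3\, f \sqrt{\ln \Delta/\Delta}\cdot 2^{t/2}.
\]
Summing to $j$ and applying the same bound on $2^j/\Delta$ yields $\eta_j \le C_4\, f^{\,1 - 1/(2(2+\delta)\zeta)}\sqrt{\ln \Delta} \le C_4\, f^{-4}\sqrt{\ln \Delta}$, which is $\le \delta \le (1+\delta)^2 - 1$ for $\Delta,f$ sufficiently large, completing the proof.

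The main obstacle will be keeping the two inductions self-consistent: the provisional bound $\epsilon_t \le 1$ used in Step 1 must be verified only \emph{after} bounding the full telescoped error, and Step 2 relies on the conclusion of Step 1 in order to invoke $\Delta_t \le 2\Delta/2^t$. Both verifications depend crucially on the hypothesis $(2+\delta)\zeta < 1/10$, which produces denominators with exponent strictly larger than $5$ in the powers of $f$, leaving a polynomial-in-$f$ buffer ample enough to absorb the stray $\sqrt{\ln \Delta}$ factor in the regime where both parameters are large.
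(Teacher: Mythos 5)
Your inductive scheme is reasonable in outline, and your use of the minimality of $j$ to get $\Delta/2^t \ge f^{1/((2+\delta)\zeta)}/(2(1+\delta))$ for all $t\le j$ is the right engine; the recurrences for $\epsilon_t$ and $\eta_t$ are also computed correctly. The gap is in the final absorption step. In both inductions you replace $\ln\Delta_t$ by $\ln\Delta$, ending with $\epsilon_j \le C\sqrt{\ln\Delta}\, f^{-1/(2(2+\delta)\zeta)}$ and $\eta_j \le C f^{1-1/(2(2+\delta)\zeta)}\sqrt{\ln\Delta}$, and you then assert these fall below $\delta$ ``once $\Delta,f$ are sufficiently large.'' They do not: a fixed power of $f$ cannot absorb a $\sqrt{\ln\Delta}$ factor unless $f$ is tied to $\Delta$. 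The lemma must hold with thresholds on $\Delta$ and $f$ that are independent of one another --- indeed it is invoked in the proof of Theorem~\ref{sparse_graphs} with $f$ possibly equal to a fixed constant $f_\epsilon$ while $\Delta\to\infty$, and in that regime $f^{-5}\sqrt{\ln\Delta}\to\infty$, so your bounds give no conclusion about $\epsilon_j$ or $\eta_j$. As written, the proof only establishes the lemma under an extra hypothesis of the form $f \gtrsim (\ln\Delta)^{c}$, which is not available.

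The paper avoids this by never letting $\ln\Delta$ enter: since $\Delta_t \ge \Delta/2^t > f^{1/\epsilon'}/(1+\delta)$ for $t<j$ (where $\epsilon'=(2+\delta)\zeta$), one has $2\sqrt{\Delta_t\ln\Delta_t}\le\Delta_t^{2/3}$ once $f$ is large, hence $\Delta_{t+1}\le\tfrac12(\Delta_t^{1/3}+1)^3$, and telescoping on the cube-root scale (sixth-root scale for $s_t$) accumulates only an absolute additive constant ($+4$, resp.\ $+15$), which is absorbed into the factor $(1+\delta)$ using only that $\Delta/2^j \gtrsim f^{1/\epsilon'}$ is large in terms of $f$. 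Your route can be repaired in the same spirit without changing its structure: keep the local logarithm, i.e.\ use $\epsilon_{t+1}-\epsilon_t \le 8\sqrt{\ln\Delta_t/\Delta_t}$ and $\eta_{t+1}-\eta_t \le 32 f\sqrt{\ln\Delta_t/\Delta_t}$ (valid under the provisional bound $\epsilon_t\le 1$), observe that these increments grow essentially geometrically in $t$ because $\Delta_t$ roughly halves, and bound the dominant last term via $\Delta_{j-1}\ge f^{1/\epsilon'}/(1+\delta)$; this yields total errors of order $\sqrt{(\ln f)/\epsilon'}\,f^{-1/(2\epsilon')}$ and $\sqrt{(\ln f)/\epsilon'}\,f^{1-1/(2\epsilon')}$, which depend on $f$ alone and are below $\delta$ for $f\ge f_0(\delta,\zeta)$. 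With that replacement your two inductions close as intended.
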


\begin{proof}[Proof of Theorem~\ref{sparse_graphs} for $f \in [ f_{\epsilon}, \Delta^{ (2 + \epsilon^2) \epsilon } )$]
Let $\epsilon_0 = 1/11$. For $\epsilon \in (0, \epsilon_0]$, let $\delta = \zeta = \epsilon^2$. Since $\zeta(2+\delta) < 1/10$, apply Lemma~\ref{param_lemma} and let $j = j(\Delta, f, \delta, \zeta)$ be the integer described therein. Let $S$ be the process which, given a graph $G$, does nothing if $\Delta(G) < 2$, and otherwise partitions $G$ as described in Lemma~\ref{bisection}. Apply $S$ to $G$ to get subgraphs  $G[V_1], G[V_2]$. Apply $S$ to $G[V_1],G[V_2]$ to get $G[V_{1,1}], G[V_{1,2}], G[V_{2,1}], G[V_{2,2}]$.  Repeating these steps $j$ times, we obtain a partition of $G$ into at most $2^j$ induced subgraphs. Observe that for each such subgraph $H$, either $\Delta (G) < 2$ and hence $\chi(H) \le 2$, or, by Lemma~\ref{param_lemma}, $\Delta(H) \le \param \Delta/2^j =: \Delta_{\ast}$ and the neighbors of every vertex in $H$ span at most $\Delta_{\ast}^2 / f$ edges, where $f \ge \Delta_{\ast}^{(2+\delta)\zeta} = 
\Delta_{\ast}^{(2+\zeta)\zeta} \ge \Delta_{\ast}^{(2+\zeta^2)\zeta}$. Therefore, by the already established case of Theorem~\ref{sparse_graphs},  either $\chi(H) \le 2$ or $\chi(H) \le (2+ \zeta) \Delta_{\ast}/\ln \sqrt{f}$. Thus,
\begin{align*} 
\chi(G) \le 2^j  \max \left\{  2 , (2+ \zeta) \frac{ \param \Delta / 2^j}{ \ln \sqrt{f} }  \right\}  
\le \max \left\{2^{j+1},  (2+\zeta) \frac{ (1+ \delta)  \Delta }{ \ln \sqrt{f}}  \right\}  
.
\end{align*}
To bound $2^{j+1}$ from above we first observe that, for all $f$ sufficiently large, i.e., for all $f \ge f_{\epsilon} $, 
\begin{align}\label{eq:fot_xest}
 \left( \frac{ \param \Delta }{ \frac{\Delta}{2 \ln \sqrt{f}  }} \right)^{ (2+ \delta) \zeta} = \left( 2( 1+ \delta) \ln \sqrt{f} \right)^{( 2 + \delta) \zeta} <  f  .
\end{align}
Now, since $j$ was defined as the smallest integer for which $ \left( \frac{ \param \Delta }{ 2^j} \right)^{ (2+ \delta) \zeta} <f $, we see that~\eqref{eq:fot_xest} implies $2^{j} \le  \frac{\Delta}{2 \ln \sqrt{f}  }$ and, therefore,   $2^{j+1} \le \frac{\Delta}{ \ln \sqrt{f} } $. Finally, we observe that $(2+\zeta) (1+\delta) = (2+\epsilon^2)(1+\epsilon^2) < 2+ \epsilon$ for all $\epsilon \in (0, \epsilon_0]$. Therefore, as claimed,
\[
\chi(G) \le (2+\epsilon) \frac{\Delta }{ \ln \sqrt{f}}  .
\]
\end{proof}

\subsection{Proof of Lemma~\ref{subgraphs_few_triangles}}\label{proof_subgraphs_few_triangles}

We follow an  approach similar to that of Lemma 2.3 in~\cite{alon1999coloring}, making appropriate modifications as needed. First we partition the  vertices of $G$ into $\Delta^{1- \fepsilon}$ parts by coloring them randomly and independently  with $\Delta^{ 1 - \fepsilon}$ colors. For a vertex $v$, call a neighbor~$u$ of~$v$ a  \emph{bad neighbor} if $u$ and $v$ have at least $\Delta^{1- (1+ \fdelta/2) \fepsilon}$ common neighbors. Otherwise, say that $u$ is a \emph{good neighbor}. Since the neighbors of every vertex span at most $\Delta^{ 2 - (2 + \fdelta) \fepsilon }$ edges, there are at most $2 \Delta^{1- (1+ \fdelta/2) \fepsilon}$ bad neighbors for any vertex in $G$. 

For any vertex $v$, define three types of bad event with respect to the random partitioning experiment: 
\begin{itemize}
\item
$A_v$: more than $(1+ \fepsilon) \Delta^{\fepsilon}$ neighbors of $v$ receive the same color as $v$. 
\item
$B_v$: more than $\frac{10 }{  \fepsilon \fdelta}$ bad neighbors of $v$ receive the same color as $v$. 
\item
$C_v$: the good neighbors of $v$ that receive the same color as $v$ span more than $\frac{100}{(\fepsilon \fdelta)^2 } $ edges. 
\end{itemize}
We will use the symmetric version of the Local Lemma~\cite{LLL} to show that we can find a coloring of the graph that avoids all bad events. First, note that each of the bad events $A_v, B_v, C_v $ is independent of all but at most $\Delta^2$ others, as it independent of all events $A_u, B_u, C_u$ corresponding to vertices $u$ whose distance from $v$ is more than $2$. Since the degree of  any vertex in its color class is binomially distributed with mean at most $\Delta^{\fepsilon}$,  standard Chernoff estimates imply that the probability that  $v$ has more than $(1+ \fepsilon) \Delta^{\fepsilon}$ neighbors with the same color as~$v$ is at most $\mathrm{e}^{- \Omega(\Delta^{\fepsilon}) }$, which means that $\Pr[ A_v ] <   \Delta^{-3}$ for large enough $\Delta$. Moreover, we also have
\begin{align*}
\Pr[B_v] \le  \binom{2\Delta^{1 - (1 + \fdelta/2)  \fepsilon}}{\frac{10}{\fepsilon \fdelta}} \left(  \frac{1}{ \Delta^{1-\fepsilon} }  \right)^{ \frac{10}{\fepsilon \fdelta} } \le  \left(  \frac{2}{  \Delta^{  \frac{ \fepsilon\fdelta}{2} } }  \right)^{\frac{10}{\fepsilon \fdelta}  } \le\Delta^{-3} ,
\end{align*}
for large enough $\Delta$. Finally, to bound the probability of $C_v$ we make the following observation. If a graph has at least $e^2$ edges, then either it has a vertex of degree at least $e$, or every vertex has degree strictly less than $e$, implying that the graph can be edge-colored with $\lceil e \rceil$ colors, in which case the largest color class must contain at least $e^2/\lceil e \rceil \ge e-1$ edges. Thus, a graph with more than $100/(\fepsilon \fdelta )^2$ edges either has a vertex of degree at least $10/ (\fepsilon \fdelta) \ge 9/(\fepsilon \fdelta)$ or a matching with at least $10/ (\fepsilon \fdelta) - 1 \ge 9 / ( \fepsilon  \fdelta)$ edges, where the inequality follows from the fact that $\fepsilon, \fdelta < 1$. Thus, $C_v$ can happen only if there is a good neighbor $u$ of $v$ such that $u$ and $v$ have at least   $9 / ( \fepsilon  \fdelta)$ common neighbors with the same color as $v$, or if there is a matching of size at least  $9 / ( \fepsilon  \fdelta)$ on the good neighbors of $v$ that have the same color as $v$. The probabilities of the first  and second  of these events are bounded, respectively, by
\begin{align*}
\Delta \binom{  {  \Delta^{1 - (1 + \fdelta/2) \fepsilon }}}{ \frac{9} {  \fepsilon  \fdelta}} \left(  \frac{1}{ \Delta^{1- \fepsilon} }   \right)^{  \frac{9}{ \fepsilon \fdelta}} \le  \left( \frac{1}{  \Delta^{ \frac{\fepsilon \fdelta}{2} } }    \right)^{ \frac{9}{\fepsilon \fdelta} }    \le \frac{1}{2} \Delta^{-3} ,
\end{align*}\begin{align*}
\binom{  \Delta^{2 - (2+ \fdelta) \fepsilon }}{ \frac{9} {  \fepsilon  \fdelta}} \left(  \left(  \frac{1}{ \Delta^{1- \fepsilon}  } \right)^{2}  \right)^{ \frac{9}{ \fepsilon \fdelta} } \le \left( \frac{1}{\Delta^{  \fepsilon \fdelta } }  \right)^{ \frac{9}{\fepsilon \fdelta} }   \le  \frac{1}{2} \Delta^{-3} .
\end{align*}
Therefore the probability of $C_v$ is at most $\Delta^{-3}$. Thus, the Local Lemma applies since each bad event has probability at most $\Delta^{-3}$ and is independent of all but at most $\Delta^2$ other bad events. This means that we can find a partition $V = V_1, \ldots, V_{k}$, where $k=\Delta^{1- \fepsilon}$, so that in each induced subgraph $G[V_i]$, every vertex
has degree at most $(1+ \fepsilon)\Delta^{ \fepsilon}$,  has at most $\frac{10}{\fepsilon\fdelta } $ bad neighbors, and is contained in at most $\frac{100} {  (\fepsilon \fdelta)^2 }$ triangles in which both other vertices are good. We will show that, given such a partition, each $G[V_i]$ can be colored with at most $\frac{ (1+\fdelta)(1 + \fepsilon^{-1})  \Delta^{\fepsilon} }{  \ln \Delta } $ colors, assuming $\Delta$ is large enough.

To see this, consider the partition $B_i, V_i \setminus B_i$ of $V_i$, where $B_i$ is the set of vertices $u \in V_i$ for which there exists a vertex $v \in V_i$, such that $u$ is a bad neighbor of $v$. We claim that $\chi(G[B_i] )  \le  \frac{20}{\fepsilon \fdelta } +1   $ and $\chi( G[ V_i \setminus B_i]  ) \le  \frac{ (1+\fdelta/2)(1 + \fepsilon^{-1})  \Delta^{\fepsilon} }{  \ln \Delta } $. Assuming this claim, observe that
\begin{align*}
\chi(G[V_i] ) \le \frac{20}{\fepsilon \fdelta} +1 +\frac{ (1+\fdelta/2)(1 + \fepsilon^{-1})  \Delta^{\fepsilon} }{  \ln \Delta } \le \frac{ (1+\fdelta)(1 + \fepsilon^{-1})  \Delta^{\fepsilon} }{  \ln \Delta }  ,
\end{align*}
where the last inequality holds for all $\Delta \ge \Delta_{\fepsilon, \fdelta}$.

To see the first part of the claim, note that it is well-known (and easy to see) that if a graph has an orientation with maximum outdegree $d$, then it is ($2d+1$)-colorable. Consider the orientation of the graph on $B_i$ that results when every vertex points to its bad neighbors in $B_i$. Clearly, the maximum outdegree is at most $\frac{10}{\fepsilon \fdelta}$ and, thus, $\chi(G[B_i] )  \le  \frac{20}{\fepsilon \fdelta } +1   $ .

To see the second part of the claim, observe that each vertex of $G[V_i \setminus B_i ] $ is contained in at most $\frac{100}{(\fepsilon \fdelta)^2 }$ triangles.  Let $\Delta_{\ast} = (1+\fepsilon)\Delta^{\fepsilon  }$ and
\begin{align*}
f =  \frac{ \left( (1+\fepsilon)\Delta^{\fepsilon  } \right)^2 }{ 100   / (\fepsilon \fdelta)^2}    
= \frac{  (\fepsilon \fdelta)^2 \Delta_{\ast}^2 }{ 100}   
\ge   \Delta_{\ast}^{ \frac{ 2+  \frac{2\fdelta}{3} }{ 1+ \frac{2\fdelta}{3} }  } \left( \ln   \Delta_{\ast} \right)^2 ,
\end{align*}
where the last inequality holds for  $\Delta \ge \Delta_{\fepsilon,\fdelta}$. Applying Theorem~\ref{improvement_mike} to $G[V_i \setminus B_i ]$ (plugging in $\fdelta/3$ for the $\epsilon$ in that theorem) we get that, for all $\Delta \ge \Delta_{\fepsilon, \fdelta}$, 
\begin{eqnarray*}\label{long_manip}
\chi_{\ell}(G[ V_i \setminus B_i] ) & \le & (1+ \fdelta/3)  \frac{ \Delta_{\ast} }{ \ln \sqrt{ f}  }  \\
& = &  (1 +\fdelta/3)  \frac{(1+ \fepsilon)  \Delta^{\fepsilon}}{   \ln  \frac{ (1+ \fepsilon) \fepsilon \fdelta \Delta^{ \fepsilon} }{  10 }  } \\
& = &  (1 +\fdelta/3)  \frac{(1+ \fepsilon)  \Delta^{\fepsilon}}{\fepsilon \ln \Delta +    \ln  \frac{ (1+ \fepsilon) \fepsilon \fdelta }{  10 }  }\\
& \le & (1 +  \fdelta/2) \frac{(1 + \fepsilon)}{ \fepsilon }  \frac{ \Delta^{ \fepsilon} }{ \ln \Delta }  \\
& = & (1 +  \fdelta/2 )  \frac{ (1 + \fepsilon^{-1})\Delta^{ \fepsilon} }{ \ln \Delta }   ,
\end{eqnarray*}
as claimed.

\subsection{Proof of Lemma~\ref{param_lemma}}\label{param_lemma_proof}

Let $\epsilon' := \zeta( 2 + \delta) $ and recall that $\epsilon' < \frac{1}{10}$ by hypothesis. By the definition of $j$, for every $ t < j$, 
 \begin{align*}
\Delta_t \ge \Delta / (2^t)  >  \frac{ f^{ \frac{1}{\epsilon'} }}{1+ \delta },
\end{align*}
and $f^{ \frac{1}{\epsilon'} }/(1+ \delta)$ can be made arbitrarily large by taking $f$ sufficiently large. Hence, we can assume that $\Delta_t$ is sufficiently large so that $\Delta_{t+1} \le \frac{\Delta_t }{2} +  \Delta_t^{  \frac{ 2}{3} } \le \frac{1}{2} \Bigl(   \Delta_t^{\frac{1}{3} } +1 \Bigr)^3$. Taking cube roots and subtracting  $\frac{1}{2^{ \frac{1}{3} }-1 }$ from both sides we get
\begin{align*}
\Delta_{t+1}^{  \frac{1}{3}} - \frac{1}{2^{\frac{1}{3}}-1} \le \frac{1}{2^{\frac{1}{3} }}(  \Delta_t^{ \frac{1}{3} } +1 ) - \frac{1}{2^{ \frac{1}{3} } -1 }   = \frac{1}{2^{\frac{1}{3} }} \left( \Delta_t^{ \frac{1}{3} } - \frac{1}{2^{ \frac{1}{3} } -1}   \right) .
\end{align*}
Therefore, 
\begin{align}\label{j_steps}
\Delta_j^{ \frac{1}{3} }  - \frac{1}{2^{ \frac{1}{3} }  -1 } \le \frac{1}{ 2^{j/3}} \left(  \Delta_0^{ \frac{1}{3} } - \frac{1}{ 2^{ \frac{1}{3} }-1 } \right) .
\end{align}
Since $\Delta_0 = \Delta$, $2^{ \frac{1}{3} } -1  > \frac{1}{4}$ and $\Delta / 2^{j-1}  > \frac{ f^{ \frac{1}{\epsilon'} }}{1+ \delta }$ is large enough,~\eqref{j_steps} implies that
\begin{align*}
\Delta_j^{ \frac{1}{3} } \le \frac{ \Delta^{ \frac{1}{3} }}{ 2^{j /3}} + 4 \le  (1 + \delta)^{ \frac{1}{3} } \frac{\Delta^{ \frac{1}{3} } }{ 2^{ j/3}} .
\end{align*}
Therefore, we have shown that  $\Delta_j  \le \param  \frac{ \Delta }{ 2^j}$. Note also that the same proof shows that for every $t \le j$ we have $ \Delta_t \le \param \frac{ \Delta}{2^t } $.

We turn now to  the claim regarding $s_j$. For all $t < j$, we have by definition
\begin{align}
s_t \ge \frac{s_0 }{ 4^t} = \frac{ \Delta^2}{ 4^t f }  = \frac{1}{ \param^2}  \frac{ \left(  \frac{ \param  \Delta}{2^t }   \right)^2 }{f }  \ge   \frac{ 1}{ \param^2   } \left(  \frac{ \param \Delta }{ 2^t }   \right)^{ 2- \epsilon' }   \ge \frac{ 1}{  \param^2}  \Delta_t^{2  - \epsilon' }  \label{bound_delta_t},
\end{align}
where in the last inequality we used the fact that $\Delta_t  \le \param  \frac{ \Delta }{ 2^t}$ for all $ t\le j$. Using~\eqref{bound_delta_t} to bound $\Delta_t$ in the expression that defines  $s_t$, we get
\begin{align}\label{eq:ouga}
s_{t+1}  \le \frac{s_t}{4} +  2( \param^2  s_t )^{ \frac{3}{2}  \frac{ 1}{(2- 2 \epsilon' )} }  . 
\end{align}
To bound the r.h.s. of~\eqref{eq:ouga} we recall that $\epsilon' < \frac{1}{10}$ implying $\frac{3}{2(2-2\epsilon')} < \frac{3}{2(2-1/5)} = 5/6$. Assuming that $f$ (and hence also $\Delta_t$) is large enough, we obtain
\begin{equation}\label{xi}
s_{t+1}  \le \frac{s_t}{4} +  3 s_t^{5/6} \le \frac{1}{4} (s_t^{1/6}+2)^6 
= \frac{1}{4}(s_t + 12 s_t^{5/6} + 60 s_t^{2/3} + \cdots)  .
\end{equation}
Taking $6$-th roots and subtracting $\frac{5}{6^{1/6} -1 }$ from both sides, we obtain
\begin{align*}
s_{t+1}^{ 1/6} - \frac{5}{ 6^{1/6 } -1}  
\le \frac{1}{4^{1/6} } ( s_t^{ 1/6} +2 ) - \frac{5}{ 6^{1/6 }-1} 
\le \frac{1}{ 4^{ 1/6} } \left(  s_t^{ 1/6} - \frac{5}{ 6^{ 1/6}-1 }   \right) .
\end{align*}
Therefore,
\begin{align*}
s_j^{1/6} - \frac{5 }{6^{ 1/6} -1  } \le \frac{1}{ 4^{ j/6 } }  \left( s_0^{  1/6} - \frac{5}{6^{1/6} -1 }   \right) ,
\end{align*}
and, since $s_0 = \Delta^2/f$, 
\begin{align*}
s_j^{ 1/6} \le  
\frac{ (\Delta^{ 2})^{1/6} }{ 4^{ j/6 } f^{ 1/6}  }  + \frac{5}{ 6^{1/6} -1 }  
\le
\left(\frac{\Delta^{ 2} }{ 4^{ j} f  }\right)^{1/6}  + 15  .
\end{align*}

Since 
\[
\frac{\Delta^2}{4^j f}  = \left(\frac{\Delta}{2^j} \right)^2  \frac{1}{f} 
= \left(\frac{\Delta}{2^{j-1}}\right)^2  \frac{1}{4f} > \left(\frac{f^{1/\epsilon'}}{1+\delta}\right)^2 \frac{1}{4f}
= \frac{ f^{ \frac{2}{\epsilon'}-1 }}{4(1+ \delta)^2 }
\]
can be made arbitrarily large by taking $f$ sufficiently large, we see that 
\[
\left(\frac{\Delta^{ 2} }{ 4^{ j} f  }\right)^{1/6}  + 15 
\le ( 1+ \delta)^{ 1/3}  \left(\frac{\Delta^{ 2} }{ 4^{ j} f  }\right)^{1/6}  .
\]
Thus $s_j \le (\param \Delta / 2^j)^2/f  $, completing the proof.

\section{Proof of Proposition~\ref{Random_Graphs}}\label{random_graphs_proof}

We use the term ``with high probability" to refer to probabilities that tend to $1$ as $n$ goes to infinity. Proposition~\ref{Random_Graphs} follows in a straightforward way from the following lemma. 

\begin{lemma}\label{random_lemma}
For any $\delta  \in (0,1)$  there exists a constant $d_{0}$ such that, for any $d  \in \left(  d_{0} \ln n,  ( n \ln n )^{\frac{1}{3} }  \right) $, each vertex of the random graph $G = G(n, d/n)$ is contained in at most $\Delta^{ \delta }$ triangles with high probability, where $\Delta$ is the maximum degree of $G$.
\end{lemma}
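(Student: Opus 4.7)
The plan is to establish two concentration bounds simultaneously: a lower bound $\Delta \ge d/2$ on the maximum degree (so that $\Delta^{\delta}$ is meaningfully large), and an upper bound $Y_v \le d^{\delta/2}$ on the number of triangles through every vertex $v$, and then to observe that $d^{\delta/2} \le \Delta^{\delta}$ once $d$ is sufficiently large. Both concentration bounds will come from Chernoff's inequality applied to binomial random variables, together with union bounds over the $n$ vertices.

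\textbf{Step 1 (Degree concentration).} For each vertex $v$, $|N(v)| \sim \mathrm{Bin}(n-1,\, d/n)$ has mean $\approx d$, so Chernoff yields $\Pr[|N(v)| \notin [d/2, 2d]] \le 2e^{-cd}$ for an absolute constant $c>0$. Since $d \ge d_0 \ln n$, this is at most $n^{-2}$ provided $d_0 = d_0(\delta)$ is chosen large enough. A union bound over $v$ then gives $\Delta \in [d/2, 2d]$ with high probability.

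\textbf{Step 2 (Triangle concentration).} Conditional on $N(v)$, the number $Y_v$ of triangles through $v$ equals the number of edges of $G$ inside $N(v)$, so $Y_v \mid N(v) \sim \mathrm{Bin}\bigl(\binom{|N(v)|}{2}, d/n\bigr)$. On the event from Step~1 this has conditional mean at most $2d^3/n$, and the Chernoff tail $\Pr[\mathrm{Bin}(N,p) \ge k] \le (eNp/k)^k$ with $k = \lceil d^{\delta/2}\rceil$ gives
\[
\Pr\bigl[Y_v \ge d^{\delta/2} \,\bigm|\, N(v)\bigr] \;\le\; \Bigl(\tfrac{2e\, d^{\,3-\delta/2}}{n}\Bigr)^{d^{\delta/2}}.
\]
Plugging in $d \le (n\ln n)^{1/3}$ yields $d^{\,3-\delta/2} \le n^{1-\delta/6}(\ln n)^{1-\delta/6}$, so the base is $n^{-\delta/6+o(1)}$. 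Raising it to the power $d^{\delta/2} \ge (d_0\ln n)^{\delta/2}$, which grows with $n$, makes the bound much smaller than $n^{-2}$ for $n$ large, so a further union bound over $v$ gives $Y_v < d^{\delta/2}$ for every $v$ with high probability.

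\textbf{Step 3 (Combining).} On the intersection of the two high-probability events, $\Delta \ge d/2$ gives $\Delta^{\delta} \ge d^{\delta}/2^{\delta}$, while $Y_v \le d^{\delta/2}$. Since $d \ge d_0\ln n$ exceeds $4$ for $n$ large, $2^{\delta} \le d^{\delta/2}$, and hence $d^{\delta/2} \le d^{\delta}/2^{\delta} \le \Delta^{\delta}$, as required.

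\textbf{The main obstacle} is calibrating the threshold used in Step~2. It must be small enough that $(2e\,d^{\,3-\delta/2}/n)^{d^{\delta/2}}$ decays faster than $1/n$ uniformly across the allowed range of $d$, yet large enough that the resulting bound $d^{\delta/2}$ still fits under $\Delta^{\delta}$. The hypothesis $d \le (n\ln n)^{1/3}$ is precisely what ensures $d^{\,3-\delta/2}/n = n^{-\delta/6+o(1)}$, which is what drives the Chernoff estimate; any exponent strictly between $0$ and $\delta$ would work in place of $\delta/2$, and $\delta/2$ is chosen only to keep the comparison in Step~3 transparent.
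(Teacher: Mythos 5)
Your proposal is correct, and it follows the same overall strategy as the paper: Chernoff bounds plus union bounds over vertices, once for the degrees and once for the per-vertex triangle counts, with the hypotheses $d \ge d_0\ln n$ and $d \le (n\ln n)^{1/3}$ entering exactly where they do in the paper. The implementation differs in two ways worth noting. First, the paper bounds $T_v$ by asserting $T_v \sim \mathrm{Binom}\bigl(\binom{n-1}{2},(d/n)^3\bigr)$ and applying a multiplicative Chernoff bound against the threshold $\Delta^{\delta}$ (after conditioning on $\Delta \in (1\pm\tfrac{1}{10})d$); strictly speaking the triangle indicators at $v$ share the edges incident to $v$ and are therefore not independent, so that distributional claim needs justification. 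Your conditioning on $N(v)$ makes the triangle count an honest $\mathrm{Bin}\bigl(\binom{|N(v)|}{2}, d/n\bigr)$, which sidesteps this dependence cleanly and is arguably the more careful route. Second, you compare against the deterministic intermediate threshold $d^{\delta/2}$ and then convert to $\Delta^{\delta}$ via the lower bound $\Delta \ge d/2$, whereas the paper works with the random threshold $\Delta^{\delta}$ directly for each admissible value of $\Delta$; both are fine, and your Step 3 conversion ($2^{\delta}\le d^{\delta/2}$ once $d\ge 4$) is checked correctly. The calibration in your Step 2 is also sound: uniformly over the allowed range of $d$, the base $2e\,d^{3-\delta/2}/n$ is at most $n^{-\delta/12}$ for large $n$, and the exponent $d^{\delta/2}\ge (d_0\ln n)^{\delta/2}$ grows, so the per-vertex failure probability is eventually below $n^{-2}$ and the union bound goes through.
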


\begin{proof}[Proof of Proposition~\ref{Random_Graphs}]
According to~\cite{AchlioNaor}, for a graph $G \in G(n, d/n)$ we know that with high probability
\begin{align}
\chi(G) = \frac{1}{2} \frac{d}{\ln d} ( 1 + o(1) ) \label{chrom_random} .
\end{align}
 Fix $\zeta  \in (0,1) $ and  $\delta \in (0,  \frac{ 2 \zeta}{1 + 2 \zeta } )$. According to Lemma~\ref{random_lemma}, there exists a constant $d_0$ such that for any $d  \in \left(  d_{0} \ln n,  ( n \ln n )^{\frac{1}{3} }  \right) $ each vertex of   $G = G(n, d/n) $ is contained in at most $ \Delta^{ \delta }$ triangles with probability that tends to $1$ as $n$ goes to infinity. Thus, we can apply  Theorem~\ref{improvement_mike} with parameter $\zeta > 0$ since 
\begin{align*}
f = \frac{ \Delta^2}{ \Delta^{\delta}  } > \Delta^{ 2 - \frac{ 2 \zeta}{ 1 + 2 \zeta} }  (\ln \Delta)^2 , 
\end{align*} 
for large enough $\Delta$. This yields an upper bound $q$ on the chromatic number of $G$ that is at most
\begin{eqnarray}
q  & = & (1 + \zeta) \frac{  \Delta}{ \ln \sqrt{f} } \nonumber \\
    & \le & (1 + \zeta) \frac{ \Delta}{  \frac{ 1 + \zeta}{1 + 2 \zeta } \ln  \Delta + \ln \ln \Delta }  \nonumber \\
    & \le & (1 + 2\zeta)  \frac{  \Delta }{ \ln \Delta  } \label{random_upper_bound} .
\end{eqnarray}
Moreover, since the expected degree of every vertex of $G$ is $d$ and its distribution is binomial with parameter $\frac{d}{n}$, standard Chernoff bounds  and the union bound imply that for any $\eta \in (0,1)$, $ \Delta \le ( 1 + \eta ) d$ with high probability, for large enough $d_0$. 

Combining the latter fact with~\eqref{chrom_random} and~\eqref{random_upper_bound}, we deduce that we can find an arbitrarily small constant  $\eta' \in (0,1) $ such that
 \begin{align*}
 q \le (2+ \eta') \chi(G)
 \end{align*}
 by choosing $\zeta$ and $\eta$ sufficiently small.  Picking $\eta' = \frac{4 \epsilon  }{1 - 2 \epsilon } $ we obtain $ \chi(G)  \ge \frac{ q}{ 2 + \eta'} \ge q ( \frac{1}{2} - \epsilon) $, concluding the proof of Proposition~\ref{Random_Graphs}.
\end{proof}

Finally, we go back and prove Lemma~\ref{random_lemma}.

\begin{proof}[Proof of Lemma~\ref{random_lemma}]
Let $\Delta_v$ be the random variable that equals the degree of vertex $v$ of $G$. Observe that $\Delta_v \sim \mathrm{Binom}(n-1, \frac{d}{n} )$ and, therefore, using a standard Chernoff bound and the fact that $ d \ge d_0 \log n $ we get that
\begin{align*}
\Pr\left[ \Delta_v \notin  ( 1 \pm  \frac{1}{10})d \right]  \le \frac{1}{n^2} ,
\end{align*}
for large enough $d_0$. Thus, by a union bound we get that $\Pr[ \Delta \in  ( 1\pm \frac{1}{10}) d  ] \le \frac{1}{n}$.
Let $T_v$ be the number of triangles that contain vertex $v$ and  $B$ be the event that there exists a vertex $u$ such that  $\Delta_u \notin ( 1\pm \frac{1}{10}) d $. Then,
\begin{align}\label{bbb}
\Pr[T_v > \Delta^{\delta} ] \le  \Pr[ T_v > \Delta^{ \delta} \mid \overline{B} ] + \Pr [B ]  \le   \Pr[ T_v > \Delta^{ \delta} \mid \overline{B} ]  + \frac{1}{n}.
\end{align}
To upper bound the first term in the righthand side of~\eqref{bbb}, we fix an arbitrary  $\Delta_0 \in ( 1\pm \frac{1}{10}) d $, and upper bound $\Pr[ T_v > \Delta^{ \delta} \mid \Delta_v = \Delta_0 ]$. Towards that end, we observe that
\begin{align*}
\mathbb{E}[ T_v  \mid \Delta_v = \Delta_0 ]  = { \Delta_v \choose 2 } \frac{d}{n} \le \frac{( 1 + \frac{1}{10})^2  d^3}{2 n}.
\end{align*}

Thus, for any fixed values  of $\Delta, \Delta_0 \in (1 \pm  \frac{1}{10}) d$,  setting $1 + \beta =   \frac{ \Delta^{ \delta} }{( 1 + \frac{1}{10})^2 d^3 / 2n  }   $ and using a standard Chernoff bound we obtain:
\begin{align*}
\Pr [ T_v > \Delta^{ \delta} \mid \Delta_v =  \Delta_0  ] \le \mathrm{e}^{-  \frac{  \beta^2 1.1^2  d^3/2n} { 3}  }   \le \frac{1}{n^2}
\end{align*}
since
\begin{eqnarray*}
\beta &\ge & \frac{ \left((1- \frac{1}{10}) d \right)^{\delta}   - 1.1^2 d^3/2n }{ 1.1^2 d^3 /2n  }  > 0  , \\
\frac{1.1^2}{3} \beta^2 \frac{ d^3} {2n} &\ge & \frac{1.1^2}{3}  \frac{   \left( \left((1- \frac{1}{10}) d \right)^{\delta} - d^3/2n  \right)^2 }{ d^3/2n}   \ge 2 \ln n ,
\end{eqnarray*}
whenever $d \in [ d_0 \ln n, (n \ln n )^{ \frac{1}{3} } ]$ and for large enough $n$ and $d_0$. Since $\Delta_0$ was chosen arbitrarily, we obtain that $\Pr[ T_v > \Delta^{ \delta} \mid \overline{B} ] \le \frac{1}{n^2}$. Taking a union bound
over~$v $ concludes the proof of the lemma.
\end{proof}

\end{document}